\documentclass[11pt]{article}

\usepackage[top=30truemm,bottom=30truemm,left=25truemm,right=25truemm]{geometry}

\usepackage[utf8]{inputenc} 
\usepackage[T1]{fontenc}    
\usepackage{textcomp}
\usepackage{bm}
\usepackage{bbm}  

\usepackage{amsmath}
\usepackage{amssymb}
\usepackage{amsthm}
\usepackage{amsfonts}
\usepackage{mathrsfs}
\usepackage{nicefrac}
\usepackage{mathtools}

\usepackage{booktabs}
\usepackage{multirow}

\usepackage[authoryear,round]{natbib}

\usepackage[svgnames]{xcolor}
\usepackage{colortbl}  
\usepackage[hypertexnames=false]{hyperref}       
\hypersetup{
  colorlinks=true,
  linkcolor=blue,
  citecolor=blue,
}

\usepackage{stmaryrd}
\usepackage{tgtermes}
\usepackage{comment}

\usepackage[ruled,vlined,linesnumbered]{algorithm2e}
\SetArgSty{textnormal}

\usepackage{microtype}
\usepackage[labelfont=bf]{caption}
\usepackage{wrapfig}
\usepackage{tikz}

\usepackage{url}
\usepackage{enumitem}
\usepackage{thm-restate}

\usepackage[normalem]{ulem}
\usepackage{tcolorbox}
\usepackage{mdframed} 

\usepackage{threeparttable}

\usepackage[nameinlink]{cleveref}
\crefname{equation}{}{}
\Crefname{algocf}{Algorithm}{Algorithms}

\crefname{footnote}{footnote}{footnotes}
\Crefname{footnote}{Footnote}{Footnotes}

\crefalias{AlgoLine}{line} 
\crefalias{algocfline}{line}

\theoremstyle{plain}
\newtheorem{theorem}{Theorem}
\newtheorem{lemma}{Lemma}
\newtheorem{proposition}{Proposition}

\theoremstyle{definition}
\newtheorem{definition}{Definition}

\newtheoremstyle{myremark}
  {3pt}{3pt}      
  {\normalfont}   
  {}              
  {\bfseries}     
  {.}             
  {0.5em}         
  {}              

\theoremstyle{myremark}
\newtheorem{remark}{Remark}
\newcommand{\rqed}{\unskip\nobreak\hfill$\square$} 

\renewcommand{\hat}{\widehat}
\renewcommand{\tilde}{\widetilde}
\renewcommand{\epsilon}{\varepsilon}

\def\E{\mathbb{E}}

\def\R{\mathbb{R}}

\def\N{\mathbb{N}}

\def\calA{\mathcal{A}}

\def\calI{\mathcal{I}}
\def\calJ{\mathcal{J}}
\def\calK{\mathcal{K}}

\def\calM{\mathcal{M}}

\def\calS{\mathcal{S}}

\DeclareMathOperator*{\argmax}{arg\,max}
\DeclareMathOperator*{\argmin}{arg\,min}

\DeclareMathOperator*{\diag}{diag}

\DeclareMathOperator{\interior}{int}

\DeclarePairedDelimiter{\abs}{\lvert}{\rvert}
\DeclarePairedDelimiter{\brk}{[}{]}

\DeclarePairedDelimiter{\set}{\{}{\}}
\DeclarePairedDelimiter{\prn}{(}{)}
\DeclarePairedDelimiter{\nrm}{\|}{\|}
\DeclarePairedDelimiter{\inpr}{\langle}{\rangle}

\DeclarePairedDelimiter{\floor}{\lfloor}{\rfloor}

\newcommand{\zeros}{\mathbf{0}}
\newcommand{\ones}{\mathbf{1}}

\newcommand{\ie}{\textit{i.e.,}}
\newcommand{\eg}{\textit{e.g.,}}

\newcommand{\Reg}{\mathsf{Reg}}

\newcommand{\SwapReg}{\mathsf{SwapReg}}

\renewcommand{\t}{{t}}
\newcommand{\T}{{T}}
\newcommand{\Tp}{{T+1}}
\newcommand{\tp}{{t+1}}
\newcommand{\tm}{{t-1}}
\newcommand{\s}{{s}}

\newcommand{\sm}{{s-1}}

\newcommand{\Amax}{A_{\max}}

\newcommand{\Adiff}{A_{\mathrm{diff}}}
\newcommand{\Abardiff}{\bar{A}_{\mathrm{diff}}}

\newcommand{\Umax}{U_{\max}} 

\newcommand{\util}{\tilde{u}}
\newcommand{\ybm}{\bm{y}}
\newcommand{\hbm}{\bm{h}}
\newcommand{\zbm}{\bm{z}}

\newcommand{\xrm}{x}
\newcommand{\yrm}{y}

\newcommand{\mx}{m_\xrm}
\newcommand{\my}{m_\yrm}

\newcommand{\Qmat}{Q}

\newcommand{\ltil}{\widetilde{\ell}}
\newcommand{\gtil}{\widetilde{g}}

\renewcommand{\hbar}{\bar{h}}
\newcommand{\ubar}{\bar{u}}

\newcommand{\xhat}{\widehat{x}}
\newcommand{\yhat}{\widehat{y}}

\newcommand{\Cx}{C_\xrm}
\newcommand{\Cy}{C_\yrm}
\newcommand{\Chatx}{\hat{C}_\xrm}
\newcommand{\Chaty}{\hat{C}_\yrm}

\newcommand{\Ctilx}{\tilde{C}_\xrm}
\newcommand{\Ctily}{\tilde{C}_\yrm}

\newcommand{\mmax}{m}

\newcommand{\nsqrt}[1]{\sqrt{\smash[b]{#1}}\vphantom{#1}} 
\usepackage{xparse}
\NewDocumentCommand{\tsqrt}{O{1.2ex} m}{%
  \sqrt{\smash[b]{#2}\rule{0pt}{#1}}%
}

\newcommand{\nn}{\nonumber\\}
\newcommand{\n}{\nonumber}
\newcommand{\per}{\,.}
\newcommand{\com}{\,,}

\newcommand{\sumT}{\sum_{t=1}^T}

\title{Scale-Invariant Fast Convergence in Games}

\author{
  Taira Tsuchiya\footnote{
    The University of Tokyo and RIKEN; 
    \texttt{tsuchiya@mist.i.u-tokyo.ac.jp}.
  }
  \and
  Haipeng Luo\footnote{
    University of Southern California; \texttt{haipengl@usc.edu}.
  }
  \and
  Shinji Ito\footnote{
    The University of Tokyo and RIKEN; \texttt{shinji@mist.i.u-tokyo.ac.jp}.
  }
}

\begin{document}
\maketitle

\begin{abstract}
Scale-invariance in games has recently emerged as a widely valued desirable property. Yet, almost all fast convergence guarantees in learning in games require prior knowledge of the utility scale. To address this, we develop learning dynamics that achieve fast convergence while being both \emph{scale-free}, requiring no prior information about utilities, and \emph{scale-invariant}, remaining unchanged under positive rescaling of utilities. For two-player zero-sum games, we obtain scale-free and scale-invariant dynamics with external regret bounded by $\tilde{O}(A_{\mathrm{diff}})$, where $A_{\mathrm{diff}}$ is the payoff range, which implies an $\tilde{O}(A_{\mathrm{diff}} / T)$ convergence rate to Nash equilibrium after $T$ rounds. For multiplayer general-sum games with $n$ players and $m$ actions, we obtain scale-free and scale-invariant dynamics with swap regret bounded by $O(U_{\mathrm{max}} \log T)$, where $U_{\mathrm{max}}$ is the range of the utilities, ignoring the dependence on the number of players and actions. This yields an $O(U_{\mathrm{max}} \log T / T)$ convergence rate to correlated equilibrium. Our learning dynamics are based on optimistic follow-the-regularized-leader with an adaptive learning rate that incorporates the squared path length of the opponents' gradient vectors, together with a new stopping-time analysis that exploits negative terms in regret bounds without scale-dependent tuning. For general-sum games, scale-free learning is enabled also by a technique called doubling clipping, which clips observed gradients based on past observations.
\end{abstract}

\newpage

\section{Introduction}\label{sec:introduction}
Learning in games studies repeated strategic interactions in which each player adapts their strategy online to minimize regret~\citep{freund99adaptive,hart00simple,cesabianchi06prediction}.
A key fact is that no-regret learning enables equilibrium computation: for example in two-player zero-sum games, if each player achieves external regret $\Reg^T$, then the average play after $T$ rounds is an $O(\Reg^T/T)$-approximate Nash equilibrium~\citep{freund99adaptive}.
This paradigm underlies practical successes ranging from superhuman game AI~\citep{bowling15heads,moravvcik17deepstack,perolat22mastering,meta22human} to recent methods for aligning LLMs~\citep{munos24nash,swamy24minimaxlist}.

A standard assumption in learning in games is that the underlying utility functions are bounded and that the exact bound is known to the players.
In two-player zero-sum games, this corresponds to knowing a scale parameter such as maximum absolute magnitude $\Amax = \max_{i,j} \abs{A_{i,j}}$ for a payoff matrix $A$.
In multiplayer general-sum games, this corresponds to knowing $\Umax$ such that all utility values lie in $[-\Umax,\Umax]$.
However, in many applications, it is difficult to know the scale of utilities in advance, making it desirable to design learning dynamics that do not require such instance-dependent tuning.

Motivated by the connection between no-regret online learning and equilibrium computation, a natural way to remove the assumption that the scale is known in advance is to adopt \emph{scale-free} online learning~\citep{cesabianchi07improved}.
For example, in two-player zero-sum games, using scale-free online learning, we can obtain an external regret bound of $O(\Amax \sqrt{T})$ without knowing $\Amax$, which implies an $O(\Amax/\sqrt{T})$ convergence rate to Nash equilibrium.

However, this result is not satisfactory: when the scale is known, one can obtain much faster convergence rates~\citep{daskalakis11near}.
One representative approach to obtain such fast rates is to use an optimistic online learning algorithm, such as optimistic follow-the-regularized-leader (OFTRL) or optimistic online mirror descent~\citep{chiang13beating,rakhlin13optimization}.
In two-player zero-sum games, if $\Amax$ is known, the external regret can be bounded by $\tilde{O}(\Amax)$~\citep{rakhlin13optimization,syrgkanis15fast}, which yields a $\tilde{O}(\Amax / T)$ convergence rate to Nash equilibrium.
In multiplayer general-sum games, if $\Umax$ is known, the swap regret is bounded by $O(\Umax \log T)$~\citep{anagnostides22uncoupled}, which yields an $O(\Umax \log T / T)$ convergence rate to correlated equilibrium, ignoring the dependence on the number of players and actions.

These observations lead to a natural question: \emph{can we construct learning dynamics that are agnostic to the scale of utilities while still achieving fast convergence?}
In particular, is it possible to construct scale-free learning dynamics whose convergence rates match, up to multiplicative factors, the best-known fast rates achieved when the scale is known in advance?

Scale-freeness and \emph{scale-invariance} are closely related.
Informally, learning dynamics are scale-invariant if rescaling the utilities by any positive constant does not change the sequence of strategies (see \Cref{subsec:scale} for formal definitions).
Scale-invariance is widely regarded as a practically useful property in game learning; for instance, regret matching, one of the most powerful methods for solving games, is scale-invariant~\citep{hart00simple}, and recent work argues that such invariance properties may be crucial for strong empirical performance when solving zero-sum games~\citep{chakrabarti24extensive,zhang25scale}.
Accordingly, our goal is to construct learning dynamics that achieve fast convergence while being both scale-free and scale-invariant.

\begin{table*}[t]
      \caption{Comparison of individual (external) regret upper bounds of the $x$-player in two-player zero-sum games with a payoff matrix $A \in [-\Amax, \Amax]^{\mx\times \my}$ after $T$ rounds, where $m = \max\set{\mx,\my}$.
      The parameter $\delta > 0$ is an instance-dependent constant, which can be arbitrarily small.
      The value $\Adiff = \max_{i,j} A_{i,j} - \min_{i,j} A_{i,j} \leq 2 \Amax$ is the range of the entries of $A$, which can be much smaller than $\Amax$.
      The bound of ``Scale-free OLO'' is achieved by simply using scale-free online linear optimization (OLO) algorithms.
      ``$\sqrt{\text{Corrupt}}$?'' means whether the effect of opponent deviations and corrupted observed utilities can be kept to a square-root dependence (see \Cref{subsec:corrupted_game_two_player} for details).
    }
    \label{table:regret}
    \centering
    \small
    \begin{tabular}{llll}
        \toprule
      \textbf{Reference} & \textbf{Regret bound} & \textbf{$\sqrt{\text{Corrupt}}$?} & \textbf{Scale-free \& Scale-invariant?}
      \\
      \midrule
      \citet{syrgkanis15fast}  & $\Amax \log m$ & No & No
      \\  
      \citet{tsuchiya25corrupted} & $\Amax \log m$ & Yes & No
      \\  
      \midrule
      Scale-free OLO & $\Amax \sqrt{T \log m} $ & Yes & \textbf{Yes} 
      \\
      \citet{tsuchiya25corrupted} & ${\Amax \log m}/{\delta}$ & Yes & \textbf{Yes} \hfill (see also \Cref{remark:zhang25scale})
      \\  
      \citet{zhang25scale} & ${\Amax m^{3/2}}/{\delta} $ & Yes & \textbf{Yes} \hfill (see also \Cref{remark:zhang25scale})
      \\  
      \textbf{This work (\Cref{thm:main})} & $\Adiff \log m$ & Yes & \textbf{Yes}
      \\  
      \bottomrule
    \end{tabular}
  \end{table*}

\subsection{Contributions of this paper}
We affirmatively answer the above question by constructing learning dynamics with scale-invariant and scale-free fast convergence in two-player zero-sum games and multiplayer general-sum games.

\paragraph{Two-player zero-sum games}
We start with two-player zero-sum games and provide learning dynamics with the following regret guarantees:
\begin{theorem}[Informal version of \Cref{thm:main}]\label{thm:indiv_reg_twoplayer_informal}
  In two-player zero-sum games with a payoff matrix $A$,
  there exists scale-free and scale-invariant learning dynamics such that
  the external regrets of the $x$- and $y$-players are bounded by
  $
  \Adiff \log m
  $,
  where $m$ is the maximum number of actions among the players and $\Adiff = \max_{i,j} A_{i,j} - \min_{i,j} A_{i,j} \leq 2 \Amax$ is the range of the entries of $A$.
  Consequently, the average play after $T$ rounds under this dynamic is an $O(\Adiff \log m / T)$-approximate Nash equilibrium.
  This learning dynamic is robust against the opponent deviations and adversarial corruption of observed utilities (see \Cref{subsec:corrupted_game_two_player} for details).
\end{theorem}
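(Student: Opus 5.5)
Both players run optimistic follow-the-regularized-leader with the negative-entropy regularizer. The key choice is an adaptive learning rate driven only by observed quantities. For the $x$-player, the round-$t$ learning rate is
\[
\eta_t = \sqrt{\log \mx}\Big/\sqrt{\textstyle\sum_{s<t}\nrm{g_s - g_{s-1}}_\infty^2}\com
\]
where $g_s = A y_s$ is the observed gradient. The $y$-player is symmetric. A scale-invariance check is immediate: multiplying $A$ by $c>0$ multiplies every cumulative loss by $c$ and every $\eta_t$ by $1/c$, so $\eta_t \sum_s g_s$ is unchanged and the iterates are identical. The rule uses no knowledge of $A$, so it is also scale-free. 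To get $\Adiff$ instead of $\Amax$, I will exploit shift-invariance on the simplex. Subtracting a constant vector from $g_s$ changes neither the entropic FTRL iterates nor the regret. Hence every gradient difference can be measured in the seminorm $\max_i - \min_i$, and $\nrm{g_s-g_{s-1}}$ is bounded by $\Adiff\nrm{y_s-y_{s-1}}_1$ after centering.

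Next I will establish the standard OFTRL regret bound with an adaptive rate (RVU-type). For the $x$-player,
\[
\Reg_x^T \lesssim \frac{\log \mx}{\eta_{T+1}} + \sum_t \eta_t \nrm{g_t-g_{t-1}}^2 - \sum_t \frac{1}{8\eta_t}\nrm{x_{t}-x_{t-1}}_1^2\per
\]
The first two terms are $O\big(\sqrt{\log \mx\, P_x}\big)$, where $P_x=\sum_t \nrm{g_t-g_{t-1}}^2 \le \Adiff^2\sum_t\nrm{y_t-y_{t-1}}_1^2$. The analogous bound holds for the $y$-player. The game is zero-sum, so $\Reg_x^T+\Reg_y^T \ge 0$. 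Summing the two bounds then forces the negative stability terms to control the path lengths.

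The main obstacle is that these negative terms carry the factor $1/\eta_t$, which is adaptive and not tuned to $\Adiff$. Known-scale analyses set $\eta \approx 1/\Amax$ so that the negative terms dominate the positive ones; here the negative terms are small while $\eta_t$ is large. I plan a stopping-time argument. Let $\tau$ be the last round at which $\eta_t \ge c/\Adiff$ for a small constant $c$. This is a fictitious threshold used only in the analysis. Before $\tau$, the learning rate is large and $1/\eta_t \le \Adiff/c$ is small. The cumulative path length is still bounded, because $1/\eta_\tau = \sqrt{P_\tau/\log m}\le \Adiff/c$ gives $P_\tau = O(\Adiff^2\log m)$. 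After $\tau$, $\eta_t\le c/\Adiff$, so $\frac{1}{8\eta_t}\nrm{y_t-y_{t-1}}^2 \ge \frac{\Adiff}{8c}\nrm{y_t-y_{t-1}}^2$. This dominates the $x$-player's term $\eta_t\Adiff^2\nrm{y_t-y_{t-1}}^2 \le c\Adiff \nrm{y_t-y_{t-1}}^2$, and symmetrically for $y$. Summing over both players and using the nonnegativity of the total regret, I will bound the post-$\tau$ path lengths by $O(\log m)$. This requires care with the $\sqrt{\cdot}$ term; I will use $\sqrt{a+b}\le\sqrt a+\sqrt b$ and AM-GM to absorb it into the negative terms. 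It follows that $P_x,P_y = O(\Adiff^2\log m)$.

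Finally, substituting back gives $\Reg_x^T,\Reg_y^T = O(\sqrt{\log m\cdot \Adiff^2\log m}) = O(\Adiff\log m)$. The Nash-gap bound follows from the standard fact that the duality gap of the average play is at most $(\Reg_x^T+\Reg_y^T)/T$. For robustness, the same argument with corrupted gradients adds the corruption to $P_x$ inside the square root, which yields the square-root dependence.
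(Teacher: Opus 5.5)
There is a genuine gap, and it lies in your learning rate rather than in the bookkeeping. With your self-only rates $\eta_x^t=\sqrt{\log\mx/G_t}$ and $\eta_y^t=\sqrt{\log\my/L_t}$, where $G_t=\sum_{s<t}\nrm{g_s-g_{s-1}}_\infty^2$ and $L_t=\sum_{s<t}\nrm{\ell_s-\ell_{s-1}}_\infty^2$, the summed RVU bounds read, up to constants,
\[
0\le\Reg_x^T+\Reg_y^T\lesssim\sqrt{\log m}\,\bigl(\sqrt{G_{T+1}}+\sqrt{L_{T+1}}\bigr)-\frac{1}{\sqrt{\log m}}\sum_t\Bigl(\sqrt{G_t}\,\nrm{x_{t+1}-x_t}_1^2+\sqrt{L_t}\,\nrm{y_{t+1}-y_t}_1^2\Bigr)\per
\]
$G$ is generated by the $y$-player's movement and $L$ by the $x$-player's. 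So each player's negative term is weighted by the path length that the \emph{opponent} creates.

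Your stopping-time paragraph hides this by using one $\eta_t$ in two roles:
\begin{itemize}
\item $1/\eta_t$ as the weight of $\nrm{y_t-y_{t-1}}^2$, which is really $1/\eta_y^t$ and is small unless $L$ is large;
\item $\eta_t$ as the factor in the $x$-player's term $\eta_t\Adiff^2\nrm{y_t-y_{t-1}}^2$, which is really $\eta_x^t$ and is governed by $G$.
\end{itemize}
Meanwhile, your pre-$\tau$ estimate $P_\tau=O(\Adiff^2\log m)$ has to be a statement about $G_\tau$. No single round makes $G_\tau$ small and also makes $\eta_y$ small afterwards.

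Concretely, suppose $x$ barely moves, so $L_t$ stays tiny, while $y$ moves. Then the only control on $G$ is the $y$-term with weight $\sqrt{L_t}$. The display then yields only $\sqrt{G_{T+1}}\lesssim\Adiff^2\log m/\sqrt{L}$, which blows up as $L\to 0$. That is exactly the $1/\delta$-type dependence the theorem is meant to remove (cf.\ \Cref{remark:zhang25scale}).

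The missing idea is to put the opponent's observed path length into each player's rate. The paper uses $\eta_x^t=\sqrt{M_x/(P_\infty^t(g)+P_\infty^t(\ell))}$ and the same denominator for $\eta_y^t$; this is still scale-free and scale-invariant. Now $1/\eta_x^t\ge\sqrt{P_\infty^t(\ell)/M_x}$ grows with the $x$-player's \emph{own} movement, so the negative term $\frac{1}{4\eta_x^t}\nrm{x_{t+1}-x_t}_1^2$ controls itself.

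One then takes two separate stopping times. Let $\tau_x$ be the first round with $P_\infty^t(\ell)>2^{9/2}\Adiff^2M+\Adiff^2$, and $\tau_y$ the same for $P_\infty^t(g)$. By \Cref{lem:Pinfty_tradeoff}, with \Cref{lem:grad_diff} bounding the overshoot by $\Adiff^2$, you get $P_\infty^{\tau_x}(\ell)\lesssim\Adiff^2M$ and $\eta_x^{\tau_x-1}\le 1/(4.75\Adiff)$ at the same time. The inequality $b\sqrt z-az\le b^2/(4a)$ and $\Reg_x^T+\Reg_y^T\ge0$ then give $Q_1^{\tau_x}(x),Q_1^{\tau_y}(y)\le 5M$. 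The case where a threshold is never crossed is immediate.

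Tellingly, the paper's communication-free variant (\Cref{app:nocom}) must substitute the player's own strategy path length $P_1^t(x)$. That breaks scale-invariance and only yields $(1+\sqrt{\Adiff})\sqrt{\Adiff}\log m$.

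The rest of your plan is fine once the rate is fixed: the $\Adiff$ bound (which \Cref{lem:grad_diff} gives directly, with no centering), the Nash-gap conversion, and the corruption extension.
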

\Cref{thm:indiv_reg_twoplayer_informal} positively resolves the question raised in~\citet{zhang25scale} of whether one can retain the scale-free fast convergence while removing the dependence on the $1/\delta$ factor in their regret bound, where $\delta>0$ can be arbitrarily small (see the discussion after \Cref{thm:main} for detailed discussion).
Moreover, the above regret bounds and convergence rates depend on the payoff range $\Adiff$, which can be much smaller than the maximum absolute magnitude $\Amax$.
A comparison with existing bounds is provided in \Cref{table:regret}.

The learning dynamic that achieves the above convergence rate is based on optimistic Hedge with an adaptive learning rate.
Many existing works on fast rates in learning in games use a constant learning rate that does not depend on the time horizon or past observations (\eg~\citealt{syrgkanis15fast,anagnostides22uncoupled}).
However, to operate adaptively under an unknown scale, an adaptive learning rate that depends on past observations is essential.

In the context of learning in games, adaptive learning rates were introduced by \citet{rakhlin13optimization}, and an $O(1/T)$ convergence rate that is completely independent of $T$ was obtained by \citet{tsuchiya25corrupted} using OFTRL.
Nevertheless, their learning-rate choice cannot be directly used for scale-free learning:
to obtain fast convergence by exploiting the negative term in the regret upper bound of OFTRL, they tune the learning rate so that it is upper bounded by a constant, which requires prior knowledge of the scale $\Amax$.

To overcome this issue, we introduce two refinements in the design and analysis of optimistic Hedge.
The first is to incorporate the squared path length of the gradients observed by the opponent into the learning rate.
We then develop a new analysis that leverages the negative term: by defining an appropriate stopping time that tracks the growth of the squared path length of gradients, we can exploit the negative term without imposing a constant upper bound on the learning rate, allowing us to obtain a fast convergence rate without knowing the scale.
Note that, to obtain the $\Adiff$-dependent bound, it is important to use an AdaHedge-type analysis~\citep{orabona18scale}, rather than other scale-free online learning approaches such as \citet{cutkosky19artificial}.
It is also worth noting that we provide a learning dynamic with fast convergence with no communication of gradients in \Cref{app:nocom}.

\begin{table*}[t]
    \caption{Comparison of individual swap regret upper bounds of player $i$ in multiplayer general-sum games with $n$ players and $m$ actions after $T$ rounds.
      The value $\Umax$ is the scale of the underlying utility function.
      The upper bound of ``Scale-free OLO'' is achieved by using an optimistic AdaHedge-type scale-free online linear optimization (OLO) algorithm (see \Cref{prop:OptHedge_swap} in \Cref{subsec:OptHedge_swap} for details).
  }
  \label{table:regret_generalsum}
  \centering
  \small
  \begin{tabular}{llll}
    \toprule
    \textbf{Reference} & \textbf{Regret bound} & \textbf{$\sqrt{\text{Corrupt}}$?} & \textbf{Scale-free \& Scale-invariant?}
    \\
    \midrule
    \citet{anagnostides22uncoupled} &  $\Umax n m^{5/2} \log T$ & No & No
    \\  
    \citet{tsuchiya25corrupted} & $\Umax n m^{5/2} \log T$ & Yes & No
    \\
    \midrule
    Scale-free OLO (\Cref{prop:OptHedge_swap}) & $\Umax \sqrt{m T \log m} $ & Yes & \textbf{Yes} 
    \\
    \textbf{This work (\Cref{thm:indiv_swapreg})} & $\Umax n^{3/2} m^{5/2} \log T$ & No & \textbf{Yes}
    \\
    \bottomrule
  \end{tabular}
\end{table*}

\paragraph{Multiplayer general-sum games}
Building on the algorithm and analysis for two-player zero-sum games, we establish learning dynamics for multiplayer general-sum games. 
In multiplayer general-sum games, to obtain a \emph{correlated equilibrium}~\citep{aumann74subjectivity,foster97calibrated,hart00simple}, we focus on minimizing the \emph{swap regret} to get the following guarantees:
\begin{theorem}[Informal version of \Cref{thm:indiv_swapreg}]\label{thm:indiv_swapreg_informal}
  In multiplayer general-sum games with $n$-players and $m$-actions,
  there exists scale-free ($\Umax$-agnostic) and scale-invariant learning dynamics such that the swap regret of each player $i$ is bounded by
  $
    \Umax n^{3/2} m^{5/2} \log T ,
  $
  Consequently,
  the time-averaged history of joint play of this dynamic after $T$ rounds is an $O(\Umax n^{3/2} m^{5/2} \log T / T)$-approximate correlated equilibrium.
\end{theorem}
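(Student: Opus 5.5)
The plan is to follow the Blum--Mansour reduction used by \citet{anagnostides22uncoupled}. Each player $i$ runs $m$ external-regret learners, one for each action $a$. Learner $a$ outputs a distribution $q_{i,a}^t$ over actions, and these rows form a stochastic matrix $Q_i^t$. The player then plays the stationary distribution $x_i^t = x_i^t Q_i^t$. Learner $a$ receives the scaled utility vector $x_{i}^t(a)\, u_i^t$, where $u_i^t$ is the expected utility vector against the other players' mixed strategies. The swap regret of player $i$ is then at most the sum of the $m$ external regrets.

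Each learner is optimistic Hedge with an AdaHedge-type learning rate. Its inverse learning rate grows with the observed squared prediction errors $\sum_s \nrm{g^s - g^{s-1}}_\infty^2$, and, as in the zero-sum construction, it also includes the path length of the gradients seen by the other players. To be scale-free in the general-sum setting, the observed gradients are passed through doubling clipping: we maintain a running scale estimate $B^t = \max_{s<t}\nrm{u_i^s}_\infty$, clip the current gradient to $[-B^t, B^t]$, and the estimate at most doubles at each update. Scale invariance holds because every quantity in the algorithm (learning rates, the clipped gradients, and $B^t$) is multiplied by $c$ when utilities are rescaled by $c$. The Hedge iterates are functions of the ratio $\eta\, g$, and this ratio is unchanged.

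For the regret analysis, we apply the earlier OFTRL/optimistic Hedge bound to each learner $a$. The regret is at most a stability term $\frac{\log m}{\eta^T}$ plus a variation term $\sum_t \eta^t \nrm{g^t - g^{t-1}}_\infty^2$, minus a negative term $\frac{1}{\eta^t}\nrm{q^t - q^{t-1}}_1^2$. The clipping error adds at most $O(\Umax \log(\Umax/B^1))$, since clipping happens only when the scale estimate grows, and the total clipped amount telescopes. The variation of $x_i^t(a) u_i^t$ is controlled by two pieces. The first is $\nrm{x_i^t - x_i^{t-1}}_1 \Umax$. The second is $\Umax \sum_{j\neq i}\nrm{x_j^t - x_j^{t-1}}_1$, which follows from multilinearity of utilities. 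The stationary-distribution stability lemma of \citet{anagnostides22uncoupled} (Markov chain tree theorem) bounds $\nrm{x_i^t-x_i^{t-1}}_1 \le O(m)\sum_a \nrm{q_{i,a}^t-q_{i,a}^{t-1}}_1$. This converts the movement of all players into the movement of the individual learners.

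The main obstacle is using the negative terms without a learning rate capped by a known constant. We plan to reuse the stopping-time argument from the zero-sum analysis. Let $P^t$ be the total path length, summed over all players and all $m$ learners. Let $\tau$ be the last round at which $\Umax^2 P^\tau$ stays below a threshold of order $\Umax^2 n m^{5}$. Before $\tau$, the adaptive learning rate, which contains $P$, is at most $O(\log m)$ divided by roughly $\Umax\sqrt{n m^5}$. On this stretch the negative terms dominate the variation terms after summing over players and learners; this is where the $n^{3/2} m^{5/2}$ factor arises from Cauchy--Schwarz over the $n$ players. After $\tau$, each learner's regret is bounded by $O\bigl(\Umax\sqrt{\log m \cdot P^T}\bigr)$. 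The total swap-regret sum is then bounded both above and below in terms of $P^T$. The lower bound uses the fact that swap regret is nonnegative only up to $O(\Umax \log T)$, using the slowly growing $\log T$ term arising from the fact that the learning rate can keep shrinking. Solving the resulting self-bounding inequality gives $P^T = O(n m^5 \log T)$ in normalized units. Substituting back gives swap regret $O(\Umax n^{3/2} m^{5/2}\log T)$. The correlated-equilibrium rate follows by dividing by $T$, since the empirical joint play is an $\epsilon$-CE with $\epsilon=\max_i \SwapReg_i^T/T$. The delicate point is controlling the $\log T$ term together with the clipping error, so that the scale-free learning rate does not introduce a $\sqrt{T}$ dependence.
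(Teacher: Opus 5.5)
There is a genuine gap at the step that turns learner movement into strategy movement. The inequality $\nrm{x_i^t-x_i^{t-1}}_1\le O(m)\sum_a\nrm{q_{i,a}^t-q_{i,a}^{t-1}}_1$ is false, because a stationary distribution is not uniformly $\ell_1$-Lipschitz in the rows. Take the two-state chain with rows $(1-\epsilon,\epsilon)$ and $(\epsilon,1-\epsilon)$; its stationary distribution is $(1/2,1/2)$. Changing the first row to $(1-2\epsilon,2\epsilon)$ is an $\ell_1$ change of $2\epsilon$, yet it moves the stationary distribution to $(1/3,2/3)$, an $\ell_1$ change of $1/3$ for every $\epsilon>0$.

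The Markov-chain-tree argument of \citet{anagnostides22uncoupled} gives only $\nrm{x^t-x^{t-1}}_1\le 8\sum_a\mu_a^t$. Here $\mu_a^t=\max_b\abs{1-q_a^t(b)/q_a^{t-1}(b)}$ is a \emph{multiplicative} quantity, and the bound requires $\sum_a\mu_a^t\le 1/2$; in the example above, $\mu=1$. The negative term of optimistic Hedge is $\frac{1}{4\eta}\nrm{q^{t+1}-q^t}_1^2$. Neither it nor its entropic local-norm version $\sum_b(q^{t+1}(b)-q^t(b))^2/q^t(b)$ can dominate $\mu^2$ when some probabilities are small.

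This is exactly why the paper runs OFTRL with the log-barrier $\phi$. Its negative term is $\frac{1}{16\eta}\nrm{q^{t+1}-q^t}_{q^t,\phi}^2=\frac{1}{16\eta}\sum_b(1-q^{t+1}(b)/q^t(b))^2\ge\mu^2/(16\eta)$, which yields \Cref{lem:stationary_stab}. The precondition $\sum_a\mu_a\le 1/2$ is \Cref{lem:diff_local_oftrl}. That lemma needs $\eta^t$ times the current scale to be a small constant. The paper enforces this with the cap $\beta_i/B^t$ and the $\gamma(U^{t-1})^2$ term in the denominator. Your AdaHedge rate has neither, and it becomes huge when utilities are large but slowly varying.

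The $\log T$ in the theorem is also the log-barrier penalty $m_i\log T/\eta^{T+1}$, and Hedge has no such source. Swap regret is exactly nonnegative (take $M=I$), so your explanation of where $\log T$ enters does not correspond to an actual step.

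Three further steps fail as written.
\begin{itemize}
\item Your stopping time is inverted. While the path length is below the threshold, a rate proportional to $1/\sqrt{P}$ is \emph{large}, so you cannot show the negative terms dominate there. In the paper's two-player argument, the small path length bounds the positive term before $\tau$. Only after $\tau$ is $\eta\le 1/(4.75\Adiff)$, small enough for the negative term to win.
\item The multiplayer case in the paper uses a different self-bounding argument. It sums over players the consequences of $\SwapReg_{x_i}^T\ge 0$ in \Cref{lem:swapreg_pre_bound}, obtaining $\sum_{t\in\calI}\sqrt{\mathbb{S}^{t+1}}\sum_i\nrm{u_i^t-u_i^{t-1}}_\infty^2\le\Gamma'+\Lambda'\sqrt{\mathbb{S}^{T+1}}$. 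It then lower bounds the left side by $(\mathbb{S}^{T+1})^{3/2}/3-O(\Umax^2 n\sqrt{\mathbb{S}^{T+1}})$, using the first $t\in\calI$ with $\mathbb{S}^{t+1}\ge\mathbb{S}^{T+1}/2$. Your own final arithmetic, plugging $P^T=O(nm^5\log T)$ into $\Umax\sqrt{\log m\,P^T}$, does not produce the claimed exponents.
\item Your clipping error $O(\Umax\log(\Umax/B^1))$ depends on the initial scale and is not covered by the stated bound. The paper's doubling clipping rescales by the power of two $B^t/B^{t+1}$, with $B$ shared by all players. This makes $\ubar_i^t=u_i^t$ exactly off the jump rounds $\calJ$. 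That exactness is needed to relate the variation term to the opponents' movement and to keep the cap $\beta_i/B^t$ fixed in the stability lemma. On the jump rounds, the contributions form a geometric sum, $\sum_{t\in\calJ}\nrm{\ubar_i^t}_\infty\le 16\Umax$ (\Cref{lem:ubar_sum_jump_rounds}), independent of $\omega$.
\end{itemize}
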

This is the first scale-free and scale-invariant result with fast swap-regret bounds and fast convergence in multiplayer general-sum games.
For a detailed comparison between our upper bound and existing bounds, see the discussion immediately following \Cref{thm:indiv_swapreg}.
A comparison with existing swap regret bounds is summarized in \Cref{table:regret_generalsum}.

We employ the well-known reduction by \citet{blum07external}, which converts swap regret minimization into several instances of external regret minimization, and we instantiate each external regret minimizer with OFTRL using the log-barrier regularizer and an adaptive learning rate.
As in the two-player zero-sum case, the best-known swap regret upper bound by~\citep{anagnostides22uncoupled} can be obtained by an adaptive learning rate \citep{tsuchiya25corrupted}, but their learning-rate tuning cannot be used directly here due to the same reason as in the case of two-player zero-sum games.
Thus, similarly to two-player zero-sums, one may attempt to incorporate the squared path length of the gradients observed by the opponent into the learning rate and to use a stopping-time argument, combined with a known scale-free online learning technique.

Unfortunately, this approach does not work.
In the reduction of \citet{blum07external}, the outputs of the several external regret minimizers must be aggregated, but their associated utility vectors can have different scales across experts and across rounds.
Because of this scale mismatch, it is difficult to smoothly relate the squared path length of the gradients to the square path length of strategies, and consequently we cannot directly apply standard scale-free online learning techniques such as \citet{orabona18scale,cutkosky19artificial}.
To address this issue, we introduce a new clipping procedure inspired by \citet{cutkosky19artificial}, which we refer to as a \emph{doubling clipping}, that clips the observed utilities based on past observations.
We show that this mechanism preserves the stability of the learning dynamics in an unknown-scale setting while ensuring that the resulting deterioration of swap regret is roughly bounded by $\Umax \log T$.
See \Cref{sec:multiplayer} for details.

\subsection{Additional related work}
Optimistic learning has been a central tool for achieving fast convergence to equilibrium.
Starting from~\citet{rakhlin13optimization,syrgkanis15fast}, there has been many subsequent developments, \eg~\citet{foster16learning,chen20hedging,anagnostides22near,anagnostides22uncoupled}.

Scale-free learning was first studied for prediction with expert advice~\citep{cesabianchi07improved}, and has since seen many refinements such as~\citet{luo2015achieving,koolen15second,mhammedi19lipschitz}, to name a few.
In online convex optimization, which includes online linear optimization as a special case, scale-free and parameter-free methods have been developed in a series of works such as~\citet{orabona18scale,cutkosky19artificial,chen21impossible}.

\section{Preliminaries}\label{sec:preliminaries}

\paragraph{Notation and conventions}
For any natural number $n \in \N$, we write $[n] = \set{1, \dots, n}$.
Let $\zeros$ and $\ones$ be the all-zero and all-one vectors, respectively.
For a vector $x$, we write $x(a)$ for its $a$-th entry and $\nrm{x}_p$ for its $\ell_p$-norm, where $p \in [1,\infty]$.
For a matrix $A$, we write $A(k,\cdot)$ for its $k$-th row and $A_{i,j}$ for its $(i,j)$ element.
We use $\Delta(\calK)$ for the set of probability distributions over $\calK$, and $\Delta_d = \set{x \in [0,1]^d : \nrm{x}_1 = 1}$ for the $(d-1)$-dimensional probability simplex.
For brevity, we may use $f \lesssim g$ to denote $f = O(g)$.
For a sequence $z = (z^{1}, \dots, z^{T})$ and $q \in [1, \infty]$, 
we define the (squared) path-length values in terms of the $\ell_q$-norm up to round $t-1$ and after round $t$ as
  $
  P_q^\t(z) = \sum_{s=1}^{t-1} \nrm{z^\s - z^\sm}_q^2
  $
  and
  $
  Q_q^\t(z) = \sum_{s=t}^{T} \nrm{z^\s - z^\sm}_q^2,
  $
respectively, where we omit the dependence on $T$ from $Q_q^\t$ and let $z^{0} = \zeros$ for simplicity.
Note that $\sum_{t=1}^{T} \nrm{z^\t - z^\tm}_q^2 = P_q^\t(z) + Q_q^\t(z)$.
Throughout this paper, we frequently use $i$ to index players, $a$ to index actions, and $d$ or $m$ to denote the dimension of a feasible set.

\subsection{Online linear optimization}\label{subsec:olo}

\paragraph{Setup}
Online linear optimization is a central problem in online learning and has been used to study learning dynamics in games.
In this setting, a player is given a convex set $\calK \subseteq \R^d$ before the interaction begins.
Then, at each round $t = 1, \dots, T$, the player chooses a point $w^\t \in \calK$ using the information observed so far, and then the environment selects a loss vector $h^\t \in \R^d$ without seeing~$w^\t$.
The player then incurs loss $\inpr{w^\t, h^\t}\in \R$ and observes $h^\t$.

\paragraph{External regret and swap regret}
The evaluation metrics relevant to this study are \emph{external regret} and \emph{swap regret}.
The external regret is defined as the difference between the cumulative loss actually incurred by the player and the cumulative loss of the best fixed point in hindsight; that is,
$
  \Reg^T
  =
  \max_{u \in \calK} 
  \Reg^T(u) 
$
for
$
  \Reg^T(w^*)
  =
  \sumT \inpr{w^\t - w^*, h^\t}
  .
$
The swap regret compares the learner's performance to any swapping rule. 
We consider only the case where the feasible set $\calK$ is the probability simplex, and 
let
$\calM_d
= 
\set{ M \in [0,1]^{d \times d} \colon M(k, \cdot) \in \Delta_d \mbox{ for } k \in [d]}
$ 
denote the set of $d \times d$ row-stochastic matrices, which we also call transition probability matrices.
Then, the swap regret is defined as
$
  \SwapReg^T
  =
  \max_{M \in \calM_{d}} 
  \SwapReg^T(M) 
$
for
$
  \SwapReg^T(M)
  =
  \sumT \inpr{w^\t, h^\t - M h^\t}
  .
$

\paragraph{Optimistic follow-the-regularized-leader}
A standard and widely used algorithmic framework for online linear optimization is follow-the-regularized-leader (FTRL).
Here we describe its generalization, optimistic FTRL (OFTRL)~\citep{chiang13beating,rakhlin13optimization}, which is known to be a particularly powerful framework for learning in games.
OFTRL selects a point $w^\t \in \calK$ at round $t \in [T]$ by
$
  w^\t 
  \in 
  \argmin_{w \in \calK} 
  \set{
    \inpr{w, m^\t + \sum_{s=1}^{t-1} h^\s} 
    +
    \psi^\t(w)
  },
$
where $\psi^\t$ is a convex regularizer over $\calK$ and $m^\t \in \R^d$ is an optimistic prediction of the true loss vector $h^\t$.
When the optimistic prediction $m^\t$ is the zero vector for all rounds, OFTRL corresponds to FTRL.

The prediction $m^\t$ needs to be computed using only information available up to round $t-1$, and the regret becomes smaller when $m^\t$ is closer to the true loss vector $h^\t$ (see, \eg~\Cref{lem:adahedge_opt_neg_mainbody}).
This is useful for obtaining fast convergence in learning in games, since the loss vectors are induced by the strategies of the other players, which are updated in a gradient-descent manner~\citep{rakhlin13optimization,syrgkanis15fast}.
For this reason, a common choice is $m^\t = h^\tm$.
OFTRL is known to achieve an RVU (Regret bounded by Variation in Utilities) regret bound that contains a negative term, which is useful for obtaining fast convergence in games.
For concrete statements, see \Cref{lem:adahedge_opt_neg_mainbody} and \Cref{lem:oftrl_logbarrier}, which provide RVU bounds for OFTRL with the negative Shannon entropy regularizer and with the log-barrier regularizer, respectively.

\subsection{Two-player zero-sum games}\label{subsec:two_player_preliminaries}
Here we describe the problem of learning in two-player zero-sum games.
It is characterized by an unknown payoff matrix $A \in [-\Amax, \Amax]^{\mx \times \my}$, where $\Amax \geq 0$ is a scale parameter and $\mx$ and $\my$ are the numbers of actions available to the $x$- and $y$-players, respectively.
Note that most existing formulations of learning in games assume that $\Amax = 1$ is known to each player.
The interaction proceeds as follows.
At each round $t = 1, \dots, T$, the $x$-player selects a \emph{mixed strategy} (or simply a strategy) $x^\t \in \Delta_{\mx}$, and the $y$-player simultaneously selects $y^\t \in \Delta_{\my}$.
The $x$-player then observes a gain vector $g^\t = A y^\t \in [-\Amax, \Amax]^{\mx}$ and receives reward $\inpr{x^\t, g^\t}$, while the $y$-player observes a loss vector $\ell^\t = A^\top x^\t \in [-\Amax, \Amax]^{\my}$ and incurs loss $\inpr{y^\t, \ell^\t}$.

For each player, this interaction can be viewed as online linear optimization: for the $x$-player, the loss vector is $(-g^\t)$ and the feasible set is $\calK = \Delta_{\mx}$, and for the $y$-player, the loss vector is $\ell^\t$ and $\calK = \Delta_{\my}$.
Hence, the external regret of the $x$-player and that of the $y$-player can naturally be defined as
$
  \Reg_{x}^T 
  =
  \max_{x^* \in \Delta_{\mx}} 
  {\Reg_{x}^T(x^*)}
$
for 
$
  \Reg_{x}^T(x^*)
  =
  \sumT \inpr{x^* - x^\t, g^\t}
$
and
$
  \Reg_{y}^T
  = 
  \max_{y^* \in \Delta_{\my}} 
  {\Reg_{y}^T(y^*)}
$
for 
$
  \Reg_{y}^T(y^*)
  =
  \sumT \inpr{y^\t - y^*, \ell^\t}
  ,
$
respectively.

We now define the notion of a Nash equilibrium.
In a two-player zero-sum game with payoff matrix $A$, a pair of probability distributions $\sigma = (x^*, y^*)$ over the action sets $[\mx]$ and $[\my]$ is called an \emph{$\epsilon$-approximate Nash equilibrium} for $\epsilon \geq 0$ if, for any $x \in \Delta_{\mx}$ and $y \in \Delta_{\my}$,
$
  x^\top A y^* - \epsilon
  \leq
  {x^*}^\top A y^*
  \leq 
  {x^*}^\top A y + \epsilon
  .
$
When $\epsilon = 0$, we call $\sigma$ a \emph{Nash equilibrium}.
The following theorem provides a connection between no-external-regret learning dynamics and Nash equilibrium computation.
\begin{theorem}[\citealt{freund99adaptive}]\label{thm:reg2nasheq}
  In two-player zero-sum games,
  the product distribution of the average play $\prn{\frac1T \sumT x^\t, \frac1T \! \sumT y^\t}$ is a $(\prn{\Reg_{x}^T + \Reg_{y}^T} / T)$-approximate Nash equilibrium.
\end{theorem}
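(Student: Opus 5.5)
Let $\bar x = \frac1T\sumT x^\t$ and $\bar y = \frac1T\sumT y^\t$ denote the averages. We show that $(\bar x,\bar y)$ satisfies the two inequalities in the definition of an $\epsilon$-approximate Nash equilibrium with $\epsilon = (\Reg_x^T+\Reg_y^T)/T$. The argument uses only the bilinearity of $(x,y)\mapsto x^\top A y$ and the definitions of the two regrets. The key quantity is the realized average value $v = \frac1T\sumT (x^\t)^\top A y^\t$. We will sandwich both $\max_x x^\top A\bar y$ and $\min_y \bar x^\top A y$ around $v$.

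\emph{Step 1 (regret of the $x$-player).} Since $g^\t = Ay^\t$, for any $x\in\Delta_{\mx}$ we have
\[
\Reg_x^T \ge \sumT \inpr{x - x^\t, Ay^\t} = T\, x^\top A\bar y - Tv
\]
by linearity in $y$. Hence $\max_{x} x^\top A\bar y \le v + \Reg_x^T/T$.

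\emph{Step 2 (regret of the $y$-player).} Since $\ell^\t = A^\top x^\t$, for any $y\in\Delta_{\my}$ we have
\[
\Reg_y^T \ge \sumT \inpr{y^\t - y, A^\top x^\t} = Tv - T\,\bar x^\top A y .
\]
Hence $\min_y \bar x^\top A y \ge v - \Reg_y^T/T$.

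\emph{Step 3 (combining).} Write $\epsilon = (\Reg_x^T+\Reg_y^T)/T$. Steps 1 and 2 give
\[
\max_x x^\top A\bar y - \epsilon \le v - \Reg_y^T/T \le \min_y \bar x^\top A y \le \bar x^\top A \bar y ,
\]
which is the left inequality. Symmetrically,
\[
\bar x^\top A\bar y \le \max_x x^\top A\bar y \le v + \Reg_x^T/T \le \min_y \bar x^\top A y + \epsilon ,
\]
which is the right inequality. Both inequalities hold for arbitrary $x,y$, so $(\bar x,\bar y)$ is an $\epsilon$-approximate Nash equilibrium.

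There is no real obstacle in this argument. The only points requiring care are the sign conventions: the $x$-player maximizes gain and the $y$-player minimizes loss. One should also note that the regrets may be negative. This does not affect the argument, since each player's regret is at least its value at any comparator, including the specific $x$ and $y$ used in Steps 1 and 2.
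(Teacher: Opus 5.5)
Your proof is correct. Sandwiching $\max_x x^\top A\bar y$ and $\min_y \bar x^\top A y$ around the realized average value $v$ via the two regret definitions is the standard Freund--Schapire argument, which the paper invokes by citation rather than reproving, and your treatment of possibly negative individual regrets is sound since each step only evaluates the regret at a specific comparator.
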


\subsection{Multiplayer general-sum games}\label{subsec:multi_player_preliminaries}

We now turn to multiplayer general-sum games.
Let $n \geq 2$ denote the number of players, with player set $[n] = \set{1,\dots,n}$.
Each player $i \in [n]$ has an action set $\calA_i$ of size $\abs{\calA_i} = m_i$ and an unknown utility function
$u_i \colon \calA_1 \times \cdots \times \calA_n \to [-\Umax,\Umax]$,
where $\Umax \geq 0$ is an unknown scale parameter.
Note again that most existing formulations of learning in games assume that $\Umax = 1$ is known to each player.
The interaction proceeds as follows.
At each round $t = 1, \dots, T$, every player $i \in [n]$ selects a mixed strategy $x_i^\t \in \Delta_{m_i}$ and then observes an \emph{expected utility} vector $u_i^\t \in [-\Umax, \Umax]^{m_i}$.
Here, the $a_i$-th component of the expected utility vector $u_i^\t$ is given by
$
  u_i^\t(a_i) = \E_{a_{-i} \sim x_{-i}^\t} \brk*{u_i(a_i, a_{-i})}
  ,
$
which is the expected reward obtained by player $i$ when choosing action $a_i$ while the other players act according to $x_{-i}^\t = (x_1^\t, \dots, x_{i-1}^\t, x_{i+1}^\t, \dots, x_n^\t)$.
Note that two-player zero-sum games are a special case of multiplayer general-sum games.

Again, for each player, this interaction can be viewed as online linear optimization: for player~$i$, the loss vector is~$(- u_i^\t)$ and the feasible set is $\calK = \Delta_{m_i}$.
Thus, the swap regret of each player~$i$ can naturally be defined as
$
  \SwapReg_{x_i}^T
  =
  \max_{M \in \calM_{m_i}} 
  \SwapReg_{x_i}^T(M) 
$
for 
$
  \SwapReg_{x_i}^T(M)
  =
  \sumT \inpr{x_i^\t, M u_i^\t - u_i^\t}
  ,
$
where we recall that
$\calM_m
= 
\set{ M \in [0,1]^{m \times m} \colon M(k, \cdot) \in \Delta_m \mbox{ for } k \in [m]}
$ 
is the set of all $m \times m$ row stochastic matrices.

We now define the notion of a correlated equilibrium.
A probability distribution $\sigma$ over the joint action space $\times_{i=1}^n \calA_i$ is called an \emph{$\epsilon$-approximate correlated equilibrium} for $\epsilon \geq 0$ if, for every player $i \in [n]$ and every (swap) function $\phi_i \colon \calA_i \to \calA_i$ that replaces an action $a_i$ by $\phi_i(a_i)$, it holds that
$
  \E_{a \sim \sigma} \brk*{ u_i(a) }
  \geq 
  \E_{a \sim \sigma} \brk*{ u_i(\phi_i(a_i), a_{-i}) }
  -
  \epsilon
  .
$
When $\epsilon = 0$, we simply call $\sigma$ a \emph{correlated equilibrium}.

The following theorem connects no-swap-regret learning to correlated equilibrium computation.
\begin{theorem}[\citealt{foster97calibrated}]\label{thm:swapreg2coreq}
  In multiplayer general-sum games,
  let 
  $\sigma^\t = \otimes_{i \in [n]} x_i^\t \in \Delta(\times_{i=1}^n \calA_i)$ 
  denote the joint distribution at round $t$, defined by
  $\sigma^\t(a_1, \dots, a_n) = \prod_{i\in[n]} x_i^\t(a_i)$ for each $a_i \in \calA_i$.
  Then the time-averaged distribution $\sigma = \frac{1}{T} \sumT \sigma^\t$
  is a $(\max_{i \in [n]} \SwapReg_{x_i}^T / T)$-approximate correlated equilibrium.
\end{theorem}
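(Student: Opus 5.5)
The plan is to unfold the definition of the time-averaged distribution and reduce the correlated-equilibrium inequality, for each fixed player $i$ and swap function $\phi_i$, to the swap regret of player $i$ against a particular deterministic transition matrix. Fix $i \in [n]$ and $\phi_i \colon \calA_i \to \calA_i$. Since $\sigma^\t$ is the product distribution $\otimes_{j} x_j^\t$, I compute
\[
\E_{a \sim \sigma^\t}\brk*{u_i(a)} = \sum_{a_i \in \calA_i} x_i^\t(a_i)\, \E_{a_{-i}\sim x_{-i}^\t}\brk*{u_i(a_i,a_{-i})} = \inpr{x_i^\t, u_i^\t},
\]
using independence of $a_i$ and $a_{-i}$ under $\sigma^\t$ and the definition of the expected utility vector $u_i^\t$. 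In the same way,
\[
\E_{a \sim \sigma^\t}\brk*{u_i(\phi_i(a_i),a_{-i})} = \sum_{a_i} x_i^\t(a_i)\, u_i^\t(\phi_i(a_i)).
\]

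Next I encode $\phi_i$ as the $0/1$ row-stochastic matrix $M_{\phi_i} \in \calM_{m_i}$ with $M_{\phi_i}(k, k') = \ones[k' = \phi_i(k)]$. Then $(M_{\phi_i} u_i^\t)(k) = u_i^\t(\phi_i(k))$, so the second expectation equals $\inpr{x_i^\t, M_{\phi_i} u_i^\t}$. Averaging over $t$ and using linearity of expectation in the mixture $\sigma = \frac1T\sumT \sigma^\t$ gives
\[
\E_{a\sim\sigma}\brk*{u_i(\phi_i(a_i),a_{-i})} - \E_{a\sim\sigma}\brk*{u_i(a)} = \frac1T \sumT \inpr{x_i^\t, M_{\phi_i}u_i^\t - u_i^\t} = \frac1T \SwapReg_{x_i}^T(M_{\phi_i}).
\]

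Finally, since $M_{\phi_i} \in \calM_{m_i}$, this quantity is at most $\SwapReg_{x_i}^T/T$, and hence at most $\max_{i}\SwapReg_{x_i}^T/T$. This is exactly the $\epsilon$-approximate correlated equilibrium inequality with $\epsilon = \max_i \SwapReg_{x_i}^T/T$, and it holds for every $i$ and every $\phi_i$.

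There is no real obstacle in this argument. The only points needing care are the use of the product structure of $\sigma^\t$ within each round, which is what lets the conditional expectation over $a_{-i}$ collapse to $u_i^\t$, and the check that deterministic swap functions correspond to elements of $\calM_{m_i}$. The argument does not need the maximum over all stochastic matrices, only over these permutation-like $0/1$ matrices.
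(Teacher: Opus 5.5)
Your proposal is correct. The paper gives no proof of this statement (it only cites \citet{foster97calibrated}), and your argument is the standard one: the product structure of $\sigma^\t$ reduces the per-round expectations to $\inpr{x_i^\t, u_i^\t}$ and $\inpr{x_i^\t, M_{\phi_i} u_i^\t}$, so the deviation gain equals $\SwapReg_{x_i}^T(M_{\phi_i})/T \le \SwapReg_{x_i}^T/T$. A minor wording point is that $M_{\phi_i}$ is a $0/1$ row-stochastic matrix, not a permutation matrix in general, since $\phi_i$ need not be bijective; your argument never uses bijectivity, so nothing changes.
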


\subsection{Scale-invariant and scale-free learning dynamics}\label{subsec:scale}
Here we give definitions of \emph{scale-invariance} and \emph{scale-freeness} for online linear optimization algorithms and for learning dynamics in games.

\begin{definition}[Scale-invariant / scale-free online linear optimization]
An online linear optimization algorithm is said to be \emph{scale-invariant} if, for any sequence of loss vectors $\prn{h^\t}_{t=1}^T$ and any constant $c > 0$, the algorithm outputs the same sequence of points $\prn{w^\t}_{t=1}^T$ when run on $\prn{h^\t}_{t=1}^T$ and when run on $\prn{c h^\t}_{t=1}^T$.
Moreover, an online linear optimization algorithm is said to be \emph{scale-free} if it does not rely on any prior information about the loss vectors $\prn{h^\t}_{t=1}^T$ except for their dimension.
\end{definition}

\begin{definition}[Scale-invariant / scale-free learning dynamics in games]\label[definition]{def:scale_games}
A learning dynamic in games is said to be \emph{scale-invariant} if, for any functions $u_1, \dots, u_n$ and any constant $c > 0$, the dynamic outputs the same sequence of points when run on utility functions $u_1, \dots, u_n$ and when run on $c u_1, \dots, c u_n$.
A learning dynamic in games is said to be \emph{scale-free} if the dynamic does not rely on any prior information abount the underlying utilities except for their dimension.
\end{definition}

\section{Scale-Invariant Learning Dynamics for Two-Player Zero-Sum Games}\label{sec:two_player}
This section investigates scale-invariant and scale-free learning dynamics for two-player zero-sum games.
We will upper bound the external regrets of the $x$- and $y$-players.

\subsection{Proposed learning dynamic}\label{subsec:dynamics_two_player}
Here we present scale-invariant and scale-free learning dynamics with a fast convergence rate.
Let $M_x = \max\set{4, \log m_x / 2^{3/2}}$ and $M_y = \max\set{4, \log m_y / 2^{3/2}}$. Define $M = \max\set{M_x, M_y}$.
We consider the optimistic Hedge algorithm, which determines the strategies of the $x$- and $y$-players via
\begin{equation}
  \begin{split}    
  x^\t(a) 
  &
  \propto 
  \exp\prn[\bigg]{ \eta_x^\t \prn[\bigg]{ \sum_{s=1}^{t-1} g^\s(a) + g^\tm(a)} }
  \com
  \quad
  \eta_x^\t
  =
  \sqrt{\frac{M_x}{P_\infty^\t(g) + P_\infty^\t(\ell)}}
  \com
  \\
  y^\t(a) 
  &
  \propto 
  \exp\prn[\bigg]{ - \eta_y^\t \prn[\bigg]{ \sum_{s=1}^{t-1} \ell^\s(a) + \ell^\tm(a)} }
  \com\quad
  \eta_y^\t
  =
  \sqrt{\frac{M_y}{P_\infty^\t(g) + P_\infty^\t(\ell)}}
  \com
  \end{split}
  \label{eq:OptHedge}
\end{equation}
where 
we let $g^0 = \ell^0 = \zeros$, $x^1 = \frac{1}{\mx} \ones$, and $y^1 = \frac{1}{\my} \ones$ for simplicity,
and 
recall that
$
  g^\t = A y^\t,
$
$
\ell^\t = A^\top x^\t,
$ and $P_q^t(z) = \sum_{s=1}^{t-1} \nrm{z^\s - z^\sm}_q^2$.
If the denominators of the learning rates $\eta_x^\t$ and $\eta_y^\t$ are zero, we set $\eta_x^\t = \eta_y^\t = \infty$.
The learning rates $\eta_x^\t$ and $\eta_y^\t$ can be computed using only the observations up to time $t-1$.
Note that optimistic Hedge corresponds to OFTRL with the negative Shannon entropy regularizer.

\subsection{External regret upper bounds}
Let $m = \max\set{m_x, m_y}$. Then the above learning dynamic guarantees the following bounds.
\begin{theorem}[Scale-invariant fast convergence to Nash equilibrium]\label{thm:main}
Suppose that the $x$- and $y$-players use the algorithms in \cref{eq:OptHedge}. 
Then, it holds that
$
  \Reg_x^T
  \leq
  8 \Adiff
  \sqrt{
    5 \prn{M + 1} M_x
  }
  =
  O(\Adiff \log m)
$
and
$
  \Reg_y^T
  \leq
  8 \Adiff
  \sqrt{
    5 \prn{M + 1} M_y
  }
  =
  O(\Adiff \log m)
  .
$
Consequently (by \Cref{thm:reg2nasheq}), the average play is an $O( \Amax \log m / T)$-approximate Nash equilibrium.
\end{theorem}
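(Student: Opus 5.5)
We will use an RVU-type bound for optimistic Hedge with a time-varying, nonincreasing learning rate. This is the bound the paper refers to as \Cref{lem:adahedge_opt_neg_mainbody}, and we prove it in AdaHedge style so that only the range of each gain vector appears, not its absolute magnitude. For the $x$-player it should read
\begin{align*}
\Reg_x^T
\lesssim
\frac{\log m_x}{\eta_x^{T+1}}
+ \sum_{t=1}^T \min\set*{\eta_x^t \nrm{g^t-g^{t-1}}_\infty^2,\ \Adiff}
- \sum_{t=1}^T \frac{c}{\eta_x^t}\nrm{x^t-x^{t-1}}_1^2 ,
\end{align*}
and symmetrically for the $y$-player. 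The truncation at $\Adiff$ comes from bounding each per-round stability/mixability gap by the oscillation of $g^t-g^{t-1}$. That oscillation is at most $\Adiff$, since every coordinate of $g^t=Ay^t$ lies in $[\min A,\max A]$ and subtracting a constant vector does not change the Hedge iterates. We will also note the game-specific link
$\nrm{g^t-g^{t-1}}_\infty=\nrm{A(y^t-y^{t-1})}_\infty\le \tfrac{\Adiff}{2}\nrm{y^t-y^{t-1}}_1$,
which again holds because $\inpr{\ones,y^t-y^{t-1}}=0$ allows centering $A$. The analogous bound holds for $\ell$ in terms of $x$.

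Write $S^t=P_\infty^t(g)+P_\infty^t(\ell)$, so that $\eta_x^t=\sqrt{M_x/S^t}$. The first two terms of the RVU bound are then, by the standard $\sum_t a_t/\sqrt{\sum_{s<t}a_s}$ argument with the $\Adiff$ truncation absorbing the first term, of order $\sqrt{M_x S^{T+1}}+\Adiff$ (up to the $\log m_x/M_x$ factor, which is at most a constant times one). If $S^{T+1}\lesssim (M+1)\Adiff^2$ we are done directly. The real work is to use the negative terms otherwise. Summing the $x$- and $y$-player bounds and using the link above, the negative terms satisfy
\begin{align*}
\sum_t \frac{1}{\eta^t}\prn*{\nrm{x^t-x^{t-1}}_1^2+\nrm{y^t-y^{t-1}}_1^2}
\gtrsim \sum_t \sqrt{\tfrac{S^t}{M}}\,\frac{\nrm{\ell^t-\ell^{t-1}}_\infty^2+\nrm{g^t-g^{t-1}}_\infty^2}{\Adiff^2}.
\end{align*}

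The stopping-time step is where we expect the difficulty. Let $\tau$ be the first round at which $S^{\tau}$ exceeds a threshold of order $M\Adiff^2$. Up to $\tau$, the positive terms are $O(\Adiff\sqrt{M M_x})$. After $\tau$, each $1/\eta^t\ge \sqrt{S^\tau/M}\gtrsim\Adiff$, so the negative term dominates $\Adiff^{-1}\sum_{t\ge\tau}$ of the squared increments, which is of order $(S^{T+1}-S^\tau)/\Adiff$. The positive term after $\tau$ is only $\sqrt{M_x}\,(\sqrt{S^{T+1}}-\sqrt{S^\tau})$. We compare the two: if $\sqrt{S^{T+1}}$ greatly exceeds $\sqrt{S^\tau}$, the gain $\sqrt{M_x}\,\Delta$ is beaten by $\Delta\sqrt{S^{T+1}}/\Adiff$, because $\sqrt{S^{T+1}}\ge\sqrt{M}\Adiff$. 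Matching constants gives the stated $8\Adiff\sqrt{5(M+1)M_x}$.

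One subtlety is that the regrets of the two players are bounded separately, while the negative terms must be combined. We handle this by first bounding $\Reg_x^T+\Reg_y^T$. Since each $\Reg$ at its maximizer is at least $-$(the opponent's regret) in zero-sum games (the sum is a duality gap, hence $\ge 0$), this yields $S^{T+1}\lesssim (M+1)\Adiff^2$. We then plug that bound back into the individual bound $\Reg_x^T\lesssim\sqrt{M_x S^{T+1}}+\Adiff$. The Nash-equilibrium consequence follows from \Cref{thm:reg2nasheq} together with $\Adiff\le 2\Amax$.
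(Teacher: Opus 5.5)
\textbf{Shared gap: the first round.} Your outline is essentially the paper's proof. It has the same ingredients: the AdaHedge-type RVU bound of \Cref{lem:adahedge_opt_neg_mainbody} with the opponent's path length in the learning rate, the link $\nrm{A(y-y')}_\infty\le\frac{\Adiff}{2}\nrm{y-y'}_1$ of \Cref{lem:grad_diff}, a stopping time at a threshold of order $M\Adiff^2$, nonnegativity of $\Reg_x^T+\Reg_y^T$, and plugging back. Stopping on the combined $S^t$ and bounding $S^{T+1}$ directly, rather than stopping separately on $P_\infty^t(\ell)$ and $P_\infty^t(g)$ and bounding $Q_1^{\tau_x}(x)$, $Q_1^{\tau_y}(y)$, is only a minor variation. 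One small correction: the OFTRL negative term is $\frac{1}{4\eta_x^t}\nrm{x^{t+1}-x^t}_1^2$, not $\frac{c}{\eta_x^t}\nrm{x^t-x^{t-1}}_1^2$. After the stopping time you therefore get $1/\eta_x^{\tau-1}$, which must be lower bounded through $S^{\tau-1}$, as the paper does.

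The real gap is the first round. Since $g^0=\ell^0=\zeros$, the first increment is $S^2=\nrm{g^1}_\infty^2+\nrm{\ell^1}_\infty^2=\nrm{A\ones/m_y}_\infty^2+\nrm{A^\top\ones/m_x}_\infty^2$. This is governed by $\Amax$, not $\Adiff$, because \Cref{lem:grad_diff} only controls $g^s-g^{s-1}=A(y^s-y^{s-1})$ for $s\ge2$. Your oscillation remark handles the round-$1$ stability term. It does not handle $\log m_x/\eta_x^{T+1}=\log m_x\sqrt{S^{T+1}/M_x}$, because the learning rate is built from sup-norms. If $S^2$ already exceeds your threshold, then $\tau=2$ and the overshoot is of order $\Amax^2$. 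Both ``up to $\tau$ the positive terms are $O(\Adiff\sqrt{MM_x})$'' and ``$S^{T+1}\lesssim(M+1)\Adiff^2$'' then fail; indeed trivially $S^{T+1}\ge\nrm{g^1}_\infty^2$.

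The paper has the same hole in the proof of \Cref{lem:Pinfty_tradeoff}. There the increment of $P_\infty(h)$ at $\tau^\circ$ is bounded via \Cref{lem:grad_diff}, which is invalid when that increment is $\nrm{h^1-h^0}_\infty^2=\nrm{h^1}_\infty^2$. In that case also $\eta_x^{\tau_x-1}=\eta_x^1=\infty$, so the step $\eta_x^{\tau_x-1}\le 1/(4.75\Adiff)$ breaks too.

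\textbf{The $\Adiff$ bound is false as stated.} No better argument can fix this, because the $\Adiff$ bounds do not hold for the dynamics in \cref{eq:OptHedge}. Take $m_x=m_y=2$ and $A=\begin{pmatrix}1&1\\1-\epsilon&1-\epsilon\end{pmatrix}$. Then $\Adiff=\epsilon$, $M=M_x=4$, and the claimed bound is $80\epsilon$.
\begin{itemize}
\item The columns of $A$ coincide, so $y^t$ stays uniform and $g^t=(1,1-\epsilon)$ for all $t$.
\item Hence $S^t=2+O(\epsilon)$ and $\eta_x^t\approx\sqrt2$ for $t\ge2$, giving $x^t(2)=1/(1+e^{\eta_x^t t\epsilon})$.
\item Therefore $\Reg_x^T=\epsilon\sum_{t=1}^T x^t(2)\to\log 2/\sqrt2\approx0.49$ as $\epsilon\to0$ with $T\epsilon\to\infty$, which exceeds $80\epsilon$ for small $\epsilon$.
\end{itemize}

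\textbf{What survives.} Raise the threshold by $2\Amax^2\ge S^2$. This forces $\tau\ge3$, so the overshoot is at most $2\Adiff^2$. Your argument then proves $\Reg_x^T=O(\Adiff\sqrt{MM_x}+\Amax\sqrt{M_x})=O(\Amax\log m)$, and likewise for the $y$-player. This still yields the stated $O(\Amax\log m/T)$ Nash-equilibrium conclusion. A genuine $\Adiff$ bound would require changing the algorithm. One way is to drop the $s=1$ terms from the path lengths in the learning rates and bound the round-$1$ stability term by your oscillation argument.
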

A comparison with existing external regret upper bounds is provided in \Cref{table:regret}.
Unlike existing scale-free learning results, our analysis shows that the convergence rate can be controlled not by the maximum entry of the payoff matrix $\Amax$, but rather by the payoff range $\Adiff$, the difference between the maximum and the minimum entries in the payoff matrix.
Note that our $O(\Adiff \log m)$ regret bound does not contradict the lower bound for optimistic Hedge~\citep{tsuchiya25tight}, since their lower bound applies to the optimistic Hedge algorithms with a constant learning rate. 
It is also worth noting that our analysis remains robust even in the corrupted regime~\citep{tsuchiya25corrupted}, where the opponent may deviate from the prescribed algorithm (here, optimistic Hedge in~\cref{eq:OptHedge}) or the observed gradients may be adversarially corrupted; we defer the details to \Cref{app:proof_two_player}.

\begin{remark}
\label[remark]{remark:zhang25scale}
We provide several comparison remarks related to \citet{zhang25scale}.
First, to obtain a fast convergence rate of $O(1/T)$, their regret analysis assumes $\delta = \min\set{\nrm{g^1}, \nrm{\ell^1}} > 0$.
However, this $\delta$ cannot be known to each player in advance and can be arbitrarily small.
\Cref{thm:main} positively resolves the question raised in the conclusion of \citet{zhang25scale} of whether one can retain the scale-free fast convergence while removing the dependence on the $1/\delta$ factor in their regret bound.
As we will see in the following proof, we address this issue by introducing an appropriate stopping time and developing a new analysis that effectively leverages the negative term in the RVU bound.

Note that, if one can assume $\delta > 0$, then simply replacing constants $\log_+(\mx)$ and $\log_+(\my)$ in the denominator of learning rates in \citet{tsuchiya25corrupted} by $4 \delta^2$ yields a scale-invariant and scale-free learnig dynamic with a regret bound of $O(\frac{\Adiff \log m}{\delta})$ without prior knowledge of $\delta$, whose rate is better than that of \citet{zhang25scale}.
As we will also see in the swap-regret analysis and in \Cref{app:nocom}, the main difficulty of scale-free learning arises when the scale parameters are small, and thus assuming $\delta > 0$ allows us to achieve scale-invariant and scale-free fast convergence relateively more easily.
That said, \citet{zhang25scale} primarily aim to explain the empirical success of regret matching in zero-sum games from a scale-invariance perspective and to propose a new regret-matching variant, making their contribution complementary to ours.

Moreover, their learning dynamic uses optimistic online gradient descent, rather than optimistic Hedge (\ie~optimistic FTRL with the Shannon entropy regularizer).
As a result, in two-player zero-sum games their dynamic yields a polynomial dependence on the number of actions $m$.
One can replace it with optimistic OMD with Shannon entropy to resolve this, but an additional $\log T$ factor seems inevitable even with a refined analysis; thus, it is preferable to use optimistic FTRL.
\rqed
\end{remark}

\subsection{Regret analysis}\label{subsec:two_player_analysis}
Here we provide the proof of \Cref{thm:main}.
As mentioned soon after \Cref{thm:main}, our algorithm is robust to deviations by the opponent or adversarial corruption of utilities.
Still, focusing on the honest regime, we can obtain a regret upper bound with favorable leading constants.
The proof for the more general corrupted regime is provided in \Cref{app:proof_two_player}.

\paragraph{Preliminary lemmas}

We begin with the following lemma, which is a minor variant of the well-known regret upper bound for the AdaHedge-type update in \cref{eq:OptHedge} (the proof is provided in \Cref{app:proof_oftrl}).

\begin{lemma}\label[lemma]{lem:adahedge_opt_neg_mainbody}
Consider online linear optimization over the probability simplex $\Delta_d$.
Suppose that for some $\nu^1, \dots, \nu^T \geq 0$, the points are chosen by optimistic Hedge with the adaptive learning rate given by
$
w^t(a) \propto \exp\prn[\big]{ - \eta^\t \prn[\big]{ \sum_{s=1}^{t-1} h^\s(a) + h^\tm(a) }} 
$
for $a \in [d]$
and
$
\eta^\t = \tsqrt{{D}/{\sum_{s=1}^{t-1} \prn*{ \nrm{h^\s - m^\s}_\infty^2 + \nu^\s} }}
$
for $D = \max\set{4, \log (d) / 2^{3/2}}$ and $h^{T+1} = \zeros$.
Then, for any $w^* \in \Delta_d$, it holds that
\begin{equation}
  \sumT \inpr{w^\t - w^*, h^\t}
  \leq
  \sqrt{ 32 \sumT \prn*{ \nrm{h^\t - m^\t}_\infty^2 + \nu^\t } \, D}
  - 
  \sumT 
  \frac{1}{4 \eta^\t} \nrm{w^\t - w^\tp}_1^2  
  \per 
  \n
\end{equation}
\end{lemma}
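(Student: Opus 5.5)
We follow the standard OFTRL analysis with time-varying learning rate, as in AdaHedge-type proofs. Write $H^t=\sum_{s\le t} h^s$ and $\psi^t(w)=\frac{1}{\eta^t}\sum_a w(a)\log w(a)$. Since $\eta^t$ is nonincreasing in $t$, the standard OFTRL decomposition gives, for any $w^*\in\Delta_d$,
\[
\sumT \inpr{w^t-w^*,h^t}\le \frac{\log d}{\eta^{T+1}} + \sumT \Bigl(\inpr{w^t-w^{t+1},h^t-m^t} - \frac{1}{2\eta^t}\nrm{w^t-w^{t+1}}_1^2\Bigr).
\]
This combines three ingredients: the telescoping penalty term $\psi^{T+1}(w^*)-\psi^1(w^1)\le \log d/\eta^{T+1}$, which uses the nonnegativity of $\frac{1}{\eta^{t+1}}-\frac{1}{\eta^t}$ times the entropy range; $1$-strong convexity of negative entropy w.r.t.\ $\nrm{\cdot}_1$, scaled by $1/\eta^t$; and the usual "be-the-leader with optimism" comparison between $w^t$ and the unoptimistic iterate $w^{t+1}$. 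Here $m^t=h^{t-1}$, and the convention $\eta=\infty$ when the denominator is zero is harmless because those rounds contribute zero stability term.

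Next we split the stability term. Using Hölder and Young, $\inpr{w^t-w^{t+1},h^t-m^t}\le \nrm{w^t-w^{t+1}}_1\nrm{h^t-m^t}_\infty$. We keep a quarter of the negative term, $\frac{1}{4\eta^t}\nrm{w^t-w^{t+1}}_1^2$, for the final statement. The remaining $\frac{1}{4\eta^t}$ portion absorbs the inner product up to $\eta^t\nrm{h^t-m^t}_\infty^2$. However, to get a clean $\sqrt{\cdot}$ bound without assuming $\eta^t$ is bounded, we should use the AdaHedge-style per-round bound
\[
\delta^t:=\inpr{w^t-w^{t+1},h^t-m^t}-\tfrac{1}{4\eta^t}\nrm{w^t-w^{t+1}}_1^2\le \min\{\eta^t\nrm{h^t-m^t}_\infty^2,\ 2\nrm{h^t-m^t}_\infty\}.
\]
The second branch holds because $\nrm{w^t-w^{t+1}}_1\le 2$.

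Then we sum. Let $S^t=\sum_{s\le t}(\nrm{h^s-m^s}_\infty^2+\nu^s)$, so that $\eta^t=\sqrt{D/S^{t-1}}$. Each term satisfies $\delta^t\le \min\{\sqrt{D}\,a_t/\sqrt{S^{t-1}},\,2\sqrt{a_t}\}$, where $a_t=\nrm{h^t-m^t}_\infty^2\le S^t-S^{t-1}$. The standard lemma $\min\{x/\sqrt{S^{t-1}},\,c\sqrt{x}\}\lesssim \sqrt{S^t}-\sqrt{S^{t-1}}$ then gives $\sum\delta^t\le c\sqrt{D S^T}$. This lemma follows by cases: either $S^{t-1}\ge$ a constant times $x$, or $x$ dominates $S^t$. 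Because $D\ge 4$, the $2\sqrt{a_t}$ branch fits with constant $\sqrt D$. The penalty is $\log d/\eta^{T+1}=\log d\sqrt{S^T/D}\le 2^{3/2}\sqrt{DS^T}$, since $D\ge \log d/2^{3/2}$. Adding the two contributions and tracking constants should give $\sqrt{32 S^T D}$. The $\nu^t$ only enlarge $S$ and thus shrink $\eta$, which only helps.

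The main obstacle is the constant bookkeeping in the summation step, together with arranging the split so that exactly $\frac{1}{4\eta^t}\nrm{w^t-w^{t+1}}_1^2$ survives as a negative term while the remaining stability is still controlled without a cap on $\eta^t$. The choices $D\ge 4$ and $D\ge\log d/2^{3/2}$ look tailored precisely to make the case analysis in that step close with the constant $32$. We would try the case split at $S^{t-1}\ge a_t$ first.
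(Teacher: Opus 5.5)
Your proposal is correct and is essentially the paper's proof: the same OFTRL decomposition (valid as written once the analysis iterate $w^{T+1}$ is defined with $m^{T+1}=\zeros$, which removes the extra $\inpr{w^*-w^{T+1}, m^{T+1}}$ term), the same per-round bound $\min\{\eta^t\nrm{h^t-m^t}_\infty^2,\, 2\nrm{h^t-m^t}_\infty\}$ after keeping a quarter of the Bregman term, and the same penalty estimate $\log d/\eta^{T+1}\le 2\sqrt{2}\sqrt{DS^T}$. The only difference is the summation step: the paper uses a harmonic-mean bound together with $\sum_t z_t/\sqrt{\sum_{s\le t} z_s}\le 2\sqrt{\sum_t z_t}$, whereas your split at $S^{t-1}\ge a_t$ gives $(1+\sqrt{2})\sqrt{D}\,(\sqrt{S^t}-\sqrt{S^{t-1}})$ per round, and the constants close since $(1+\sqrt{2})+2\sqrt{2}=1+3\sqrt{2}\le 4\sqrt{2}=\sqrt{32}$. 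The other case of your split, which uses $\sqrt{D}\ge 2$, also covers the rounds with $\eta^t=\infty$; these contribute up to $2\nrm{h^t-m^t}_\infty$ rather than zero, but that branch handles them.
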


The following lemma is useful to obtain $\Adiff$-dependent regret bounds.
\begin{lemma}\label[lemma]{lem:grad_diff}
For some $m, n \in \N$ and $A \in \R^{m \times n}$, let $\Adiff = \max_{i,j} A_{i,j} - \min_{i, j} A_{i,j}$.
Then, for any $x, x' \in \Delta_{m}$ and $y, y' \in \Delta_{n}$, it holds that
$
  \nrm{A (y - y')}_\infty
  \leq
  \frac{\max_{i} \prn*{ \max_{j} A_{ij} - \min_{j} A_{ij} }}{2} \nrm{y - y'}_1
  \leq
  \frac{\Adiff}{2} \nrm{y - y'}_1  
$
and
$
  \nrm{A^\top (x - x')}_\infty
  \leq
  \frac{\max_{j} \prn*{ \max_{i} A_{ij} - \min_{i} A_{ij} }}{2} \nrm{x - x'}_1
  \leq
  \frac{\Adiff}{2} \nrm{x - x'}_1
  .
$
\end{lemma}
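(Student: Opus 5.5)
The statement is a per-row inequality, so the plan is to fix a row index $i$ and bound $\abs{(A(y-y'))_i}$ using only that $y - y'$ sums to zero. The same argument applied to $A^\top$ then handles the second claim.

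Write $z = y - y'$. Since $y, y' \in \Delta_n$, we have $\sum_j z_j = 0$, and therefore
\[
\sum_j \max\{z_j,0\} = \sum_j \max\{-z_j,0\} = \tfrac12 \nrm{z}_1 .
\]
The key step is that $(Az)_i = \sum_j A_{ij} z_j$ is unchanged if every entry of row $i$ is shifted by a constant $c$, because $\sum_j z_j = 0$. We take $c = (\max_j A_{ij} + \min_j A_{ij})/2$, the midpoint of the row. After the shift, every entry satisfies $\abs{A_{ij} - c} \le (\max_j A_{ij} - \min_j A_{ij})/2$.

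Applying H\"older's inequality gives
\[
\abs{(Az)_i} = \Bigl|\sum_j (A_{ij}-c) z_j\Bigr| \le \frac{\max_j A_{ij} - \min_j A_{ij}}{2}\,\nrm{z}_1 .
\]
Taking the maximum over $i$ yields the first inequality. The second inequality follows because each row range is at most the global range $\Adiff$.

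For $A^\top(x-x')$, the same argument applies with the roles of rows and columns swapped: we shift column $j$ by its midpoint and use $\sum_i (x-x')_i = 0$.

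The only place where care is needed is the midpoint-shift step. It is exactly what turns the $\ell_\infty$ magnitude bound into half the range. Everything else is routine, so I expect no real obstacle.
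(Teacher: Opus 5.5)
Your proof is correct and follows the paper's argument: the paper also subtracts the row midpoint $\xi(i) = (\max_j A_{ij} + \min_j A_{ij})/2$, which is allowed because $y - y'$ sums to zero, and then applies H\"older's inequality row by row. The identity $\sum_j \max\{z_j,0\} = \tfrac12\|z\|_1$ that you record is true but is never used.
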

\begin{proof}
We first show the first inequality.
We notice that for any $\xi \in \R^n$, it holds that
$
\nrm{A (y - y')}_\infty
=
\nrm{(A - \xi \ones^\top) (y - y')}_\infty
=
\max_{i \in [m]} \abs{ \inpr{ A(i,\cdot) - \xi(i) \ones, y - y' } }
\leq
\max_{i \in [m]} \nrm{A(i,\cdot) - \xi(i) \ones}_\infty \nrm{y - y'}_1
,
$
where the first equality follows from $y, y' \in \Delta_n$ and the last inequality from H\"{o}lder's inequality.
Choosing $\xi(i) = (\max_j A_{ij} + \min_j A_{ij}) / 2$ for each $i \in [m]$ completes the proof.
\end{proof}

The following lemma is useful for exploiting the negative term in the RVU bound to obtain a fast regret bound of $O(1)$ even when the algorithm is agnostic to $\Amax$ and $\Adiff$.
\begin{lemma}\label[lemma]{lem:Pinfty_tradeoff}
Let $z_1, \dots, z_T \in \Delta_d$, $h^\t = A z^\t$, and $c_1, c_2 > 0$.
Denote $\tau^\circ(h; c_1, c_2) \in [T] \cup \set{\infty}$ by
\begin{equation}\label{eq:tau_circ}
  \tau^\circ(h; c_1, c_2)
  =
  \begin{cases}
    \displaystyle
    \min\set[\big]{
      t 
      \colon 
      P_\infty^t(h) 
      >
      {c_2}/{c_1} + \Adiff^2 
    } & 
    \mbox{if} \ P_\infty^T(h) > {c_2}/{c_1} + \Adiff^2
    \!\com
    \\
    \infty & \text{otherwise}
    \com
  \end{cases}
\end{equation}
where $P_\infty^t(h) = \sum_{s=1}^t \nrm{h^\s - h^\sm}_\infty^2$.
Then, we have $\tau^\circ \geq 2$, and if $\tau^\circ(h; c_1, c_2) \leq T$, it holds that
\begin{equation}
  P_\infty^{\tau^\circ}(h)
  \leq
  \frac{c_2}{c_1}
  +
  2 \Adiff^2
  \com
  \
  P_\infty^{\tau^\circ - 1}(h) \geq \frac{c_2}{c_1}
  \com
  \
  c_1 \sqrt{P_\infty^{\tau^\circ}(h)}
  +
  \frac{c_2}{ \tsqrt{P_\infty^{\tau^\circ-1}(h)} }
  \leq
  \sqrt{2} c_1 \Adiff + 2 \sqrt{c_1 c_2}
  \per
  \n
\end{equation}
\end{lemma}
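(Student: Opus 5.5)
Here $P_\infty^t(h)=\sum_{s=1}^t\nrm{h^\s-h^\sm}_\infty^2$ (note the sum runs up to $t$, not $t-1$), with $h^0=\zeros$. The proof has four short steps. First bound the size of a single increment, then read off $\tau^\circ\ge 2$, then derive the two inequalities on $P_\infty$ from minimality of $\tau^\circ$, and finally combine them.

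\textbf{Single-step increments.} For $s\ge 2$, \Cref{lem:grad_diff} with $z^\s,z^\sm\in\Delta_d$ gives
\[
\nrm{h^\s-h^\sm}_\infty\le \tfrac{\Adiff}{2}\nrm{z^\s-z^\sm}_1\le \Adiff ,
\]
since $\nrm{z^\s-z^\sm}_1\le 2$. The case $s=1$ is the only delicate one, because $h^0=\zeros$ makes the increment $\nrm{h^1}_\infty=\nrm{Az^1}_\infty$. This can be as large as $\Amax$, not $\Adiff$.

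So I will exploit that the threshold is $c_2/c_1+\Adiff^2$ and that $\tau^\circ$ is taken at the first crossing. My reading is that the lemma intends the increments to be at most $\Adiff$; in the paper's usage, $P$ starts from differences of actual gradients, where the first term is effectively excluded or the convention makes it harmless. If the first increment is not controlled, I would handle it by adopting the convention that the $s=1$ term is bounded by $\Adiff^2$ (e.g.\ $h^0:=h^1$). This is the step I regard as the main obstacle, and I would check it against how $P$ is defined in the algorithm.

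\textbf{$\tau^\circ\ge 2$.} Granting each increment squared is at most $\Adiff^2$, we get $P_\infty^1(h)\le\Adiff^2<c_2/c_1+\Adiff^2$ because $c_2/c_1>0$. Hence the threshold is not crossed at $t=1$, so $\tau^\circ\ge2$.

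\textbf{The two inequalities on $P_\infty$.} Suppose $\tau^\circ\le T$. By minimality,
\[
P_\infty^{\tau^\circ-1}(h)\le c_2/c_1+\Adiff^2 .
\]
Adding one increment of at most $\Adiff^2$ gives $P_\infty^{\tau^\circ}(h)\le c_2/c_1+2\Adiff^2$. Conversely, since $P_\infty^{\tau^\circ}(h)>c_2/c_1+\Adiff^2$ and the last increment is at most $\Adiff^2$, we get $P_\infty^{\tau^\circ-1}(h)> c_2/c_1$.

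\textbf{Combined bound.} Using $\sqrt{a+b}\le\sqrt a+\sqrt b$,
\[
c_1\sqrt{P_\infty^{\tau^\circ}(h)}\le c_1\sqrt{c_2/c_1}+\sqrt2\,c_1\Adiff=\sqrt{c_1c_2}+\sqrt2 c_1\Adiff .
\]
For the second term, $P_\infty^{\tau^\circ-1}(h)\ge c_2/c_1$ gives
\[
\frac{c_2}{\tsqrt{P_\infty^{\tau^\circ-1}(h)}}\le \frac{c_2}{\sqrt{c_2/c_1}}=\sqrt{c_1c_2} .
\]
Summing the two yields $\sqrt2 c_1\Adiff+2\sqrt{c_1c_2}$, as claimed.
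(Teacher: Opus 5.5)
Your argument is the paper's argument. Minimality of $\tau^\circ$ gives $P_\infty^{\tau^\circ-1}(h)\le c_2/c_1+\Adiff^2$. One increment of size at most $\Adiff^2$ (from \Cref{lem:grad_diff} and $\nrm{z^{s}-z^{s-1}}_1\le 2$) is added for the upper bound and subtracted for the lower bound, and $\sqrt{a+b}\le\sqrt a+\sqrt b$ finishes.

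Under your standing assumption every step is correct, including the check $\tau^\circ\ge 2$, which the paper states but never verifies. The assumption cannot be discharged, however: the obstacle you isolate at $s=1$ is a defect of the statement itself. Take $A=(10\;\;11)\in\R^{1\times 2}$ (so $\Adiff=1$), $z^1=(1,0)^\top$, $c_1=c_2=1$ and $T\ge 2$.
\begin{itemize}
\item With the lemma's indexing, $P_\infty^1(h)=\nrm{h^1-h^0}_\infty^2=100>2$. So $\tau^\circ=1$, contradicting $\tau^\circ\ge 2$, and $P_\infty^{\tau^\circ}(h)=100>3=c_2/c_1+2\Adiff^2$.
\item With the main-text indexing $\sum_{s=1}^{t-1}$, one gets $\tau^\circ=2$, $P_\infty^{2}(h)=100>3$ and $P_\infty^{1}(h)=0<c_2/c_1$.
\end{itemize}
The paper's proof has the same hole. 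It bounds the increment at the stopping time via \Cref{lem:grad_diff}, which needs both endpoints to be of the form $Az$ with $z\in\Delta_d$. That fails when one endpoint is $h^0=\zeros$. What you (and the paper) actually prove is the lemma under the extra hypothesis $\nrm{h^1}_\infty\le\Adiff$. This holds, for example, when the entries of $A$ take both signs, since then $\Amax\le\Adiff$.

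Your guess that the first term is excluded or harmless in the paper's usage is wrong, and this undermines your proposed repair. In \cref{eq:OptHedge} one has $g^0=\ell^0=\zeros$, so the learning rates contain $\nrm{g^1}_\infty^2$ and $\nrm{\ell^1}_\infty^2$. These can be of order $\Amax^2\gg\Adiff^2$: add a large constant to every entry of $A$. Redefining $h^0:=h^1$ therefore proves a statement about a different $P_\infty$ from the one the lemma is applied to in \Cref{thm:main}.

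A repair compatible with the algorithm is to run your argument verbatim with $\Adiff$ replaced by $A'\coloneqq\max\set{\Adiff,\nrm{h^1}_\infty}$, both in the threshold and in all three conclusions. Every increment is then at most $A'$. The cost is that $\nrm{g^1}_\infty$ and $\nrm{\ell^1}_\infty$ enter the downstream regret bound. Alternatively, the learning rates themselves would have to be modified to exclude the $s=1$ term.
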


We are now ready to provide a proof sketch of \Cref{thm:main}.
\begin{proof}[Proof sketch of \Cref{thm:main}]
From \Cref{lem:grad_diff}, we have
\begin{equation}\label{eq:gldiff2Adiff_main}
  \nrm{g^\t - g^\tm}_\infty \leq (\Adiff / 2) \nrm{y^\t - y^\tm}_1
  \com\quad
  \nrm{\ell^\t - \ell^\tm}_\infty \leq (\Adiff / 2) \nrm{x^\t - x^\tm}_1
  \per
\end{equation}
Now we fix arbitrary $\tau_x, \tau_y \in [T]$. 
Let $\Abardiff = \Adiff /2$ for simplicity.
Then, from \Cref{lem:adahedge_opt_neg_mainbody},
\begin{align}
  \Reg_x^T
  &\leq
  \sqrt{ 32 (P_\infty^T(g) + P_\infty^T(\ell)) M_x}
  -
  \sumT \frac{1}{4 \eta_x^\t} \nrm{x^\tp - x^\t}_1^2
  \nn
  &\leq
  \sqrt{ 32
    \brk[\big]{ 
      P_\infty^{\tau_y}(g) \!+\! \Abardiff^2 Q_1^{\tau_y}(y) 
      \!+\!
      P_\infty^{\tau_x}(\ell) \!+\! \Abardiff^2 Q_1^{\tau_x}(x) 
    } M_x}
  -
  \frac{1}{4 \eta_x^{\tau_x-1}} Q_1^{\tau_x}(x)
  \com
  \label{eq:reg_x_upper_1_main}
\end{align}
where the last inequality follows since $(\eta_x^\t)_t$ is nonincreasing and \cref{eq:gldiff2Adiff_main}.
Similarly,
\begin{equation}
  \Reg_y^T
  \leq
  \sqrt{ 32
    \brk[\big]{ 
      P_\infty^{\tau_y}(g) \!+\! \Abardiff^2 Q_1^{\tau_y}(y) 
      \!+\!
      P_\infty^{\tau_x}(\ell) \!+\! \Abardiff^2 Q_1^{\tau_x}(x) 
    } M_y}
  -
  \frac{1}{4 \eta_y^{\tau_y-1}} Q_1^{\tau_y}(y)
  \per
  \label{eq:reg_y_upper_1_main}
\end{equation}
Combining the upper bounds on $\Reg_x^T$ in~\cref{eq:reg_x_upper_1_main} and $\Reg_y^T$ in~\cref{eq:reg_y_upper_1_main}, we obtain
\begin{align}
  &
  \Reg_x^T + \Reg_y^T
  \leq
  \sqrt{ 128
  \prn{ 
    P_\infty^{\tau_y}(g) 
    + 
    P_\infty^{\tau_x}(\ell)
  } M}
  +
  \sqrt{ 128 \Abardiff^2 Q_1^{\tau_y}(y) M }
  -
  \frac{1}{4 \eta_y^{\tau_y-1}} Q_1^{\tau_y}(y)
  \nn
  &\qquad\qquad\qquad\qquad\qquad\qquad\qquad\qquad\qquad\qquad+
  \sqrt{ 128 \Abardiff^2 Q_1^{\tau_x}(x) M }
  -
  \frac{1}{4 \eta_x^{\tau_x-1}} Q_1^{\tau_x}(x)
  \nn
  &\leq
  \sqrt{ \! 128
  \prn{ 
    P_\infty^{\tau_y}(g) 
    \!+\! 
    P_\infty^{\tau_x}(\ell)
  } M}
  \!+\!
  256 \Abardiff^2 M (\eta_x^{\tau_x-1} \!+\! \eta_y^{\tau_y-1})
  \!-\!
  \prn[\bigg]{
  \frac{1}{8 \eta_x^{\tau_x-1}} Q_1^{\tau_x}(x)
  +
  \frac{1}{8 \eta_y^{\tau_y-1}} Q_1^{\tau_y}(y)
  }
  \nn
  &\leq
  2^{7/2} \brk[\Bigg]{\!
  \sqrt{\!P_\infty^{\tau_x}(\ell) M}
  \!+\!
  \frac{2^{9/2} \Adiff^2 M^{3/2}}{\tsqrt{P_\infty^{\tau_x-1}(\ell)}}
  \!+\!
  \sqrt{\!P_\infty^{\tau_y}(g) M}
  \!+\!
  \frac{2^{9/2} \Adiff^2 M^{3/2}}{\tsqrt{P_\infty^{\tau_y-1}(g)}}
  \!
  }
  \!-\!
  \prn[\bigg]{\!
  \frac{1}{8 \eta_x^{\tau_x}} Q_1^{\tau_x}(x)
  \!+\!
  \frac{1}{8 \eta_y^{\tau_y}} Q_1^{\tau_y}(y)
  \!
  }
  \com
  \label{eq:social_reg_upper_1_main}
\end{align}
where in the second inequality we considered the worst cases with respect to $Q_1^{\tau_y}(y)$ and $Q_1^{\tau_x}(x)$ using the inequality $b\sqrt{z} - az \leq b^2 / (4 a)$ that holds for $a > 0$, $b \geq 0$, and $z \geq 0$,
and in the last inequality we used the subadditivity of $\sqrt{\cdot}$, $\Abardiff = \Adiff /2$, and the definitions of the learning rates in~\cref{eq:OptHedge}: $\eta_x^{\tau_x-1} \leq \nsqrt{M_x / P_\infty^{\tau_x-1}(\ell)}$ and $\eta_y^{\tau_y-1} \leq \nsqrt{M_y / P_\infty^{\tau_y-1}(g)}$.

Now using $\tau^\circ$ in \cref{eq:tau_circ},
we choose $\tau_x = \min\set{\tau^\circ(\ell), T}, \tau_y = \min\set{\tau^\circ(g), T}\in [T]$ for
\begin{equation}
  \begin{split}
    \tau^\circ(\ell)
    \coloneqq
    \tau^\circ(\ell; \sqrt{M}, 2^{9/2} \Adiff^2 M^{3/2})
    \com
    \quad
    \tau^\circ(g)
    \coloneqq
    \tau^\circ(g; \sqrt{M}, 2^{9/2} \Adiff^2 M^{3/2})    
    \per
  \end{split}
  \n
\end{equation}

In what follows, we focus on the case when $\tau^\circ(\ell), \tau^\circ(g) \leq T$.
Otherwise, the regret can be bounded by a similar argument; see \Cref{app:proof_thm_main}.
From \Cref{lem:Pinfty_tradeoff}, the terms in~\cref{eq:social_reg_upper_1_main} are evaluated as\looseness=-1
\begin{equation}
  \sqrt{P_\infty^{\tau_x}(\ell) M}
  +
  \frac{2^{9/2} \Adiff^2 M^{3/2}}{\tsqrt{P_\infty^{\tau_x-1}(\ell)}}
  \leq
  11 \Adiff M
  \com\
  \sqrt{P_\infty^{\tau_y}(g) M}
  +
  \frac{2^{9/2} \Abardiff^2 M^{3/2}}{\tsqrt{P_\infty^{\tau_y-1}(g)}}
  \leq
  11 \Adiff M
  \com
  \label{eq:PxPy_inf_main}
\end{equation}
where we used $M \geq 1$.
Continuing from \cref{eq:social_reg_upper_1_main} with~\cref{eq:PxPy_inf_main}, we have
\begin{equation}
  \Reg_x^T + \Reg_y^T
  \leq
  22 \Adiff M
  -
  \prn[\bigg]{
  \frac{1}{8 \eta_x^{\tau_x-1}} Q_1^{\tau_x}(x)
  +
  \frac{1}{8 \eta_y^{\tau_y-1}} Q_1^{\tau_y}(y)
  }
  \per
  \label{eq:social_reg_upper_2_main}
\end{equation}
Now from \Cref{lem:Pinfty_tradeoff} we have $P_\infty^{\tau_x-1}(\ell) \geq 2^{9/2} \Adiff^2 M$ and thus
\begin{equation}
  \eta_x^{\tau_x-1}
  \leq
  \sqrt{{M_x}/{P_\infty^{\tau_x-1}(\ell)}}
  \leq
  \sqrt{{M_x}/\prn{2^{9/2} \Adiff^2 M}}
  \leq
  {1}/\prn{4.75 \Adiff}
  \per
  \n
\end{equation}
Therefore, combining \cref{eq:social_reg_upper_2_main} with the fact that $\Reg_x^T + \Reg_y^T \geq 0$ and the last inequality, we obtain
$
  Q_1^{\tau_x}(x)
  \leq
  22 \eta_x^{\tau_x-1} \Adiff M
  \leq
  5 M
  \per
$
By the same argument, we also have $Q_1^{\tau_y}(y) \leq 5 M$.
From \Cref{lem:Pinfty_tradeoff}, we also have
$
  P_\infty^{\tau_x}(\ell)
  \leq
  2 \Adiff^2 (8\sqrt{2} M + 1)
  \com\
  P_\infty^{\tau_y}(g)
  \leq 
  2 \Adiff^2 (8 \sqrt{2} M + 1)
  .
$
Finally, by plugging the above upper bounds on the squared path-length quantties $P_\infty^{\tau_x}(\ell)$, $P_\infty^{\tau_y}(g)$, $Q_1^{\tau_x}(x)$, and $Q_1^{\tau_y}(y)$ in~\cref{eq:reg_x_upper_1_main,eq:reg_y_upper_1_main}, 
we obtain
$
  \Reg_x^T
  \leq
  \tsqrt{
    32
    \brk{4 \Adiff^2 (8 \sqrt{2} M + 1) + 10 \Abardiff^2 M} M_x
  }
  \leq
  8 \Adiff
  \sqrt{
    5 \prn{M + 1} M_x
  }
  ,
$
and
$
  \Reg_y^T
  \leq
  8 \Adiff
  \sqrt{
    5 \prn{M + 1} M_y
  }
  ,
$
which completes the proof of \Cref{thm:main}.
\end{proof}

\section{Scale-Invariant Learning Dynamics for Multiplayer General-Sum Games}\label{sec:multiplayer}
This section presents scale-invariant and scale-free dynamics for multiplayer general-sum games.

\subsection{Preliminaries: Swap regret to external regret}\label{subsec:swapreg2reg} 
Here we describe the reduction method of \citet{blum07external} for swap regret minimization.
We focus on the construction for player $i$.
The reduction transforms swap regret minimization into running~$m_i$ separate external-regret algorithms, which we refer to as ``experts'' and index by actions $a \in \calA_i$.
For each $a \in \calA_i$, define the utility vector fed to expert $a$ at round $t$ by
$u_{i,a}^\t = x_i^\t(a) u_i^\t \in [- \Umax x_i^\t(a), \Umax x_i^\t(a)]^{m_i}$.
Let $y_{i,a}^\t \in \Delta_{m_i}$ denote the output distribution produced by expert $a$ at round $t$.
Using $\prn{y_{i,a}^\t}_{a \in \calA_i}$, we form a transition matrix $\Qmat_i^\t \in [0,1]^{m_i \times m_i}$ by setting its $a$-th row to be $y_{i,a}^\t$, \ie~$\Qmat_i^\t(a,\cdot)=\prn{y_{i,a}^\t}^\top$.
We then choose $x_i^\t$ as a stationary distribution of the Markov chain induced by $\Qmat_i^\t$, namely, $\prn{\Qmat_i^\t}^\top x_i^\t = x_i^\t$ (for the column vector $x_i^\t$),
and use this $x_i^\t$ as player $i$'s strategy.\looseness=-1

Let 
$
\Reg_{i,a}^T = \max_{y \in \Delta_{m_i}} \sumT \inpr{y - y_{i,a}^\t, u_{i,a}^\t}
$
for
$
\Reg_{i,a}^T(y^*) = \sumT \inpr{y^* - y_{i,a}^\t, u_{i,a}^\t}
$
be the external regret of expert $a \in \calA_i$ for player $i \in [n]$.
\citet{blum07external} showed that, under the above construction, the swap regret equals the sum of the external regrets of the experts, that is,
it holds that
$
\SwapReg_{x_i}^T(M) = \sum_{a \in \calA_i} \Reg_{i,a}^T (M(a, \cdot))
.
$

\begin{algorithm}[t]

\For{$t = 1, 2, \dots, T$}{
  Compute $y_{i,a}^\t \in \Delta_{m_i}$ by OFTRL in \eqref{eq:oftrl_multiplayer} for each expert $a \in \calA_i$. \\
  Let $\Qmat_i^\t \in [0,1]^{m_i \times m_i}$ be a matrix whose $a$-th row is $y_{i,a}^\t$, that is $\Qmat_i^\t(a, \cdot) = \prn{y_{i,a}^\t}^\top$. \\
  Let $x_i^\t$ be a stationary distribution of Markov chain $\Qmat_i^\t$, that is $\prn{\Qmat_i^\t}^\top x_i^\t = x_i^\t$. \\
  Play strategy $x_i^\t \in \Delta_{m_i}$. \\
  Observe a utility vector $u_i^\t \in [-\Umax,\Umax]^{m_i}$ where $u_i^\t(a_i) = \E_{a_{-i} \sim x_{-i}^\t} \brk*{u_i(a_i, a_{-i})}$. \\
  Let ${u}_{i,a}^\t = x_i^\t(a) {u}_i^\t \in [-x_i^\t(a) \Umax, x_i^\t(a) \Umax]^{m_i}$ for each $a \in \calA_i$.
  \\
  Compute $B^\tp = \min\set{ 2^k B^\t \colon k \in \set{0} \cup \N,\, 2^{k+1} B^\t \geq \max_{i \in [n]} \nrm{u_i^\t}_\infty}$ and clip the gradients to obtain $\prn{\ubar_{i,a}^\t}_{a \in \calA_i}$ via \cref{eq:clipped_grad}.
  \label{line:doubling_clipping}
  \\
}
\caption{
  No-swap-regret algorithm of player $i$ in multiplayer general-sum games
}
\label{alg:multiple_player_swap}
\end{algorithm}

\subsection{Proposed learning dynamic}
By the reduction above, to minimize $\SwapReg_{x_i}^T(M)$, it suffices to design an external-regret minimization algorithm that guarantees small regret ${\Reg}_{i,a}^T(M(a,\cdot))$ for each expert $a \in \calA_i$.
To this end, we use OFTRL with the log-barrier regularizer together with an adaptive learning rate.
Still, obtaining scale-invariant and scale-free fast convergence calls for additional algorithmic modifications, along with the analytical techniques introduced below.
Our complete algorithm is summarized in \Cref{alg:multiple_player_swap}, and we explain its components below.

\paragraph{Doubling clipping}
The key technique for achieving scale-invariant fast convergence is the \emph{doubling clipping} method that we introduce in this paper.
In general, two standard approaches to obtaining a scale-free guarantee in online linear optimization are: (i) leveraging an AdaHedge-type analysis~\citep{orabona18scale} as in \Cref{sec:two_player}, and (ii) clipping the losses based on the maximum gradient norm observed so far~\citep{cutkosky19artificial}.
However, as we discuss in the next section, to the best of our efforts, these approaches do not work well for swap regret minimization, where one needs to aggregate the outputs of multiple external regret minimizers.

Accordingly, as implemented in \Cref{line:doubling_clipping}, at the end of each round $t$ we compute
$
B^\tp = \min\set{ 2^k B^\t : k \in \set{0} \cup \N,\, 2^{k+1} B^\t \geq \max_{i \in [n]} \nrm{u_i^\t}_\infty}.
$
This update ensures that $B^\tp \neq B^\t$ only when the maximum gradient norm across players, $\max_{i \in [n]} \nrm{u_i^\t}_\infty$, becomes at least twice the previous value $B^\t$; in that case, $B^\tp$ is the smallest value of the form $2^k B^\t$ (for some $k \in \N$) that is at least $\max_{i \in [n]} \nrm{u_i^\t}_\infty$.
Using this $B^\t$, we compute the clipped gradients $\ubar_{i,a}^\t, \ubar_i^\t$ of $u_{i,a}^\t, u_{i}^\t$ by
\begin{equation}\label{eq:clipped_grad}
  \ubar_{i,a}^\t = \frac{B^\t}{B^\tp} u_{i,a}^\t
  \com\quad
  \ubar_{i}^\t = \frac{B^\t}{B^\tp} u_{i}^\t
  \per
\end{equation}

\paragraph{Optimistic FTRL with adaptive learning rate}
We run OFTRL with the log-barrier regularizer on the clipped gradients in~\cref{eq:clipped_grad}.
Let $U^\t = \max_{s \in [t]} \max_{i \in [n]} \nrm{u_i^\s}_\infty$.
Then, for each $a \in \calA_i$ we compute $y_{i,a}^\t \in \Delta_{m_i}$ for expert $a \in \calA_i$ of player $i$ by
\begin{equation}\label{eq:oftrl_multiplayer}
  y_{i,a}^\t 
  \!=\!
  \argmax_{y \in \Delta_{m_i}}
  \set*{
    \!
    \inpr[\bigg]{y, \ubar_{i,a}^\tm \!+\! \sum_{s=1}^{t-1} \ubar_{i,a}^\s \!}
    \!-\! 
    \frac{\phi(y)}{\eta_{i}^\t}
    \!
  }
  \com \;
  \eta_{i}^\t
  \!=\!
  \min\set[\bigg]{
    \frac{\alpha_i}{\tsqrt{\gamma (U^\tm)^2 \!+\! \sum_{j \in [n]} P_\infty^\t(u_j)}}
    ,
    \frac{\beta_i}{B^\t}
  }
\end{equation}
with $\alpha_i = m_i \sqrt{\log T}$, $\beta_i = 1/(256 \sqrt{m_i})$, and $\gamma = 8 n$\footnote{If the swap-regret upper bound is allowed to depend on $\log(\Umax/\omega)$ for $\omega = \max_{i \in [n]} \nrm{u_i^{1}}_\infty$, then it suffices to set $\gamma = 8$; see also \Cref{fn:def_omega}.}.
Here, $\eta_{i}^\t$ is the learning rate of player $i$ at round $t$, $\phi(x) = - \sum_k \log(x(k))$ is the logarithmic barrier function, and recall that $P_\infty^\t(\ubar_{i}) = \sum_{s=1}^{t-1} \nrm{\ubar_{i}^\s - \ubar_{i}^\sm}_\infty^2$.
For each $j \in [n]$, if the denominator of the learning rate $\eta_j^\t$ is zero, we set $\eta_j^\t = \infty$, and set $y_{j,a}^\t = \frac{1}{m_j} \ones$, and thus $x_i^\t = \frac{1}{m_i} \ones$ for each $i \in [n]$.
Here, we let $\ubar_{i}^{(0)} = \zeros$ for simplicity.
Note also that the learning rate can be computed using only the information observed so far.

The value $\gamma (U^\tm)^2$ in the denominator of the learning rate in \cref{eq:oftrl_multiplayer} is helpful for ensuring the stability of the Markov chain (\Cref{lem:stationary_stab}) and to ensure the stability of OFTRL (see \cref{eq:swap_two_terms_pre}), even when the scale is unknown.
The learning rate is similar to that of \cref{eq:OptHedge} in that its denominator involves the opponent's gradient path length.
Thanks to a structure similar, though slightly different, to that of two-player zero-sum games, we can prove fast convergence without knowing the scale $\Umax$.
Unlike the OFTRL with log-barrier regularizer and adaptive learning rates used in \citet{tsuchiya25corrupted}, we use the same learning rate for each expert $a \in \calA_i$. This is because, when the scale $\Umax$ is unknown, assigning expert-dependent learning rates becomes difficult (see discussion below).

\subsection{Swap regret upper bounds}
Here we present regret upper bounds for \Cref{alg:multiple_player_swap}.
Let $\mmax = \max_{i \in [n]} m_i$ denote the maximum number of actions. 
Then, the above learning dynamic guarantees the following bounds.
\begin{theorem}[Scale-invariant fast convergence to correlated equilibrium]\label{thm:indiv_swapreg}
  If every player uses \Cref{alg:multiple_player_swap}, then
    $
    \SwapReg_{x_i}^T 
    \lesssim
    \Umax n^{3/2} m^{5/2} \log T
    $
  for all $i \in [n]$.
  Consequently (by \Cref{thm:swapreg2coreq}), 
  the time-averaged history of joint play after $T$ rounds is an $O(\Umax n^{3/2} m^{5/2} \log T / T)$-approximate correlated equilibrium.
\end{theorem}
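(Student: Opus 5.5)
The proof starts from the reduction of \citet{blum07external}: $\SwapReg_{x_i}^T(M)=\sum_{a\in\calA_i}\Reg_{i,a}^T(M(a,\cdot))$. The first step splits each expert regret into the regret on the clipped utilities $\ubar_{i,a}^\t$ plus a clipping error $\sumT \inpr{y^*-y_{i,a}^\t, u_{i,a}^\t-\ubar_{i,a}^\t}$. Clipping only changes round $t$ when $B^\tp\neq B^\t$. Summed over $a$ (using $\sum_a x_i^\t(a)=1$), such a round costs at most $2\nrm{u_i^\t}_\infty\le 2U^\t$. Between doublings $B$ at least doubles, so there are at most $O(\log(\Umax/B^{1}))$ doubling rounds. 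Their costs form a geometric series dominated by the last one, which gives $O(\Umax)$ overall. This is where the $\Umax\log T$-type deterioration must be controlled. If $B^1$ is set from the first nonzero observation, the geometric sum gives $O(\Umax)$ directly without a $\log(\Umax/\omega)$ factor. I expect $\gamma=8n$ to be what absorbs the first-observation issue mentioned in the footnote.

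Next, for the clipped problem I invoke the RVU bound for OFTRL with log-barrier (\Cref{lem:oftrl_logbarrier}) with the common learning rate $\eta_i^\t$. Summing over experts gives
\[
\sum_a\Reg_{i,a}\lesssim \frac{m_i^2\log T}{\eta_i^{T}}+\sumT\eta_i^\t\sum_a\nrm{\ubar_{i,a}^\t-\ubar_{i,a}^\tm}_\infty^2-\sumT\frac{1}{\eta_i^\t}\sum_a\nrm{y_{i,a}^\tp-y_{i,a}^\t}_{y}^2 .
\]
The cap $\eta_i^\t\le\beta_i/B^\t$ guarantees the multiplicative stability required by the log-barrier lemma. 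The middle term is bounded by $\nrm{\ubar_i^\t-\ubar_i^\tm}_\infty^2$ plus $(U^\t)^2\nrm{x_i^\t-x_i^\tm}_1^2$. Via the stationary-distribution stability (\Cref{lem:stationary_stab}), made usable by the $\gamma(U^\tm)^2$ term, $\nrm{x_i^\t-x_i^\tm}_1$ is controlled by $m_i$ times the row movements $\sum_a\nrm{y_{i,a}^\t-y_{i,a}^\tm}_1$. Multilinearity gives $\nrm{u_j^\t-u_j^\tm}_\infty\le \Umax\sum_{k\neq j}\nrm{x_k^\t-x_k^\tm}_1$, so the opponents' path length $\sum_jP_\infty(u_j)$ in the learning rate is bounded by $n\Umax^2 m$ times the sum of all players' strategy path lengths.

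Then I run the stopping-time argument of \Cref{thm:main}, using \Cref{lem:Pinfty_tradeoff} with $h$ replaced by $\sum_j P_\infty(u_j)$. Let $\tau$ be the first time $\gamma (U^\tm)^2+\sum_j P^\t_\infty(u_j)$ exceeds $c\,\Umax^2 n^2 m^5\log T$, for a constant $c$. Before $\tau$ the denominator of $\eta_i$ is small, and the first term $m_i^2\log T/\eta_i$ is $\lesssim m_i\sqrt{\log T}\cdot \Umax n m^{5/2}\sqrt{\log T}$. After $\tau$, $\eta_i^\t\lesssim 1/(\Umax\sqrt{n}\,m^{3/2})$, so the negative terms for $t\ge\tau$ dominate the positive path-length terms. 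Summing swap regrets over all $i$ (each nonnegative, up to the $O(\Umax)$ clipping error) bounds the post-$\tau$ strategy path length by $O(n m^{5/2}\dots)$. Reinserting this into each individual bound yields $\Umax n^{3/2}m^{5/2}\log T$. The $\beta_i/B^\t$ branch of the minimum also has to be handled: when it is active it contributes $m_i^2\log T\cdot B^T/\beta_i\lesssim \Umax m^{5/2}\log T$.

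The main obstacle is the post-$\tau$ step. The learning rates involve $U^\tm$ and the clipped scales, which change over time, so the inequality must hold uniformly while $B^\t$ doubles. The conversion from $\nrm{x_i^\t-x_i^\tm}$ to expert movements also picks up scale-mismatched factors $x_i^\t(a)$. I would first try to take the worst case over $Q$-type path lengths via $b\sqrt z-az\le b^2/(4a)$, as in \cref{eq:social_reg_upper_1_main}, using monotonicity of $\eta_i^\t$ and $U^\t\le\Umax$ to freeze the coefficients at $\tau$.
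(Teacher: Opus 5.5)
Your outline follows the paper's architecture: the Blum--Mansour reduction, the clipping bias controlled by a geometric/telescoping sum, \Cref{lem:oftrl_logbarrier} with one shared learning rate, \Cref{lem:stationary_stab}, \Cref{lem:Pu2Px}, and summing the individual bounds using $\SwapReg_{x_i}^T\ge 0$. However, two steps fail as written.

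\textbf{Stability on jump rounds.} The cap $\eta_i^t\le\beta_i/B^t$ does not give stability on a jump round $t\in\calJ'$. There $B^{t+1}=2^kB^t$ with $k\ge 1$ possibly large, so $1/\eta_i^{t+1}\ge B^{t+1}/\beta_i$ can exceed $1/\eta_i^t$ by an arbitrary factor. The term $(1/\eta_i^{t+1}-1/\eta_i^t)\nabla\phi(y_{i,a}^t)$ in the first-order condition is then uncontrolled; both \Cref{lem:lr_diff} and \Cref{lem:diff_local_oftrl} require $t\in\calI'$. So your displayed RVU inequality, with local-norm and negative terms summed over all $t$, is not justified.

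The paper keeps these terms only on $\calI$, which also drops the round after each jump: there $\ubar_i^{t-1}\neq u_i^{t-1}$, so \Cref{lem:Pu2Px} cannot be applied to the clipped difference. On $\calJ$ it pays $2\sum_{t\in\calJ}\nrm{\ubar_{i,a}^t-\ubar_{i,a}^{t-1}}_\infty\le 32\Umax$ per expert. When relating $\mathbb{S}^{T+1}=\sum_j P_\infty^{T+1}(\ubar_j)$ to strategy movement, it discards $\sum_{t\in\calJ}\nrm{\ubar_i^t}_\infty^2\le 64\Umax^2$ (\Cref{lem:ubar_sum_jump_rounds}, using $\nrm{\ubar_i^t}_\infty\le 2B^t$). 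Both are your geometric-series argument, needed in two more places. Note also that $\eta_i^t(U^{t-1})^2\nrm{x_i^t-x_i^{t-1}}_1^2$ must be cancelled by half of the negative term at every round of $\calI$, not only after $\tau$. That cancellation, not just stability, is what $\beta_i=1/(256\sqrt{m_i})$ is for.

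\textbf{The threshold.} Your threshold does not give the claimed rate. With $\tau$ triggered at $c\,\Umax^2n^2m^5\log T$, your own pre-$\tau$ estimate is $m_i\sqrt{\log T}\cdot\Umax nm^{5/2}\sqrt{\log T}=\Umax n m_i m^{5/2}\log T$. When $m_i=m$ this exceeds the target by a factor $m/\sqrt n$. The correct bookkeeping is as follows.
\begin{itemize}
\item Summed over players, the coefficient of $\sqrt{\mathbb{S}^{T+1}}$ is $\Lambda\approx nm\sqrt{\log T}$.
\item \Cref{lem:Pu2Px} converts increments of $\mathbb{S}$ into strategy movement with factor $\Umax^2(n-1)^2$, not $n\Umax^2m$.
\item After $\tau$, the negative term has coefficient at least $\sqrt{\mathbb{S}^\tau}/(2^{11}m\alpha)$.
\end{itemize}
The $b\sqrt z-az$ residual is then of order $\Lambda^2n^2\Umax^2m\alpha/\sqrt{\mathbb{S}^\tau}$. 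Balancing it against $\Lambda\sqrt{\mathbb{S}^\tau}$ forces the threshold $\Theta(\Lambda n^2\Umax^2m\alpha)=\Theta(\Umax^2n^3m^3\log T)$, which is the paper's $\Lambda'$. With that choice your absolute-threshold route goes through. It gives $\mathbb{S}^{T+1}\lesssim\Umax^2n^3m^3\log T$, hence $\alpha_i\sqrt{\mathbb{S}^{T+1}}\lesssim\Umax n^{3/2}m^{5/2}\log T$.

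The paper avoids choosing a threshold altogether:
\begin{itemize}
\item It lower bounds the summed negative terms by $\frac{1}{\Umax^2(n-1)^2}\sum_{t\in\calI}\sqrt{\mathbb{S}^{t+1}}\sum_i\nrm{u_i^t-u_i^{t-1}}_\infty^2$. This is where $\gamma=8n$ enters, through $\gamma(U^{t-1})^2+\mathbb{S}^t\ge\mathbb{S}^{t+1}$ in \cref{eq:stab_gammaU_lower}.
\item It takes the halving time $\tau=\min\{t\in\calI:\mathbb{S}^{t+1}\ge\mathbb{S}^{T+1}/2\}$ to obtain $(\mathbb{S}^{T+1})^{3/2}/3\le\Gamma'+(\Lambda'+64\Umax^2n)\sqrt{\mathbb{S}^{T+1}}$.
\item It handles the case where no such $t\in\calI$ exists by $\mathbb{S}^{T+1}\le 128\Umax^2n$.
\end{itemize}
This self-bounding step needs only the monotonicity of $\mathbb{S}^t$, with no threshold to tune.
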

The proof can be found in \Cref{app:proof_multiplayer}.
A comparison against existing bounds can be found in \Cref{table:regret_generalsum}.
To our knowledge, this is the first scale-invariant and scale-free fast convergence to correlated equilibrium in multiplayer general-sum games.
The fast convergence becomes possible thanks to the doubling-clipping technique; see \Cref{subsec:tech_swap} for details.
It is also worth noting that the swap regret upper bound and the convergence rate does not depend on $\omega \coloneqq \max_{i \in [n]} \nrm{u_i^{1}}_\infty$\footnote{\label{fn:def_omega}More precisely, the constant $\omega$ is defined as $\omega \coloneqq \max_{i \in [n]} \nrm{u_i^{\hat{t}}}_\infty$ for $\hat{t} = \min\set{t \in [T] : \max_{i \in [n]} \nrm{u_i^\t}_\infty > 0}$, which is the first time such that $\max_{i \in [n]} \nrm{u_i^\t}_\infty > 0$ holds. This paper assumes $\hat{t} = 1$ and thus $\max_{i \in [n]} \nrm{u_i^{1}} = \omega > 0$ for simplicity of presentation.}, which is related to $\delta$ in \Cref{sec:two_player}. This also becomes possible thanks to the property of the doubling clipping.

Still, the result in \Cref{thm:indiv_swapreg} has two limitations.
The first limitation is that, compared with the best known non-scale-free swap regret upper bound of $O(n m^{5/2} \log T)$~\citep{anagnostides22uncoupled}, the dependence on $n$ is worse by a factor of $\sqrt{n}$.
This is because, to exploit the negative term in the RVU bound under $\Umax$-agnosticity, our learning rates (as in \cref{eq:oftrl_multiplayer}) depend on the gradients of all players; whether this dependence can be improved is an important direction for future work.
That said, in settings where $n$ can be regarded as a constant (in particular, two-player general-sum games), our result achieves the same convergence rate as existing learning dynamics that know $\Umax$ in advance~\citep{anagnostides22uncoupled,tsuchiya25corrupted}.

The second limitation is the lack of robustness to adversarial corruption of utilities.
As discussed in \citet[Remark 8]{tsuchiya25corrupted}, achieving such robustness requires assigning different learning rates to each expert of each player.
However, when $\Umax$ is unknown, this expert-wise tuning is difficult and resolving this issue is another important direction for future work.

\subsection{Techniques used in the swap regret analysis}\label{subsec:tech_swap}
We first discuss how the doubling clipping in \cref{eq:clipped_grad} is useful for the swap-regret analysis.
Let $\calJ'$ be the set of rounds at which a jump occurs in \Cref{line:doubling_clipping} of \Cref{alg:multiple_player_swap}, and let $\calI' = [T]\setminus \calJ'$ be its complement.
Define $\calJ = \calJ' \cup \set{t + 1 \in [T] : t \in \calJ'}$ and $\calI = [T]\setminus \calJ$.
In the analysis, we handle the jump rounds and the rounds immediately after them in $\calJ$ separately from the remaining rounds in~$\calI$.

The motivation to consider $\calI$ is that, on rounds $t \in \calI$, we can guarantee the stability of OFTRL with the log-barrier in \cref{eq:oftrl_multiplayer}, and consequently the stability of the stationary distribution $Q_i^\t$ induced by the outputs $(y_{i,a}^\t)_{a\in\calA_i}$ (\Cref{lem:stationary_stab}).
This allows us to exploit the negative term in the OFTRL bound (see, \eg~\Cref{lem:oftrl_logbarrier} and \Cref{eq:negterm_lower}).
Moreover, on rounds $t \in \calI$, we can smoothly relate the squared path length of gradients to the squared path length of strategies, which is difficult with known scale-free online learning techniques because swap regret minimization requires aggregating the outputs of multiple external regret minimizers.
These benefits come from, for example, the fact that if $t \in \calI$, then $B^\tp = B^\t = B^\tm$, and hence $\ubar_{i,a}^\t = u_{i,a}^\t$ and $\ubar_{i,a}^\tm = u_{i,a}^\tm$ (see also \Cref{lem:doubling_clipping_prop,lem:BU_relation,lem:clip_grad_upper_onestep_back}).

The regret incurred on rounds in $\calJ$ is upper bounded by $O(\Umax m_i)$.
Naively, since the number of jumps is at most $O(\log(\Umax/\omega))$ (recall that $\omega = \max_{i \in [n]} \nrm{u_i^{1}}_\infty$), and each such round contributes at most $O(\Umax)$, the regret incurred on rounds in $\calJ$ is $O(\Umax m_i \log (1/\omega))$.
This can be large when $\omega$ is small for small $\omega$.
But, using the fact that, when $B^\t$ is small the utility $u_i^\t$ is also small, we can completely remove the dependence on $\omega$.
The bias due to running OFTRL on $\ubar_{i,a}^\t$ instead of $u_{i,a}^\t$ is also bounded $O(\Umax)$, which is independent of $\omega$, thanks to the property of the doubling clipping (see \Cref{lem:BU_relation,lem:clip_grad_upper_onestep_back}).

Finally, it is worth noting that we further extend the stopping-time analysis used for two-player zero-sum games.
This development allows us to leverage the negative term without knowing the scale in general-sum games (see \Cref{app:proof_swap_subsec} for details).

\bibliography{references.bib}

\begin{thebibliography}{32}
\providecommand{\natexlab}[1]{#1}
\providecommand{\url}[1]{\texttt{#1}}
\expandafter\ifx\csname urlstyle\endcsname\relax
  \providecommand{\doi}[1]{doi: #1}\else
  \providecommand{\doi}{doi: \begingroup \urlstyle{rm}\Url}\fi

\bibitem[Anagnostides et~al.(2022{\natexlab{a}})Anagnostides, Daskalakis, Farina, Fishelson, Golowich, and Sandholm]{anagnostides22near}
Ioannis Anagnostides, Constantinos Daskalakis, Gabriele Farina, Maxwell Fishelson, Noah Golowich, and Tuomas Sandholm.
\newblock Near-optimal no-regret learning for correlated equilibria in multi-player general-sum games.
\newblock In \emph{Proceedings of the 54th Annual ACM SIGACT Symposium on Theory of Computing}, page 736^^e2^^80^^93749. Association for Computing Machinery, 2022{\natexlab{a}}.

\bibitem[Anagnostides et~al.(2022{\natexlab{b}})Anagnostides, Farina, Kroer, Lee, Luo, and Sandholm]{anagnostides22uncoupled}
Ioannis Anagnostides, Gabriele Farina, Christian Kroer, Chung-Wei Lee, Haipeng Luo, and Tuomas Sandholm.
\newblock Uncoupled learning dynamics with {$O(\log T)$} swap regret in multiplayer games.
\newblock In \emph{Advances in Neural Information Processing Systems}, volume~35, pages 3292--3304. Curran Associates, Inc., 2022{\natexlab{b}}.

\bibitem[Aumann(1974)]{aumann74subjectivity}
Robert~J. Aumann.
\newblock Subjectivity and correlation in randomized strategies.
\newblock \emph{Journal of Mathematical Economics}, 1\penalty0 (1):\penalty0 67--96, 1974.

\bibitem[Blum and Mansour(2007)]{blum07external}
Avrim Blum and Yishay Mansour.
\newblock From external to internal regret.
\newblock \emph{Journal of Machine Learning Research}, 8\penalty0 (47):\penalty0 1307--1324, 2007.

\bibitem[Bowling et~al.(2015)Bowling, Burch, Johanson, and Tammelin]{bowling15heads}
Michael Bowling, Neil Burch, Michael Johanson, and Oskari Tammelin.
\newblock Heads-up limit hold'em poker is solved.
\newblock \emph{Science}, 347\penalty0 (6218):\penalty0 145--149, 2015.

\bibitem[Cesa-Bianchi and Lugosi(2006)]{cesabianchi06prediction}
Nicolo Cesa-Bianchi and G{\'a}bor Lugosi.
\newblock \emph{Prediction, Learning, and Games}.
\newblock Cambridge University Press, 2006.

\bibitem[Cesa-Bianchi et~al.(2007)Cesa-Bianchi, Mansour, and Stoltz]{cesabianchi07improved}
Nicol{\`o} Cesa-Bianchi, Yishay Mansour, and Gilles Stoltz.
\newblock Improved second-order bounds for prediction with expert advice.
\newblock \emph{Machine Learning}, 66\penalty0 (2):\penalty0 321--352, 2007.

\bibitem[Chakrabarti et~al.(2024)Chakrabarti, Grand-Cl\'{e}ment, and Kroer]{chakrabarti24extensive}
Darshan Chakrabarti, Julien Grand-Cl\'{e}ment, and Christian Kroer.
\newblock Extensive-form game solving via blackwell approachability on treeplexes.
\newblock In \emph{Advances in Neural Information Processing Systems}, volume~37, pages 35257--35287. Curran Associates, Inc., 2024.

\bibitem[Chen et~al.(2021)Chen, Luo, and Wei]{chen21impossible}
Liyu Chen, Haipeng Luo, and Chen-Yu Wei.
\newblock Impossible tuning made possible: A new expert algorithm and its applications.
\newblock In \emph{Proceedings of Thirty Fourth Conference on Learning Theory}, volume 134, pages 1216--1259. PMLR, 2021.

\bibitem[Chen and Peng(2020)]{chen20hedging}
Xi~Chen and Binghui Peng.
\newblock Hedging in games: Faster convergence of external and swap regrets.
\newblock In \emph{Advances in Neural Information Processing Systems}, volume~33, pages 18990--18999. Curran Associates, Inc., 2020.

\bibitem[Chiang et~al.(2013)Chiang, Lee, and Lu]{chiang13beating}
Chao-Kai Chiang, Chia-Jung Lee, and Chi-Jen Lu.
\newblock Beating bandits in gradually evolving worlds.
\newblock In \emph{Proceedings of the 26th Annual Conference on Learning Theory}, volume~30, pages 210--227. PMLR, 2013.

\bibitem[Cutkosky(2019)]{cutkosky19artificial}
Ashok Cutkosky.
\newblock Artificial constraints and hints for unbounded online learning.
\newblock In \emph{Proceedings of the Thirty-Second Conference on Learning Theory}, volume~99, pages 874--894. PMLR, 2019.

\bibitem[Daskalakis et~al.(2011)Daskalakis, Deckelbaum, and Kim]{daskalakis11near}
Constantinos Daskalakis, Alan Deckelbaum, and Anthony Kim.
\newblock Near-optimal no-regret algorithms for zero-sum games.
\newblock In \emph{Proceedings of the Twenty-Second Annual ACM-SIAM Symposium on Discrete Algorithms}, page 235^^e2^^80^^93254. Society for Industrial and Applied Mathematics, 2011.

\bibitem[{FAIR} et~al.(2022){FAIR}, Bakhtin, Brown, Dinan, Farina, Flaherty, Fried, Goff, Gray, Hu, Jacob, Komeili, Konath, Kwon, Lerer, Lewis, Miller, Mitts, Renduchintala, Roller, Rowe, Shi, Spisak, Wei, Wu, Zhang, and Zijlstra]{meta22human}
Meta Fundamental AI Research Diplomacy~Team {FAIR}, Anton Bakhtin, Noam Brown, Emily Dinan, Gabriele Farina, Colin Flaherty, Daniel Fried, Andrew Goff, Jonathan Gray, Hengyuan Hu, Athul~Paul Jacob, Mojtaba Komeili, Karthik Konath, Minae Kwon, Adam Lerer, Mike Lewis, Alexander~H. Miller, Sasha Mitts, Adithya Renduchintala, Stephen Roller, Dirk Rowe, Weiyan Shi, Joe Spisak, Alexander Wei, David Wu, Hugh Zhang, and Markus Zijlstra.
\newblock Human-level play in the game of {{D}}iplomacy by combining language models with strategic reasoning.
\newblock \emph{Science}, 378\penalty0 (6624):\penalty0 1067--1074, 2022.

\bibitem[Foster and Vohra(1997)]{foster97calibrated}
Dean~P. Foster and Rakesh~V. Vohra.
\newblock Calibrated learning and correlated equilibrium.
\newblock \emph{Games and Economic Behavior}, 21\penalty0 (1):\penalty0 40--55, 1997.

\bibitem[Foster et~al.(2016)Foster, Li, Lykouris, Sridharan, and Tardos]{foster16learning}
Dylan~J Foster, Zhiyuan Li, Thodoris Lykouris, Karthik Sridharan, and Eva Tardos.
\newblock Learning in games: Robustness of fast convergence.
\newblock In \emph{Advances in Neural Information Processing Systems}, volume~29, pages 4734--4742. Curran Associates, Inc., 2016.

\bibitem[Freund and Schapire(1999)]{freund99adaptive}
Yoav Freund and Robert~E. Schapire.
\newblock Adaptive game playing using multiplicative weights.
\newblock \emph{Games and Economic Behavior}, 29\penalty0 (1):\penalty0 79--103, 1999.

\bibitem[Hart and Mas-Colell(2000)]{hart00simple}
Sergiu Hart and Andreu Mas-Colell.
\newblock A simple adaptive procedure leading to correlated equilibrium.
\newblock \emph{Econometrica}, 68\penalty0 (5):\penalty0 1127--1150, 2000.

\bibitem[Koolen and Van~Erven(2015)]{koolen15second}
Wouter~M. Koolen and Tim Van~Erven.
\newblock Second-order quantile methods for experts and combinatorial games.
\newblock In \emph{Proceedings of The 28th Conference on Learning Theory}, volume~40, pages 1155--1175. PMLR, 2015.

\bibitem[Luo and Schapire(2015)]{luo2015achieving}
Haipeng Luo and Robert~E. Schapire.
\newblock Achieving all with no parameters: {AdaNormalHedge}.
\newblock In \emph{Proceedings of The 28th Conference on Learning Theory}, volume~40, pages 1286--1304. PMLR, 2015.

\bibitem[Mhammedi et~al.(2019)Mhammedi, Koolen, and Van~Erven]{mhammedi19lipschitz}
Zakaria Mhammedi, Wouter~M Koolen, and Tim Van~Erven.
\newblock Lipschitz adaptivity with multiple learning rates in online learning.
\newblock In \emph{Proceedings of the Thirty-Second Conference on Learning Theory}, volume~99, pages 2490--2511. PMLR, 2019.

\bibitem[Morav{\v{c}}{\'\i}k et~al.(2017)Morav{\v{c}}{\'\i}k, Schmid, Burch, Lis{\`y}, Morrill, Bard, Davis, Waugh, Johanson, and Bowling]{moravvcik17deepstack}
Matej Morav{\v{c}}{\'\i}k, Martin Schmid, Neil Burch, Viliam Lis{\`y}, Dustin Morrill, Nolan Bard, Trevor Davis, Kevin Waugh, Michael Johanson, and Michael Bowling.
\newblock Deepstack: Expert-level artificial intelligence in heads-up no-limit poker.
\newblock \emph{Science}, 356\penalty0 (6337):\penalty0 508--513, 2017.

\bibitem[Munos et~al.(2024)Munos, Valko, Calandriello, Gheshlaghi~Azar, Rowland, Guo, Tang, Geist, Mesnard, Fiegel, Michi, Selvi, Girgin, Momchev, Bachem, Mankowitz, Precup, and Piot]{munos24nash}
Remi Munos, Michal Valko, Daniele Calandriello, Mohammad Gheshlaghi~Azar, Mark Rowland, Zhaohan~Daniel Guo, Yunhao Tang, Matthieu Geist, Thomas Mesnard, C\^{o}me Fiegel, Andrea Michi, Marco Selvi, Sertan Girgin, Nikola Momchev, Olivier Bachem, Daniel~J Mankowitz, Doina Precup, and Bilal Piot.
\newblock {N}ash learning from human feedback.
\newblock In \emph{Proceedings of the 41st International Conference on Machine Learning}, volume 235, pages 36743--36768. PMLR, 2024.

\bibitem[Nesterov and Nemirovskii(1994)]{nesterov94interior}
Yurii Nesterov and Arkadii Nemirovskii.
\newblock \emph{Interior-Point Polynomial Algorithms in Convex Programming}.
\newblock Society for Industrial and Applied Mathematics, 1994.

\bibitem[Orabona and P^^c3^^a1l(2018)]{orabona18scale}
Francesco Orabona and D^^c3^^a1vid P^^c3^^a1l.
\newblock Scale-free online learning.
\newblock \emph{Theoretical Computer Science}, 716:\penalty0 50--69, 2018.
\newblock Special Issue on ALT 2015.

\bibitem[Perolat et~al.(2022)Perolat, Vylder, Hennes, Tarassov, Strub, de~Boer, Muller, Connor, Burch, Anthony, McAleer, Elie, Cen, Wang, Gruslys, Malysheva, Khan, Ozair, Timbers, Pohlen, Eccles, Rowland, Lanctot, Lespiau, Piot, Omidshafiei, Lockhart, Sifre, Beauguerlange, Munos, Silver, Singh, Hassabis, and Tuyls]{perolat22mastering}
Julien Perolat, Bart~De Vylder, Daniel Hennes, Eugene Tarassov, Florian Strub, Vincent de~Boer, Paul Muller, Jerome~T. Connor, Neil Burch, Thomas Anthony, Stephen McAleer, Romuald Elie, Sarah~H. Cen, Zhe Wang, Audrunas Gruslys, Aleksandra Malysheva, Mina Khan, Sherjil Ozair, Finbarr Timbers, Toby Pohlen, Tom Eccles, Mark Rowland, Marc Lanctot, Jean-Baptiste Lespiau, Bilal Piot, Shayegan Omidshafiei, Edward Lockhart, Laurent Sifre, Nathalie Beauguerlange, Remi Munos, David Silver, Satinder Singh, Demis Hassabis, and Karl Tuyls.
\newblock Mastering the game of {Stratego} with model-free multiagent reinforcement learning.
\newblock \emph{Science}, 378\penalty0 (6623):\penalty0 990--996, 2022.

\bibitem[Rakhlin and Sridharan(2013)]{rakhlin13optimization}
Sasha Rakhlin and Karthik Sridharan.
\newblock Optimization, learning, and games with predictable sequences.
\newblock In \emph{Advances in Neural Information Processing Systems}, volume~26, pages 3066--3074. Curran Associates, Inc., 2013.

\bibitem[Swamy et~al.(2024)Swamy, Dann, Kidambi, Wu, and Agarwal]{swamy24minimaxlist}
Gokul Swamy, Christoph Dann, Rahul Kidambi, Steven Wu, and Alekh Agarwal.
\newblock A minimaximalist approach to reinforcement learning from human feedback.
\newblock In \emph{Proceedings of the 41st International Conference on Machine Learning}, volume 235, pages 47345--47377. PMLR, 2024.

\bibitem[Syrgkanis et~al.(2015)Syrgkanis, Agarwal, Luo, and Schapire]{syrgkanis15fast}
Vasilis Syrgkanis, Alekh Agarwal, Haipeng Luo, and Robert~E Schapire.
\newblock Fast convergence of regularized learning in games.
\newblock In \emph{Advances in Neural Information Processing Systems}, volume~28, pages 2989--2997. Curran Associates, Inc., 2015.

\bibitem[Tsuchiya(2025)]{tsuchiya25tight}
Taira Tsuchiya.
\newblock Tight regret upper and lower bounds for optimistic {Hedge} in two-player zero-sum games.
\newblock \emph{arXiv preprint arXiv:2510.11691}, 2025.

\bibitem[Tsuchiya et~al.(2025)Tsuchiya, Ito, and Luo]{tsuchiya25corrupted}
Taira Tsuchiya, Shinji Ito, and Haipeng Luo.
\newblock Corrupted learning dynamics in games.
\newblock In \emph{Proceedings of Thirty Eighth Conference on Learning Theory}, volume 291, pages 5506--5552. PMLR, 2025.

\bibitem[Zhang et~al.(2025)Zhang, Anagnostides, and Sandholm]{zhang25scale}
Brian~Hu Zhang, Ioannis Anagnostides, and Tuomas Sandholm.
\newblock Scale-invariant regret matching and online learning with optimal convergence: Bridging theory and practice in zero-sum games.
\newblock \emph{arXiv preprint arXiv:2510.04407}, 2025.

\end{thebibliography}
\bibliographystyle{plainnat}

\newpage
\appendix

\crefalias{section}{appendix}
\crefalias{subsection}{appendix}
\crefalias{subsubsection}{appendix}

\paragraph{Additional notation for appendix}
In the appendix, we additionally use the following notation.
Let $D_\psi(x,y)$ denote the Bregman divergence between $x$ and $y$ induced by a differentiable convex function $\psi$, that is,
$
  D_{\psi}(x, y) 
  = 
  \psi(x) - \psi(y) - \inpr{\nabla \psi(y),x - y}
$.
We write $\nrm{\cdot}_*$ for the dual norm associated with a norm $\nrm{\cdot}$.
We use $\nrm{h}_{x,f} = \sqrt{h^\top \nabla^2 f(x) h}$
and $\nrm{h}_{*,x,f} = \sqrt{h^\top \prn{\nabla^2 f(x)}^{-1} h}$ to denote the local norm and its dual norm of a vector $h$ at a point $x$ with respect to a convex function~$f$, respectively.

\section{Regret Analysis of Optimistic Follow-the-Regularized-Leader}\label{app:proof_oftrl}

This section provides regret upper bounds of optimistic follow-the-regualrized leader (OFTRL) for online linear optimization.
First, we provide a general analysis of OFTRL. 
Then, we present RVU bounds for OFTRL with the negative Shannon entropy and the log-barrier regularizer, 
which are used respectively in two-player zero-sum games in \Cref{app:proof_two_player} and multiplayer general-sum games in \Cref{app:proof_multiplayer}.
The notation in this section follows that of the online linear optimization setup described in \Cref{subsec:olo}.

\subsection{Common analysis}
The following lemma provides a regret bound for the optimistic follow-the-regularized-leader (OFTRL) (which is adpoted from \citealt[Lemma 16]{tsuchiya25corrupted}).
\begin{lemma}\label[lemma]{lem:oftrl_bound}
  Let $\calK \subseteq \R^d$ be a nonempty closed and bounded convex set.
  Let
  $w^\t \in \argmin_{x \in \calK} \set{\inpr{x, m^\t + \sum_{s=1}^{t-1} h^\s} + \psi^\t(x)}$ be the output of OFTRL at round $t$.
  Then, for any $w^* \in \calK$,
  \begin{align}
    &\sumT \inpr{w^\t - w^*, h^\t}
    \leq
    \psi^{T+1}(w^*) - \psi^1(w^1)
    +
    \sumT \prn*{ \psi^\t(w^\tp) - \psi^\tp(w^\tp)}
    \nn 
    &\qquad+
    \sumT 
    \prn*{
      \inpr{w^\t - w^\tp, h^\t - m^\t}
      -
      D_{\psi^\t}(w^\tp, w^\t)
    }
    +
    \inpr{w^* - w^{T+1}, m^{T+1}}
    \per 
    \label{eq:oftrl_bound}
  \end{align}
\end{lemma}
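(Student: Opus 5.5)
The plan is to use the standard decomposition of FTRL regret through the ``be-the-leader'' potential, adapted to optimism. Define the optimistic objective $F^\t(x) = \inpr{x, m^\t + \sum_{s=1}^{t-1} h^\s} + \psi^\t(x)$, so that $w^\t \in \argmin_{x \in \calK} F^\t(x)$. Also define the non-optimistic cumulative objective $G^\t(x) = \inpr{x, \sum_{s=1}^{t-1} h^\s} + \psi^\t(x)$, so that $F^\t(x) = G^\t(x) + \inpr{x, m^\t}$. Closedness and boundedness of $\calK$, together with lower semicontinuity of $\psi^\t$, guarantee that the minimizers exist.

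First, write the comparator loss via $G^{T+1}(w^*) = \inpr{w^*, \sum_{t=1}^T h^\t} + \psi^{T+1}(w^*)$. Since $w^{T+1}$ minimizes $F^{T+1}$, we have $G^{T+1}(w^*) + \inpr{w^*, m^{T+1}} \geq F^{T+1}(w^{T+1})$. This is where the term $\inpr{w^* - w^{T+1}, m^{T+1}}$ comes from.

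Next, telescope $F^{T+1}(w^{T+1}) - F^1(w^1) = \sumT \prn{F^\tp(w^\tp) - F^\t(w^\t)}$. Each increment splits as
\[
F^\tp(w^\tp) - F^\t(w^\t) = \brk{F^\tp(w^\tp) - F^\t(w^\tp)} + \brk{F^\t(w^\tp) - F^\t(w^\t)}.
\]
The first bracket equals $\inpr{w^\tp, h^\t + m^\tp - m^\t} + \psi^\tp(w^\tp) - \psi^\t(w^\tp)$. For the second bracket, use first-order optimality of $w^\t$ for the convex function $F^\t$ over $\calK$. This gives $F^\t(w^\tp) - F^\t(w^\t) \geq D_{\psi^\t}(w^\tp, w^\t)$, because the linear part has zero Bregman divergence and $\inpr{\nabla F^\t(w^\t), w^\tp - w^\t} \geq 0$. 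If $\psi^\t$ is not differentiable at the boundary, one can use a subgradient version or interpret $D$ accordingly. For the entropy and log-barrier cases the minimizers are interior, or the standard convention applies.

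Finally, combine the pieces. The regret is $\sumT \inpr{w^\t, h^\t} - \inpr{w^*, \sum_t h^\t}$. Substituting the lower bound on the comparator term produces $\sumT \inpr{w^\t, h^\t} - F^{T+1}(w^{T+1}) + \psi^{T+1}(w^*) + \inpr{w^*, m^{T+1}}$. Expanding $F^{T+1}(w^{T+1})$ through the telescoping sum and $F^1(w^1) = \inpr{w^1, m^1} + \psi^1(w^1)$ then regroups the linear terms. For each $t$ we get $\inpr{w^\t - w^\tp, h^\t} + \inpr{w^\tp, m^\t} - \inpr{w^\tp, m^\tp}$. Summing, the sequence $\inpr{w^\tp, m^\tp}$ telescopes against $\inpr{w^\t, m^\t}$, which leaves $\sumT \inpr{w^\t - w^\tp, h^\t - m^\t}$ plus the boundary terms. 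These boundary terms cancel with $\inpr{w^1, m^1}$ and produce $-\inpr{w^{T+1}, m^{T+1}}$. The regularizer terms give $\psi^{T+1}(w^*) - \psi^1(w^1) + \sumT (\psi^\t(w^\tp) - \psi^\tp(w^\tp))$, and the Bregman terms enter with a minus sign, yielding \eqref{eq:oftrl_bound}.

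The only delicate step is this index bookkeeping: making sure the optimistic terms $m^\t$ telescope correctly so that exactly $\inpr{w^* - w^{T+1}, m^{T+1}}$ remains. That step is routine; everything else is the standard FTRL argument.
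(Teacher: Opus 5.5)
Your proof is correct: bounding the comparator term via optimality of $w^{T+1}$ for $F^{T+1}$, splitting each increment $F^{t+1}(w^{t+1}) - F^{t}(w^{t})$ and using first-order optimality of $w^{t}$ to extract $D_{\psi^{t}}(w^{t+1}, w^{t})$, and telescoping the $\langle w^{t}, m^{t}\rangle$ terms gives exactly the stated bound, including the boundary term $\langle w^* - w^{T+1}, m^{T+1}\rangle$. The paper does not reprove this lemma but imports it from Tsuchiya et al.\ (2025, Lemma 16), and your argument is the standard optimistic-FTRL telescoping proof that underlies it.
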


\subsection{Analysis for optimistic Hedge}

\Cref{lem:oftrl_bound} yields the following regret upper bound for OFTRL with negative Shannon entropy, which corresponds to optimistic Hedge.
\begin{lemma}[RVU bound for optimistic Hedge]\label[lemma]{lem:oftrl_shannon}
  Let
  $\psi^\t(x) = - \frac{1}{\eta^\t} H(x)$ for
  $H(x) = \sum_{k=1}^d x(k) \log(1/x(k))$ be the negative Shannon entropy regularizer with nonincreasing learning rate $\eta^\t$
  and 
  $w^\t \in \argmin_{x \in \Delta_d} \set{\inpr{x, m^\t + \sum_{s=1}^{t-1} h^\s} + \psi^\t(x)}$ be the output of OFTRL at round $t$.
  Then, for any $w^* \in \Delta_d$,
  \begin{align}
    \sumT \inpr{w^\t - w^*, h^\t}
    &\leq
    \frac{\log d}{\eta^{T+1}}
    +
    \sumT
    \min\set*{2 \nrm{h^\t - m^\t}_\infty, \eta^\t \nrm{h^\t - m^\t}_\infty^2}
    \nn
    &\qquad\qquad\qquad- 
    \sumT 
    \frac{1}{4 \eta^\t} \nrm{w^\t - w^\tp}_1^2  
    + 2 \nrm{m^{T+1}}_\infty 
    \per 
    \label{eq:oftrl_bound_shannon_rvu_pre}
  \end{align}
\end{lemma}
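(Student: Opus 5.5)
The plan is to specialize \Cref{lem:oftrl_bound} to $\psi^\t = -H/\eta^\t$ on $\Delta_d$ and bound each of its four groups of terms separately.

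\emph{Regularizer terms.} Since $-H \le 0$ and $H \le \log d$, we have $\psi^{T+1}(w^*) - \psi^1(w^1) \le 0 + H(w^1)/\eta^1$. Also
\[
\psi^\t(w^\tp) - \psi^\tp(w^\tp) = H(w^\tp)\bigl(1/\eta^\tp - 1/\eta^\t\bigr) \le \log d\,\bigl(1/\eta^\tp - 1/\eta^\t\bigr),
\]
because $\eta^\t$ is nonincreasing. Summing over $t$ telescopes to $\log d\,(1/\eta^{T+1} - 1/\eta^1)$. Adding $H(w^1)/\eta^1 \le \log d/\eta^1$ then gives the first term $\log d/\eta^{T+1}$.

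\emph{Final term.} By H\"older's inequality, $\inpr{w^* - w^{T+1}, m^{T+1}} \le \nrm{w^*-w^{T+1}}_1 \nrm{m^{T+1}}_\infty \le 2\nrm{m^{T+1}}_\infty$.

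\emph{Stability terms.} This is the main step. Pinsker's inequality, i.e.\ $1$-strong convexity of $-H$ with respect to $\nrm{\cdot}_1$ on the simplex, gives $D_{\psi^\t}(w^\tp,w^\t) \ge \frac{1}{2\eta^\t}\nrm{w^\tp-w^\t}_1^2$. Split the Bregman term into two halves, each $\ge \frac{1}{4\eta^\t}\nrm{\cdot}_1^2$. One half is kept as the negative term $-\frac{1}{4\eta^\t}\nrm{w^\t-w^\tp}_1^2$. The other half absorbs the inner product: writing $r = \nrm{w^\t-w^\tp}_1$,
\[
\inpr{w^\t - w^\tp, h^\t-m^\t} - \frac{1}{4\eta^\t} r^2 \le \nrm{h^\t-m^\t}_\infty r - \frac{r^2}{4\eta^\t}.
\]
This is at most $\eta^\t\nrm{h^\t-m^\t}_\infty^2$ by $br - ar^2 \le b^2/(4a)$. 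Separately, since $r \le 2$ and the subtracted term is nonnegative, it is also at most $2\nrm{h^\t-m^\t}_\infty$. Taking the minimum of the two bounds yields the $\min\{\cdot,\cdot\}$ term.

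\emph{Degenerate learning rates.} The delicate point is to make sure the Pinsker step and the telescoping remain valid when $\eta^\t = \infty$. In that case $w^\t$ is uniform or an argmin vertex, and all terms involving $1/\eta$ vanish consistently, so I would handle it by the convention $1/\infty = 0$ and a limiting argument. Combining the four bounds gives \cref{eq:oftrl_bound_shannon_rvu_pre}.
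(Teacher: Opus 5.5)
Your proposal is correct and follows essentially the same route as the paper: specialize \Cref{lem:oftrl_bound}, bound the regularizer terms using $0 \le H \le \log d$ and telescoping, bound the final term by H\"older, and handle the stability term via $(1/\eta^\t)$-strong convexity in $\nrm{\cdot}_1$. As in the paper, you split the Bregman term in half and then apply $br - ar^2 \le b^2/(4a)$ together with $r \le 2$. Your extra remark on $\eta^\t = \infty$ is not in the paper's proof; the convention $1/\infty = 0$ already makes every step go through directly, so no limiting argument is needed.
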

\begin{proof}
We will upper bound the RHS of \cref{eq:oftrl_bound} in \Cref{lem:oftrl_bound}.
Since $H(x) \leq \log d$ for all $x \in \Delta_d$,
we have
\begin{align}
  \psi^{T+1}(w^*) 
  - \psi^1(w^1)
  +
  \sumT \prn*{ \psi^\t(w^\tp) - \psi^{T+1}(w^\tp)}
  \leq 
  \frac{\log d}{\eta^1}
  +
  \sumT \prn*{\frac{1}{\eta^\tp} - \frac{1}{\eta^\t}} \log d
  =
  \frac{\log d}{\eta^{T+1}}
  \per
  \n
\end{align}
Since $\psi^\t$ is $(1/\eta^\t)$-strongly convex with respect to~$\nrm{\cdot}_1$, we also have 
\begin{align}
  &
  \inpr{w^\t - w^\tp, h^\t - m^\t}
  -
  D_{\psi^\t}(w^\tp, w^\t)
  \nn
  &\leq 
  \nrm{w^\t - w^\tp}_1 \nrm{h^\t - m^\t}_\infty
  -
  \frac{1}{2 \eta^\t} \nrm{w^\t - w^\tp}_1^2
  \nn 
  &=
  \nrm{w^\t - w^\tp}_1 \nrm{h^\t - m^\t}_\infty
  -
  \frac{1}{4\eta^\t} \nrm{w^\t - w^\tp}_1^2
  -
  \frac{1}{4\eta^\t} \nrm{w^\t - w^\tp}_1^2
  \nn 
  &\leq 
  \min\set*{2 \nrm{h^\t - m^\t}_\infty, \eta^\t \nrm{h^\t - m^\t}_\infty^2}
  - 
  \frac{1}{4 \eta^\t} \nrm{w^\t - w^\tp}_1^2
  \com 
  \n
\end{align}
where the first inequality follows from H\"older's inequality and the $(1/\eta^\t)$-strong convexity of $\psi^\t$ with respect to $\nrm{\cdot}_1$, and the last inequality follows from $b \sqrt{z} - a z \leq b^2 / (4a)$ for $a > 0, b \geq 0$, and $z \geq 0$.
Combining \Cref{lem:oftrl_bound} with the three inequalities completes the proof of \Cref{lem:oftrl_shannon}.
\end{proof}

The following lemma, which follows from \Cref{lem:oftrl_shannon} and by choosing an adaptive learning rate, is a variant of well-known upper bounds in the literature of scale-free online learning (see, \eg~\citealp{orabona18scale}).
The difference is that we use OFTRL rather than FTRL here, and we keep the negative term in order to derive an RVU bound.
\begin{lemma}[RVU bound for optimistic Hedge with AdaHedge-type learning rate]\label[lemma]{lem:adahedge_opt_neg}
Suppose that the same assumptions as \Cref{lem:oftrl_shannon} hold.
We also assume that the learning rate is given by $\eta^\t = {\sqrt{c^2 / \sum_{s=1}^{t-1} \prn{ \nrm{h^\s - m^\s}_\infty^2 + \nu^\s } }}$ for some constant $c \geq 2$ and nonnegative reals $\nu^1, \dots, \nu^T$, and let $m^{T+1} = \zeros$.
Then, for any $w^* \in \Delta_d$,
\begin{equation}
  \sumT \inpr{w^\t - w^*, h^\t}
  \leq
  \prn*{\frac{\log d}{c} + 2 \sqrt{2} c}
  \sqrt{ \sumT \prn*{ \nrm{h^\t - m^\t}_\infty^2 + \nu^\t } }
  - 
  \sumT 
  \frac{1}{4 \eta^\t} \nrm{w^\t - w^\tp}_1^2  
  \per 
  \n
\end{equation}
In particular, when $c = c^* \coloneqq \max\set{2, \sqrt{\log d} / 2^{3/4}}$, 
for any $w^* \in \Delta_d$,
\begin{equation}
  \sumT \inpr{w^\t - w^*, h^\t}
  \leq
  \sqrt{ \sumT \prn*{ \nrm{h^\t - m^\t}_\infty^2 + \nu^\t } \log_+(d)}
  - 
  \sumT 
  \frac{1}{4 \eta^\t} \nrm{w^\t - w^\tp}_1^2  
  \com
  \n
\end{equation}
where we defined
\begin{equation}
  \log_+(d)
  =
  \begin{cases}
    ({\log (d)}/{2} + 4 \sqrt{2})^2 & \mbox{if} \ \log d \leq 2^{7/2}  \\
    2^{7/2} \log d & \mbox{if} \ \log d > 2^{7/2}
  \end{cases}
  =
  O(\log d)
  \per
  \n
\end{equation}
\end{lemma}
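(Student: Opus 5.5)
The plan is to start from \Cref{lem:oftrl_shannon}, taking $m^{T+1} = \zeros$ so that the last term $2\nrm{m^{T+1}}_\infty$ disappears. Write $S^\t = \sum_{s=1}^{t-1}\prn{\nrm{h^\s - m^\s}_\infty^2 + \nu^\s}$, so that $\eta^\t = c/\sqrt{S^\t}$ and $S^{T+1}$ is the full sum $\sumT \prn{\nrm{h^\t - m^\t}_\infty^2 + \nu^\t}$. Since $S^\t$ is nondecreasing in $t$, $\eta^\t$ is nonincreasing, so the lemma applies. Its first term then equals $\frac{\log d}{\eta^{T+1}} = \frac{\log d}{c}\sqrt{S^{T+1}}$, and the negative term $-\sumT \frac{1}{4\eta^\t}\nrm{w^\t - w^\tp}_1^2$ is carried over unchanged. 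What remains is to bound the stability sum
\[
\sumT \min\set*{2\delta_t,\ \eta^\t \delta_t^2}, \qquad \delta_t = \nrm{h^\t - m^\t}_\infty,
\]
by $2\sqrt{2}\,c\sqrt{S^{T+1}}$.

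This is the standard AdaHedge/scale-free argument of \citet{orabona18scale}, and it is the main obstacle, because $\eta^\t$ does not include the current term $\delta_t^2$. Write $a_t = \delta_t^2 + \nu^t \geq \delta_t^2$, so that $S^{t+1} = S^t + a_t$. Bound each summand by $\min\set{2\sqrt{a_t},\ c\,a_t/\sqrt{S^t}}$, with the convention that the second entry is $\infty$ when $S^t = 0$. Split into two cases:
\begin{itemize}
\item If $a_t \leq S^t$, then $S^{t+1} \leq 2S^t$, so $c\,a_t/\sqrt{S^t} \leq \sqrt{2}\,c\,a_t/\sqrt{S^{t+1}}$.
\item If $a_t > S^t$, then $S^{t+1} < 2a_t$, so $2\sqrt{a_t} = 2a_t/\sqrt{a_t} \leq 2\sqrt{2}\,a_t/\sqrt{S^{t+1}} \leq \sqrt{2}\,c\,a_t/\sqrt{S^{t+1}}$, using $c \geq 2$.
\end{itemize}
In either case the summand is at most $\sqrt{2}\,c\,a_t/\sqrt{S^{t+1}}$. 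The elementary inequality $\sum_t a_t/\sqrt{\sum_{s\le t} a_s} \leq 2\sqrt{\sum_t a_t}$, proved by comparing each term to $2(\sqrt{S^{t+1}} - \sqrt{S^t})$, then gives a total of $2\sqrt{2}\,c\sqrt{S^{T+1}}$. Rounds with $a_t = 0$ contribute zero, which covers the degenerate case $\eta^\t = \infty$. This proves the first display.

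For the particular choice $c = c^*$, it remains to evaluate $f(c) = \log d/c + 2\sqrt{2}c$. The unconstrained minimizer is $c = \sqrt{\log d}/2^{3/4}$, at which $f = 2\cdot 2^{3/4}\sqrt{\log d}$, whose square is $2^{7/2}\log d$. This case applies when $\sqrt{\log d}/2^{3/4} \geq 2$, that is, $\log d \geq 2^{7/2}$. Otherwise $c^* = 2$, and $f(2) = \log d/2 + 4\sqrt{2}$, whose square matches the first case of $\log_+(d)$. Hence $f(c^*)\sqrt{S^{T+1}} = \sqrt{S^{T+1}\log_+(d)}$, which is the second display.
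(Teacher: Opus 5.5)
Your proof is correct and follows essentially the same route as the paper: start from \Cref{lem:oftrl_shannon}, read off $\log d/\eta^{T+1} = (\log d/c)\sqrt{S^{T+1}}$, bound each stability term by $\sqrt{2}\,c$ times the current increment divided by the square root of the cumulative sum including round $t$, finish with $\sum_t z_t/\sqrt{\sum_{s\le t} z_s} \le 2\sqrt{\sum_t z_t}$, and then evaluate $f(c^*)$. The differences are cosmetic: the paper gets the per-round bound from the harmonic-mean inequality $\min\{A,B\} \le 2/(1/A+1/B)$ after dropping the $\nu^s$ from the learning-rate denominator (and re-adding them at the end), whereas you use a two-case split on $a_t \le S^t$ versus $a_t > S^t$ and keep $\nu^t$ inside $a_t$ throughout.
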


\Cref{lem:adahedge_opt_neg} immediate yields \Cref{lem:adahedge_opt_neg_mainbody} since
it holds that $\log_+(d) \leq 32 \max\set{4, \log d / 2^{3/2}}$.
\begin{proof}
We will upper bound the RHS of \cref{eq:oftrl_bound_shannon_rvu_pre}.
We denote $\Delta^\t = \nrm{h^\t - m^\t}_\infty^2$ for simplicity.
Then, 
\begin{align}
  \min\set*{2 \Delta^\t, \eta^\t (\Delta^\t)^2}
  &\leq
  \min\set*{2 \Delta^\t, \frac{c (\Delta^\t)^2}{\sqrt{\sum_{s=1}^{t-1} (\Delta^\s)^2}}}
  =
  \sqrt{
  \min\set*{4 (\Delta^\t)^2, \frac{c^2 (\Delta^\t)^4}{\sum_{s=1}^{t-1} (\Delta^\s)^2}}
  }
  \nn
  &\leq
  \sqrt{
  \frac{2}{\frac{1}{4 (\Delta^\t)^2} + \frac{\sum_{s=1}^{t-1} (\Delta^\s)^2}{c^2 (\Delta^\t)^4}}
  }
  =
  \frac{\sqrt{2} c (\Delta^\t)^2}{ \sqrt{ (c^2 / 4) (\Delta^\t)^2 + \sum_{s=1}^{t-1} (\Delta^\s)^2} }
  \per
  \n
\end{align}
From this inequality, the second term in \cref{eq:oftrl_bound_shannon_rvu_pre} is evaluated as
\begin{equation}
  \sumT \min\set*{2 \Delta^\t, \eta^\t (\Delta^\t)^2}
  \leq
  \sumT 
  \frac{\sqrt{2} c (\Delta^\t)^2}{ \sqrt{ (c^2 / 4) (\Delta^\t)^2 + \sum_{s=1}^{t-1} (\Delta^\s)^2} }
  \leq
  2 \sqrt{2} c
  \sqrt{\sumT (\Delta^\t)^2}
  \com
\end{equation}
where the last inequality follows from the assumption that $c \geq 2$ 
and the fact that
$\sumT z^\t / \sqrt{\sum_{s=1}^t z^\s} \leq 2 \sqrt{\sumT z^\t}$ for $z^{1}, \dots, z^\T \geq 0$.
Therefore, \cref{eq:oftrl_bound_shannon_rvu_pre} implies
\begin{equation}
  \sumT \inpr{w^\t - w^*, h^\t}
  \leq
  \prn*{\frac{\log d}{c} + 2 \sqrt{2} c}
  \sqrt{ \sumT \prn*{ \nrm{h^\t - m^\t}_\infty^2 + \nu^\t } }
  - 
  \sumT 
  \frac{1}{4 \eta^\t} \nrm{w^\t - w^\tp}_1^2  
  \com
  \label{eq:oftrl_bound_shannon_rvu_pre_2}
\end{equation}
which is the first desired inequality.

Now let $f(c) = \frac{\log d}{c} + 2 \sqrt{2} c$ for $c \geq 2$.
Then, $c^* = \max\set{2, \sqrt{\log d} / 2^{3/4}}$ minimizes $f$ and the optimal value satisfies
\begin{equation}
  f(c^*)
  =
  \begin{cases}
    {\log (d)}/{2} + 4 \sqrt{2} & \mbox{if} \ \log d \leq 2^{7/2} \\
    2^{7/4} \sqrt{\log d} & \mbox{if} \ \log d > 2^{7/2} 
  \end{cases}
  \leq
  \begin{cases}
    8 \sqrt{2} & \mbox{if} \ \log d \leq 2^{7/2} \\
    2^{7/4} \sqrt{\log d} & \mbox{if} \ \log d > 2^{7/2} 
  \end{cases}
  =
  O(\sqrt{\log d})
  \per
  \n
\end{equation}
Since we have $\log_+(d) = (f(c^*))^2$, 
combining this with \cref{eq:oftrl_bound_shannon_rvu_pre_2} gives the second desired upper bound, and we complete the proof.
\end{proof}

\subsection{Analysis for OFTRL with log-barrier regularization}
Here we provide RVU bounds for OFTRL with log-barrier regularization, slightly generalizing RVU bounds shown in \citet{tsuchiya25corrupted}.
The following lemma is a generalized variant of \citet[Lemma 23]{tsuchiya25corrupted}.
\begin{lemma}[RVU bound for OFTRL with self-concordant barrier and adaptive learning rate]\label[lemma]{lem:oftrl_selfconcordant}
  Let $\calK \subseteq \R^d$ be a nonempty closed convex set with a diameter $D = \max_{v, w \in \calK} \nrm{v - w}$ for a norm $\nrm{\cdot}$.
  Let $\phi$ be a $\vartheta$-self-concordant barrier for $\calK$
  and 
  $\psi^\t(w) = \frac{1}{\eta^\t} \phi(w)$ be a regularizer with nonincreasing and nonnegative learning rate $\prn{\eta^\t}_{t=1}^T$.
  For this $\psi^\t$, consider the OFTRL update
  $w^\t \in \argmin_{w \in \calK} \set{\inpr{w, m^\t + \sum_{s=1}^{t-1} h^\s} + \psi^\t(w)}$.
  Suppose that for some $\calI \in [T]$, the sequence of iterates $\prn{w^\t}_{t=1}^T$ satisfies
  $\nrm{w^\tp - w^\t}_{w^\t,\phi} \leq 1/2$ for all $t \in \calI$. 
  Then, for any $w^* \in \calK$,
  \begin{align}
    \sumT \inpr{w^\t - w^*, h^\t}
    &\leq
    \frac{\vartheta \log T}{\eta^\Tp}
    +
    4 \sum_{t \in \calI}
    \eta^\t \nrm{h^\t - m^\t}_{*,w^\t,\phi}^2
    \nn
    &\qquad- 
    \sum_{t \in \calI}
    \frac{1}{16 \eta^\t} \nrm{w^\tp - w^\t}_{w^\t, \phi}^2
    + 
    D \sum_{t \not\in \calI} \nrm{h^\t - m^\t}_*
    +
    3 D L
    \com 
    \n
  \end{align}
  where 
  $L = \max \set{ \max_{t \in [T]} \nrm{h^\t}_*,  \max_{t \in [T+1]} \nrm{m^\t}_*  }$.
\end{lemma}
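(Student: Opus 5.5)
Our starting point is \Cref{lem:oftrl_bound} with $\psi^\t = \phi/\eta^\t$, and we bound each term on its right-hand side separately.

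\emph{Regularizer terms.} The quantity $\psi^{T+1}(w^*) - \psi^1(w^1) + \sumT (\psi^\t(w^\tp) - \psi^\tp(w^\tp))$ cannot be used at $w^*$ directly, because $\phi$ blows up at the boundary. We therefore first compare $w^*$ with the shrunk point $w^*_T = (1-1/T) w^* + (1/T) w^1$. By the standard property of $\vartheta$-self-concordant barriers, $\phi(w^*_T) - \phi(w^1) \leq \vartheta \log T$, where we may take $w^1$ to be the minimizer of $\phi$ (so that $\psi^1(w^1)$ is minimal). The comparator change costs $\sumT \inpr{w^*_T - w^*, h^\t} \leq (1/T)\sumT \nrm{w^1 - w^*}\nrm{h^\t}_* \leq DL$. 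Replacing $m^{T+1}$-type boundary terms, $\inpr{w^* - w^{T+1}, m^{T+1}} \leq DL$. For the monotone learning-rate terms, $\psi^\t(w^\tp) - \psi^\tp(w^\tp) = (1/\eta^\t - 1/\eta^\tp)\phi(w^\tp)$. We shift $\phi$ by the constant $\phi(w^1) = \min \phi$ (this does not change the iterates), which makes $\phi \geq 0$. These terms are then nonpositive, and the telescoped leading term becomes $(\phi(w^*_T) - \phi(w^1))/\eta^{T+1} \leq \vartheta\log T/\eta^{T+1}$. A remaining constant of order $DL$ comes from the comparator shift, giving the $3DL$ total.

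\emph{Stability terms for $t \in \calI$.} Since $\nrm{w^\tp - w^\t}_{w^\t,\phi} \leq 1/2$, the self-concordance lower bound on the Bregman divergence gives
\[
D_{\phi}(w^\tp, w^\t) \geq \rho(r) \geq r^2/4 \quad\text{for } r = \nrm{w^\tp - w^\t}_{w^\t,\phi} \leq 1/2, \qquad \rho(r) = r - \log(1+r).
\]
Hence $D_{\psi^\t}(w^\tp,w^\t) \geq r^2/(4\eta^\t)$. Applying Hölder's inequality in the local norm, $\inpr{w^\t - w^\tp, h^\t - m^\t} \leq r \nrm{h^\t - m^\t}_{*,w^\t,\phi}$. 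We split $r^2/(4\eta^\t)$ into a part $\tfrac{3}{16} r^2/\eta^\t$ used to absorb the linear term, which by $b r - a r^2 \leq b^2/(4a)$ leaves at most $\tfrac{4}{3}\eta^\t\nrm{\cdot}_*^2 \leq 4\eta^\t \nrm{\cdot}^2_*$, and a part $\tfrac{1}{16} r^2/\eta^\t$ kept as the negative term.

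\emph{Rounds $t \notin \calI$.} Here we simply drop the nonnegative $D_{\psi^\t}$ and bound $\inpr{w^\t - w^\tp, h^\t - m^\t} \leq D\nrm{h^\t - m^\t}_*$.

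\emph{Main obstacle.} The delicate step is the regularizer bookkeeping. Handling the boundary blow-up of $\phi$ together with the time-varying $\eta^\t$ requires normalizing $\phi$ by its minimum value so that the terms $(1/\eta^\t - 1/\eta^\tp)\phi(w^\tp)$ are nonpositive. This must be done while keeping the comparator shift cost at $O(DL)$ and the leading term at exactly $\vartheta\log T/\eta^{T+1}$.
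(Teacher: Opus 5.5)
Your proposal is correct and follows essentially the paper's route: apply \Cref{lem:oftrl_bound}, bound the per-round stability term on $\calI$ via self-concordance and off $\calI$ by H\"older with the diameter, and handle the barrier blow-up with a shrunk comparator, which is the bookkeeping the paper inherits from the cited Lemma~23 of Tsuchiya et al. Your $r-\log(1+r)\geq r^2/4$ lower bound on $D_\phi$ gives constants within the stated $4$ and $1/16$. One point needs tightening: $w^1$ equals $\argmin_{\calK}\phi$ only when $m^1=\zeros$, which is true in the paper's applications but not in the lemma as stated, so in general you should shrink $w^*$ toward $\argmin_{\calK}\phi$ and subtract $\min_{\calK}\phi$ instead. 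The comparator shift then still costs at most $DL$, the term $-\psi^1(w^1)$ becomes nonpositive, and the total constant is $2DL\leq 3DL$.
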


\begin{proof}
The proof mostly follows that of \citet[Lemma 23]{tsuchiya25corrupted}, in which we upper bound the RHS of \cref{eq:oftrl_bound}.
One can replace Eq.~(12) in \citet{tsuchiya25corrupted} with the following inequality to see that the above lemma indeed holds:
if $t \in \calI$ we have
\begin{equation}
  \inpr{w^\t - w^\tp, h^\t - m^\t} - D_{\psi^\t}(w^\tp, w^\t)
  \leq
  4 \eta^\t \nrm{h^\t - m^\t}_{*,w^\t,\phi}^2 - \frac{1}{16 \eta^\t} \nrm{w^\tp - w^\t}_{w^\t, \phi}^2 
  \com
  \n
\end{equation}
and otherwise 
\begin{equation}
  \inpr{w^\t - w^\tp, h^\t - m^\t} - D_{\psi^\t}(w^\tp, w^\t)
  \leq
  \nrm{w^\t - w^\tp} \nrm{h^\t - m^\t}_*
  \leq
  D \nrm{h^\t - m^\t}_*
  \com
  \n
\end{equation}
where the first inequality follows from H\"{o}lder's inequality.
\end{proof}

The following lemma is a direct consequence of the above lemma and is a variant of \citet[Lemma 24]{tsuchiya25corrupted}:
\begin{restatable}[RVU bound for OFTRL with log-barrier regularizer and adaptive learning rate]{lemma}{lemoftrllogbarrier}\label[lemma]{lem:oftrl_logbarrier}
  Let
  $\psi^\t(w) = \frac{1}{\eta^\t} \phi(w)$ for $\phi(w) = - \sum_{k=1}^d \log(w(k))$ be the logarithmic barrier regularizer with nonincreasing learning rate $\eta^\t$
  and 
  $w^\t \in \argmin_{w \in \Delta_d} \set{\inpr{w, m^\t + \sum_{s=1}^{t-1} h^\s} + \psi^\t(w)}$ be the output of OFTRL at round $t$.
  Suppose that for some $\calI \subseteq [T]$, the sequence of iterates $\prn{w^\t}_{t=1}^T$ satisfies
  $\nrm{w^\tp - w^\t}_{w^\t,\phi} \leq 1/2$ for all $t \in \calI$. 
  Then, for any $w^* \in \Delta_d$,
  \begin{align}
    \sumT \inpr{w^\t - w^*, h^\t}
    &\leq
    \frac{d \log T}{\eta^\Tp}
    +
    4 \sum_{t \in \calI} \eta^\t \nrm{h^\t \!-\! m^\t}_{*,w^\t,\phi}^2
    \nn
    &\qquad-
    \sum_{t \in \calI}
    \frac{1}{16 \eta^\t} 
    \nrm{w^\tp - w^\t}_{w^\t, \phi}^2
    +
    2 \sum_{t \not\in \calI} \nrm{h^\t - m^\t}_\infty
    +
    6 L 
    \com 
    \label{eq:oftrl_logbarrier}
  \end{align}
  where $L = \max \set{ \max_{t \in [T]} \nrm{h^\t}_\infty,  \max_{t \in [T+1]} \nrm{m^\t}_\infty  }$.
\end{restatable}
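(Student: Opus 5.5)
This will be derived directly from \Cref{lem:oftrl_selfconcordant}, instantiated with $\calK = \Delta_d$, the barrier $\phi(w) = -\sum_{k=1}^d \log(w(k))$, and the norm $\nrm{\cdot} = \nrm{\cdot}_1$. Its dual norm is then $\nrm{\cdot}_* = \nrm{\cdot}_\infty$. Three facts need to be checked to make the instantiation valid, and then the constants are read off.

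\emph{Self-concordance.} Each term $-\log(w(k))$ is a $1$-self-concordant barrier for the half-space $\set{w(k) \geq 0}$ (see \citealt{nesterov94interior}). The parameter of a barrier is additive under sums. Restricting to the affine hull of the simplex, with domain intersected with $\{\inpr{w,\ones}=1\}$, preserves self-concordance and does not increase the parameter. Hence $\phi$ is a $d$-self-concordant barrier for $\Delta_d$, so $\vartheta = d$. This gives the leading term $d\log T / \eta^{T+1}$.

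\emph{Diameter.} The $\ell_1$-diameter of the simplex is $\max_{v,w \in \Delta_d} \nrm{v-w}_1 = 2$, so $D = 2$.

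\emph{Remaining terms.} With $D = 2$, the non-stable term $D \sum_{t\notin\calI}\nrm{h^\t - m^\t}_*$ becomes $2\sum_{t\notin\calI}\nrm{h^\t - m^\t}_\infty$. The tail term $3DL$ becomes $6L$, where $L$ is defined with $\nrm{\cdot}_\infty$ as in the statement. The stable-round terms $4\sum_{t\in\calI}\eta^\t \nrm{h^\t-m^\t}_{*,w^\t,\phi}^2$ and $-\sum_{t\in\calI}\frac{1}{16\eta^\t}\nrm{w^\tp-w^\t}_{w^\t,\phi}^2$ carry over verbatim. These substitutions yield \cref{eq:oftrl_logbarrier}.

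The only delicate point is the domain issue in \Cref{lem:oftrl_selfconcordant}. The barrier is infinite on the boundary, and the $\vartheta\log T/\eta^{T+1}$ term comes from comparing against a slightly shrunk comparator $(1-1/T)w^* + \ones/(dT)$ rather than $w^*$ itself. That comparison costs an extra $O(D L)$, which is already absorbed into the $3DL$ term of the general lemma. So no new argument is needed beyond verifying that $\phi$ satisfies the hypotheses. I expect the self-concordance check on the simplex (rather than the full orthant) to be the only step requiring care, and citing the restriction-to-affine-subspace property handles it.
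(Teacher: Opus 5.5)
Your proposal is correct and follows the paper's proof, which is exactly this instantiation of \Cref{lem:oftrl_selfconcordant}. The choices are $\nrm{\cdot}=\nrm{\cdot}_1$ (so the dual norm is $\nrm{\cdot}_\infty$), $\vartheta=d$ for the log-barrier, and $\ell_1$-diameter $D=2$, which yield the $2\sum_{t\notin\calI}\nrm{h^t-m^t}_\infty$ and $6L$ terms; your remarks on restricting the barrier to the simplex and on the shrunk comparator concern the cited lemma's internals and are not needed here.
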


\begin{proof}
Combining \Cref{lem:oftrl_selfconcordant} with $\nrm{\cdot} = \nrm{\cdot}_1$, the fact that the logarithmic barrier function $\phi$ is a $d$-self-concordant barrier, and $\max_{x, y \in \Delta_d} \nrm{x - y}_1 = 2$ yields the desired bound.
\end{proof}

\section{Deferred Proofs for Two-Player Zero-Sum Games from \Cref{sec:two_player}}\label{app:proof_two_player}
This section provides the details and deferred proofs from \Cref{sec:two_player}.

\subsection{Proof of \Cref{lem:Pinfty_tradeoff}}

\begin{proof}[Proof of \Cref{lem:Pinfty_tradeoff}]
For simplicty, we write $\tau^\circ = \tau^\circ(h; c_1, c_2)$, omitting the dependence on $h, c_1, c_2$ in this proof.
Since $\tau^\circ \leq T$, the definition of $\tau^\circ$ implies
\begin{equation}
  P_\infty^{\tau^\circ - 1}(h) 
  \leq 
  c_2 / c_1 + \Adiff^2
  \per
  \n
\end{equation}
We then have 
\begin{equation}
  P_\infty^{\tau^\circ}(h)
  =
  P_\infty^{\tau^\circ - 1}(h)
  +
  \nrm{h^{\tau^\circ - 1} - h^{\tau^\circ - 2}}_\infty^2
  \leq
  \frac{c_2}{c_1}
  +
  \Adiff^2
  +
  \frac{\Adiff^2}{4} \nrm{z^{\tau^\circ} - z^{\tau^\circ - 1}}_1^2
  \leq
  \frac{c_2}{c_1}
  +
  2 \Adiff^2
  \com
  \n
\end{equation}
where the first inequality follows 
$\nrm{h^{\tau^\circ - 1} - h^{\tau^\circ - 2}}_\infty^2 \leq ({\Adiff}/{2})^2 \nrm{z^{\tau^\circ - 1} - z^{\tau^\circ - 2}}_1^2$ by \Cref{lem:grad_diff} and the last inequality from $\nrm{z^{\tau^\circ} - z^{\tau^\circ - 1}}_1^2 \leq 4$ since $z^\t \in \Delta_d$.
In a similar manner, we can show that
\begin{equation}
  P_\infty^{\tau^\circ - 1}(h)
  =
  P_\infty^{\tau^\circ}(h)
  -
  \nrm{h^{\tau^\circ - 1} - h^{\tau^\circ - 2}}_\infty^2
  \geq
  \frac{c_2}{c_1}
  +
  \Adiff^2
  -
  \frac{\Adiff^2}{4} \nrm{z^{\tau^\circ} - z^{\tau^\circ - 1}}_1^2
  \geq
  \frac{c_2}{c_1}
  \per
  \n
\end{equation}
Combining the last two inequalities, we obtain
\begin{equation}
  c_1 \sqrt{P_\infty^{\tau^\circ}(h)}
  +
  \frac{c_2}{ \sqrt{P_\infty^{\tau^\circ-1}(h)} }
  \leq
  c_1 
  \sqrt{
    \frac{c_2}{c_1}
    +
    2 \Adiff^2
  }
  +
  \frac{c_2}{\sqrt{c_2 / c_1}}
  \leq
  \sqrt{2} c_1 \Adiff + 2 \sqrt{c_1 c_2}
  \com
  \n
\end{equation}
which completes the proof.
\end{proof}

\subsection{Corrupted procedure in two-player zero-sum games}\label{subsec:corrupted_game_two_player}
Before describing the deferred arguments for the proof of \Cref{thm:main} and the proof of robustness against opponent deviations and adversarially corrupted utilities, we first formally introduce the corrupted regime~\citep{tsuchiya25corrupted}.
Let $\xhat^\t \in \Delta_{\mx}$ and $\yhat^\t \in \Delta_{\my}$ denote the strategies suggested by the prescribed algorithm of the $x$-player and $y$-player, respectively, at round $t$, and let $x^\t \in \Delta_{\mx}$ and $y^\t \in \Delta_{\my}$ denote the strategy actually chosen by each player at round $t$.
We then summarize the corrupted learning procedure for a two-player zero-sum game with payoff matrix $A$:
\begin{mdframed}
At each round $t = 1, \dots, T$: 
\begin{enumerate}[topsep=2pt, itemsep=1pt, partopsep=0.5pt, leftmargin=27pt]
  \item A prescribed algorithm suggests strategies $\xhat^\t \in \Delta_{\mx}$ and $\yhat^\t \in \Delta_{\my}$;
  \item The $x$-player selects a strategy $x^\t \leftarrow \xhat^\t + \hat{c}_\xrm^\t$ and the $y$-player selects $y^\t \leftarrow \yhat^\t + \hat{c}_\yrm^\t$; 
  \item The $x$-player observes a corrupted expected reward vector $\gtil^\t = g^\t + \tilde{c}_\xrm^\t$ for $g^\t = A y^\t$ and the $y$-player observes a corrupted expected loss vector $\ltil^\t = \ell^\t + \tilde{c}_\yrm^\t$ for $\ell^\t = A^\top x^\t$;
  \item The $x$-player gains a payoff of $\inpr{x^\t, g^\t}$ in Setting (I) and $\inpr{x^\t, \gtil^\t}$ in Setting (II), and the $y$-player incurs a loss of $\inpr{y^\t, \ell^\t}$ in Setting (I) and $\inpr{y^\t, \ltil^\t}$ in Setting (II);
\end{enumerate}
\end{mdframed}
Here, $\hat{c}_\xrm^\t$ and $\tilde{c}_\xrm^\t$ are deviated and corruption vectors of strategies and utility for the $x$-player at round $t$, respectively.
This satisfies $\sumT \nrm{\hat{c}_\xrm^\t}_1 = \sumT \nrm{ x^\t - \xhat^\t }_1 \leq \hat{C}_\xrm$, 
$\sumT \nrm{\tilde{c}_\xrm^\t}_\infty = \sumT \nrm{ g^\t - \gtil^\t }_\infty \leq \tilde{C}_\xrm$, and $C_\xrm = \Amax \hat{C}_\xrm + 2 \tilde{C}_\xrm$\footnote{The choice of the coefficients here follows from \Cref{prop:reg_relation}.}. 
Similarly, $\hat{c}_\yrm^\t$ and $\tilde{c}_\yrm^\t$ are corruption levels of strategies and utility of the $y$-player, respectively, such that 
$\sumT \nrm{\hat{c}_\yrm^\t}_1 = \sumT \nrm{ y^\t - \yhat^\t }_1 \leq \hat{C}_\yrm$, 
$\sumT \nrm{\tilde{c}_\yrm^\t}_\infty = \sumT \nrm{ \ell^\t - \ltil^\t }_\infty \leq \tilde{C}_\yrm$, and $C_\yrm = \Amax \hat{C}_\yrm + 2 \tilde{C}_\yrm$. 
Note that the corrupted regime with $\Chatx = \Chaty = \Ctilx = \Ctily = 0$ corresponds to the \emph{honest regime} presented in \Cref{subsec:two_player_preliminaries}, in which there is no deviation from the prescribed strategies and no corruption in utilities.
See \citet[Remark 8]{tsuchiya25corrupted} for the motivation of considering two different settings (I) and (II).

In the corrupted regime, the following four types of external regret can be defined:
\begin{align}
\Reg_{x,g}^T(x^*) &= \sumT \inpr{x^* - x^\t, g^\t}\com
\quad \Reg_{\xhat,g}^T(x^*) = \sumT \inpr{x^* - \xhat^\t, g^\t}
\com \nn
\Reg_{x,\gtil}^T(x^*) &= \sumT \inpr{x^* - x^\t, \gtil^\t}\com
\quad
\Reg_{\xhat,\gtil}^T(x^*) = \sumT \inpr{x^* - \xhat^\t, \gtil^\t}\per
\n
\end{align}
As discussed in \citet[Remark 8]{tsuchiya25corrupted}, depending on the cause of the corruption in the utility vector $g^\t$, it is natural to consider either $\Reg_{x,g}^T(x^*)$ or $\Reg_{x,\gtil}^T(x^*)$ as the evaluation metric for the player.
Specifically, in Setting~(I), it is natural to use $\Reg_{x,g}^T(x^*)$, while in Setting~(II), it is natural to use $\Reg_{x,\util_i}^T(x^*)$.
The four different types of external regret of the $y$-player, 
$
\Reg_{y,\ell}^T(y^*)
$,
$\Reg_{\yhat,\ell}^T(y^*)$,
$\Reg_{y,\ltil}^T(y^*)$, and
$
\Reg_{\yhat,\ltil}^T(y^*)
$
can be similarly defined.
Note that we have $\Reg_{x}^T(x^*) = \Reg_{x,g}^T(x^*)$ and $\Reg_{y}^T(y^*) = \Reg_{y,\ell}^T(y^*)$.

For these external regrets, the following inequalities hold, which are minor generalizations of \citet[Proposition 9]{tsuchiya25corrupted} for the scale-free scenario.
\begin{proposition}\label[proposition]{prop:reg_relation}
  For any $x^* \in \Delta_m$,
  $\abs{\Reg_{x,g}^T(x^*) - \Reg_{\xhat,g}^T(x^*)} \leq \Amax \hat{C}_x$,
  $\abs{\Reg_{x,\gtil}^T(x^*) - \Reg_{\xhat,\gtil}^T(x^*)} \leq \Amax \hat{C}_x$,
  $\abs{\Reg_{x,g}^T(x^*) - \Reg_{x,\gtil}^T(x^*)} \leq 2 \tilde{C}_x$,
  and
  $\abs{\Reg_{\xhat,g}^T(x^*) - \Reg_{\xhat,\gtil}^T(x^*)} \leq 2 \tilde{C}_x$.
  The similar inequalities hold for 
  $
  \Reg_{y,\ell}^T(y^*),
  \Reg_{\yhat,\ell}^T(y^*),
  \Reg_{y,\ltil}^T(y^*),
  \Reg_{\yhat,\ltil}^T(y^*).
  $
\end{proposition}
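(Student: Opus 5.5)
All four inequalities follow from the fact that each pair of regrets differs in only one argument, so their difference telescopes into a single linear sum. That sum is then bounded by H\"older's inequality. No property of the learning algorithm is needed; the proposition is a purely deterministic statement about the sequences $x^\t,\xhat^\t,g^\t,\gtil^\t$.

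\emph{Strategy deviations (first two inequalities).} The comparator term $\inpr{x^*,\cdot}$ cancels, so $\Reg_{x,g}^T(x^*) - \Reg_{\xhat,g}^T(x^*) = \sumT \inpr{\xhat^\t - x^\t, g^\t} = -\sumT\inpr{\hat c_\xrm^\t, g^\t}$. Both $x^\t$ and $\xhat^\t$ lie in $\Delta_{\mx}$, so $\inpr{\hat c_\xrm^\t,\ones}=0$. Hence for any scalar $\xi^\t$ we may replace $g^\t$ by $g^\t-\xi^\t\ones$, as in the proof of \Cref{lem:grad_diff}. H\"older's inequality then gives $\abs{\inpr{\hat c_\xrm^\t, g^\t}} \le \nrm{\hat c_\xrm^\t}_1 \nrm{g^\t-\xi^\t\ones}_\infty$. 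Since $g^\t = Ay^\t\in[-\Amax,\Amax]^{\mx}$, taking $\xi^\t=0$ already gives $\nrm{g^\t}_\infty\le\Amax$. Summing over $t$ and using $\sumT\nrm{\hat c_\xrm^\t}_1\le\Chatx$ yields the first inequality.

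The second inequality is identical with $g^\t$ replaced by $\gtil^\t$: the difference is $-\sumT\inpr{\hat c_\xrm^\t,\gtil^\t}$. It therefore suffices that $\min_{\xi}\nrm{\gtil^\t-\xi\ones}_\infty\le\Amax$ for every $t$, i.e., that the entries of each observed vector $\gtil^\t$ span a range of at most $2\Amax$. This holds under the convention of the corrupted regime in \citet{tsuchiya25corrupted}, which this proposition generalizes: observed utilities, corrupted or not, lie in the same box $[-\Amax,\Amax]^{\mx}$ as the true ones. I will state explicitly that this convention is used. Using the centered form (rather than the cruder $\nrm{\gtil^\t}_\infty\le\Amax+\nrm{\tilde c_\xrm^\t}_\infty$) is what keeps the coefficient exactly $\Amax$ and avoids any extra $2\Ctilx$ term.

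\emph{Utility corruption (last two inequalities).} Now the points are fixed and only the gradient changes. For instance, $\Reg_{x,g}^T(x^*)-\Reg_{x,\gtil}^T(x^*)=\sumT\inpr{x^*-x^\t, g^\t-\gtil^\t}=-\sumT\inpr{x^*-x^\t,\tilde c_\xrm^\t}$. By H\"older's inequality and $\nrm{x^*-x^\t}_1\le 2$ on the simplex, this is at most $2\sumT\nrm{\tilde c_\xrm^\t}_\infty\le 2\Ctilx$. The same computation with $\xhat^\t$ in place of $x^\t$ gives the fourth inequality. The $y$-player statements follow symmetrically, using $\ell^\t=A^\top x^\t\in[-\Amax,\Amax]^{\my}$, the deviations $\hat c_\yrm^\t$, the corruptions $\tilde c_\yrm^\t$, and the sign convention for losses.

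The only delicate point is the second inequality. There the gradient in the inner product is the \emph{corrupted} vector, so the $\Amax$ coefficient needs both the mean-zero property of $\hat c_\xrm^\t$ (centering) and the range bound on $\gtil^\t$ from the regime's convention. Everything else is a single application of H\"older's inequality.
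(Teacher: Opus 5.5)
Your proposal is correct and is essentially the paper's proof: each difference collapses to a single sum $\sum_t\inpr{\cdot,\cdot}$, which H\"older's inequality bounds using $\nrm{g^t}_\infty\le\Amax$, $\sum_t\nrm{\hat c_x^t}_1\le\Chatx$, and $\nrm{x^*-x^t}_1\le 2$. You are right that the second inequality needs a convention on the observed vectors that the paper's ``same manner'' passes over without comment (without it, $A=0$, $\Amax=0$, $x^1=(1,0)$, $\xhat^1=(0,1)$, $\tilde c_x^1=(1,0)$ make the left side $1$); once $\gtil^t\in[-\Amax,\Amax]^{\mx}$ is assumed, plain H\"older with $\nrm{\gtil^t}_\infty\le\Amax$ already suffices, so your centering step is harmless but not needed.
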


\begin{proof}[Proof of \Cref{prop:reg_relation}]
  From the triangle inequality and the Cauchy--Schwarz inequality,
  we have 
  \begin{equation}    
    \abs*{\Reg_{x,g}^T(x^*) - \Reg_{\xhat,g}^T(x^*)}
    =
    \abs*{\sumT \inpr{\xhat^\t - x^\t, g^\t}}
    \leq 
    \sumT \nrm{\xhat^\t - x^\t}_1 \nrm{g^\t}_\infty 
    \leq 
    \Amax \hat{C}_x
    \per
    \n
  \end{equation}
  Similarly, we also have
  \begin{equation}
    \abs*{\Reg_{x,g}^T(x^*) - \Reg_{x,\gtil}^T(x^*)}
    =
    \abs*{\sumT \inpr{x^* - x^\t, g^\t - g^\t}}
    \leq 
    \sumT \nrm{x^* - x^\t}_1 \nrm{g^\t - \gtil^\t}_\infty 
    \leq 
    2 \tilde{C}_x
    \per
    \n
  \end{equation}
  The other inequalities can be proven in the same manner.
\end{proof}

\subsection{Extension of \Cref{thm:main} to corrupted regime}\label{app:proof_thm_main}
Here we provide the deferred arguments for the proof of \Cref{thm:main} and the extension of \Cref{thm:main} to the corrupted regime.

We use the algorithms for the corrupted regime, which simply replaces the non-corrupted (unobserved) gradients $g^\t$ and $\ell^\t$ in \cref{eq:OptHedge} with corrupted (observable) gradients $\gtil^\t$ and $\ltil^\t$.
In particular, the strategies of the $x$- and $y$-players, $\prn{\xhat^\t}_{t=1}^T$ and $\prn{\yhat^\t}_{t=1}^T$, are determined as follows:
\begin{equation}
  \begin{split}    
  \xhat^\t(i) 
  &
  \propto 
  \exp\prn*{ \eta_x^\t \prn*{ \sum_{s=1}^{t-1} \gtil^\s(i) + \gtil^\tm(i)} }
  \com
  \quad
  \eta_x^\t
  =
  \sqrt{\frac{M_x}{P_\infty^\t(\gtil) + P_\infty^\t(\ltil)}}
  \com
  \\
  \yhat^\t(i) 
  &
  \propto 
  \exp\prn*{ - \eta_y^\t \prn*{ \sum_{s=1}^{t-1} \ltil^\s(i) + \ltil^\tm(i)} }
  \com\quad
  \eta_y^\t
  =
  \sqrt{\frac{M_y}{P_\infty^\t(\gtil) + P_\infty^\t(\ltil)}}
  \com
  \end{split}
  \label{eq:OptHedge_corrupted}
\end{equation}
where $\eta_x^\t, \eta_y^\t > 0$ are the learning rates the $x$- and $y$-players at round $t$, and denote $\gtil^0 = \ltil^0 = g^0 = \ell^0 = \zeros$ for simplicity.
Note that $\xhat^1 = \frac{1}{\mx} \ones$ and $\yhat^1 = \frac{1}{\my} \ones$,
and recall that
$
\gtil^\t = g^\t + \tilde{c}_x^\t
$
for
$
  g^\t = A y^\t
$
and 
$
\tilde{\ell}^\t = \ell^\t + \tilde{c}_y^\t
$
for
$
\ell^\t = A^\top x^\t
$.

The learning dynamic specified by \cref{eq:OptHedge_corrupted} guarantees the following bounds:
\begin{theorem}[Fast scale-invariant convergence to Nash equilibrium in corrupted regime]\label{thm:main_corrupted}
Suppose that the $x$- and $y$-players use the algorithms in \cref{eq:OptHedge_corrupted} to obtain strategies $\prn{\xhat^\t}_{t=1}^T$ and $\prn{\yhat^\t}_{t=1}^T$, respectively.
Then, in the corrupted regime it holds that
\begin{align}
  \Reg_{x,\gtil}^T
  &
  =
  O\prn[\bigg]{
    \Adiff \sqrt{\log m \log \mx}
    \!+\!
    \sqrt{
      \brk*{
      \Ctilx \!+\! \Ctily
      \!+\!
      \Adiff^2 (\Chatx \!+\! \Chaty)
      \!+\!
      \Adiff (\Cx \!+\! \Cy)
      } \log \mx
    }
    \!+\!
    \Chatx
  }
  \com
  \nn
  \Reg_{y,\ltil}^T
  &
  =
  O\prn[\bigg]{
    \Adiff \sqrt{\log m \log \my}
    \!+\!
    \sqrt{
      \brk*{
      \Ctilx + \Ctily
      \!+\!
      \Adiff^2 (\Chatx \!+\! \Chaty)
      \!+\!
      \Adiff (\Cx \!+\! \Cy)
      } \log \my
    }
    \!+\!
    \Chaty
  }
  \com
  \n
\end{align}
and
\begin{align}
  \Reg_{x,g}^T
  &
  =
  O\prn[\bigg]{
    \Adiff \sqrt{\log m \log \mx}
    \!+\!
    \sqrt{
      \brk*{
      \Ctilx \!+\! \Ctily
      \!+\!
      \Adiff^2 (\Chatx \!+\! \Chaty)
      +
      \Adiff (\Cx \!+\! \Cy)
      } \log \mx
    }
    \!+\!
    \Cx
  }
  \com
  \nn
  \Reg_{y,\ell}^T
  &
  =
  O\prn[\bigg]{
    \Adiff \sqrt{\log m \log \my}
    \!+\!
    \sqrt{
      \brk*{
      \Ctilx \!+\! \Ctily
      \!+\!
      \Adiff^2 (\Chatx \!+\! \Chaty)
      \!+\!
      \Adiff (\Cx \!+\! \Cy)
      } \log \my
    }
    +
    \Cy
  }
  \per
  \n
\end{align}
\end{theorem}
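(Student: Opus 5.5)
The plan is to rerun the stopping-time argument from the proof sketch of \Cref{thm:main}, now with the observed sequences $\gtil,\ltil$ and the suggested iterates $\xhat,\yhat$. The corruption is collected into additive error terms that stay inside the square root, and \Cref{prop:reg_relation} is used at the end to pass between the four notions of regret.

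\textbf{Step 1: RVU bounds and perturbed stability.} Apply \Cref{lem:adahedge_opt_neg_mainbody} to the $x$-player with losses $-\gtil^\t$, predictions $-\gtil^\tm$, and $\nu^\t=\nrm{\ltil^\t-\ltil^\tm}_\infty^2$, and symmetrically for the $y$-player. This gives $\Reg_{\xhat,\gtil}^T \le \sqrt{32(P_\infty^{T+1}(\gtil)+P_\infty^{T+1}(\ltil))M_x} - \sum_t \frac{1}{4\eta_x^\t}\nrm{\xhat^\tp-\xhat^\t}_1^2$.

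Next, replace \cref{eq:gldiff2Adiff_main} by a perturbed version. By \Cref{lem:grad_diff} and $(a+b+c)^2\le 3(a^2+b^2+c^2)$,
\[
\nrm{\gtil^\t-\gtil^\tm}_\infty^2 \le 3\Abardiff^2\nrm{\yhat^\t-\yhat^\tm}_1^2 + 3\Abardiff^2(\nrm{\hat c_y^\t}_1+\nrm{\hat c_y^\tm}_1)^2 + 3(\nrm{\tilde c_x^\t}_\infty+\nrm{\tilde c_x^\tm}_\infty)^2 .
\]
For the two corruption terms, use $\nrm{\hat c_y^\t}_1\le 2$ and $\nrm{\tilde c_x^\t}_\infty\le 2\Amax+\nrm{\tilde c_x^\t}_\infty$, so that squares reduce to linear sums. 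This yields a path-length contribution of order $\Adiff^2\Chaty+\Ctilx+\Amax\Ctilx$, which is absorbed into the $\Adiff(\Cx+\Cy)$-type terms. I am unsure whether the exact combination matches the statement; some care may be needed to get $\Adiff$ rather than $\Amax$, e.g.\ by centering $\tilde c$ as in \Cref{lem:grad_diff}. So $P$ equals a ``clean'' part $\Abardiff^2 Q_1(\yhat)$ plus $O(\mathcal{C})$, where $\mathcal{C}=\Ctilx+\Ctily+\Adiff^2(\Chatx+\Chaty)+\Adiff(\Cx+\Cy)$.

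\textbf{Step 2: the stopping-time argument.} Choose $\tau_x,\tau_y$ via $\tau^\circ$ from \cref{eq:tau_circ} applied to $\ltil,\gtil$. \Cref{lem:Pinfty_tradeoff} needs only one-step increments bounded by $O(\Adiff^2)$, and the corruption can break this. I would therefore modify the threshold to $c_2/c_1+\Adiff^2$ plus the largest single increment; alternatively, I would bound the overshoot by $\Adiff^2+O(\mathcal{C})$, which only adds $O(\mathcal{C})$ to $P^{\tau}$. Then sum the two RVU bounds. Split the square root by subadditivity into a pre-$\tau$ part, a clean post-$\tau$ part $\sqrt{\Abardiff^2 Q_1^{\tau}(\xhat)M}$, and $\sqrt{\mathcal{C}M}$. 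Absorb the clean post-$\tau$ part into the negative term using $b\sqrt z-az\le b^2/(4a)$. This gives
\[
\Reg_{\xhat,\gtil}^T+\Reg_{\yhat,\ltil}^T \le O(\Adiff M)+O(\sqrt{\mathcal{C}M}) - \tfrac18\bigl(Q_1^{\tau_x}(\xhat)/\eta_x^{\tau_x-1}+Q_1^{\tau_y}(\yhat)/\eta_y^{\tau_y-1}\bigr).
\]
The cases $\tau^\circ>T$ are handled directly, as in the honest case: then $P$ is already $O(\Adiff^2 M)$ plus corruption.

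\textbf{Step 3: nonnegativity and the main obstacle.} The sum of regrets is no longer nonnegative. Instead, I lower bound it using the duality gap: $\Reg_{\xhat,\gtil}^T+\Reg_{\yhat,\ltil}^T \ge -O(\Cx+\Cy)$, since with exact $g,\ell$ and true plays the sum is $\ge0$, and \Cref{prop:reg_relation} transfers this at cost $O(\Cx+\Cy)$. Combining with $\eta^{\tau-1}\lesssim 1/\Adiff$ gives $Q_1^{\tau}(\xhat)\lesssim M+(\sqrt{\mathcal{C}M}+\Cx+\Cy)/\Adiff$. Substituting this into the individual RVU bound makes the $\Adiff^2 Q_1$ term contribute $\Adiff^2 M+\Adiff\sqrt{\mathcal C M}+\Adiff(\Cx+\Cy)$. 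Under the square root, this yields $\Adiff\sqrt{M\log\mx}+\sqrt{\mathcal C\log\mx}$ up to lower order terms, using $\sqrt{\Adiff\sqrt{\mathcal C M}\log\mx}\le \Adiff\sqrt{M\log \mx}+\sqrt{\mathcal C\log \mx}$ by AM–GM.

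Finally, convert between regret notions with \Cref{prop:reg_relation}. Passing to $x$ costs $\Amax\Chatx$, giving the bound for $\Reg_{x,\gtil}$. Passing additionally from $\gtil$ to $g$ costs $2\Ctilx$, giving the $+\Cx$ bound. The $\Chatx$ appearing (rather than $\Amax\Chatx$) in the $\Reg_{x,\gtil}$ bound presumably comes from an argument with rewards centered to range $\Adiff$; I will check this at the end.

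The main obstacle is Step 3 combined with the overshoot in Step 2. The negative term must be exploited without sign information on the social regret, and the stopping time must not be derailed by large corrupted increments. I would first try handling both uniformly by treating every corruption contribution as an additive $O(\mathcal C)$ slack in the thresholds.
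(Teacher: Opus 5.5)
Your plan is essentially the paper's proof. It uses the RVU bound of \Cref{lem:adahedge_opt_neg_mainbody} with the opponent's path length as $\nu^\t$, the perturbed path-length bound of \Cref{lem:Pinfty_gradtil_upper}, and stopping times $\tau^\circ(\ltil),\tau^\circ(\gtil)$ with the $b\sqrt{z}-az\le b^2/(4a)$ absorption; it then uses nonnegativity of $\Reg_{x,g}^T+\Reg_{y,\ell}^T$ paid for with $\Cx+\Cy$ (the paper converts before summing and puts $\Cx+\Cy$ into $\kappa$, which is equivalent) and back-substitutes the bound on $Q_1^{\tau_x}(\xhat)$.

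The points you flag are equally loose in the paper, not gaps in your route, and they resolve as follows.
\begin{itemize}
\item The paper's $8\Ctilx$ in \Cref{lem:Pinfty_gradtil_upper} tacitly needs a per-round bound on $\nrm{\tilde c_x^\t}_\infty$. \Cref{prop:reg_relation} already assumes observed gains lie in $[-\Amax,\Amax]$, which gives $\nrm{\tilde c_x^\t}_\infty\le 2\Amax$; your displayed inequality is vacuous as written, and this is the bound you want. The resulting $\Amax\Ctilx$ is simply the scale-consistent form of the statement's $\Ctilx+\Ctily$, so it need not be absorbed into $\Adiff(\Cx+\Cy)$.
\item The paper applies \Cref{lem:Pinfty_tradeoff} to $\ltil,\gtil$ without treating corrupted overshoot; your enlarged threshold fixes this.
\item \Cref{prop:reg_relation} gives $\Amax\Chatx$ rather than a bare $\Chatx$. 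No argument can yield a bare $\Chatx$: the dynamics are unchanged under $(A,\tilde c_x^\t,\tilde c_y^\t)\mapsto\lambda(A,\tilde c_x^\t,\tilde c_y^\t)$ while every regret scales by $\lambda$.
\end{itemize}
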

Note that a linear dependence on a player's own corruption level is unavoidable; in fact, a matching lower bound is known \citep[Theorem 15 (ii)]{tsuchiya25corrupted}.
A key strength of the upper bounds in the corrupted regime is that the effect of the opponent's strategic deviation appears only through a square-root dependence, and moreover, if the regret is defined with respect to the corrupted utility gradients, then the impact of utility corruption is also only square-root.
These dependencies are also known to be optimal \citep[Theorem 15 (i), (iii)]{tsuchiya25corrupted}.

To prove \Cref{thm:main_corrupted}, we recall that from \Cref{lem:grad_diff}, we have
\begin{equation}\label{eq:gldiff2Adiff}
  \begin{split}    
  \nrm{g^\t - g^\tm}_\infty \leq (\Adiff / 2) \nrm{y^\t - y^\tm}_1
  \com
  \quad
  \nrm{\ell^\t - \ell^\tm}_\infty \leq (\Adiff / 2) \nrm{x^\t - x^\tm}_1
  \per
  \end{split}
\end{equation}
We then prepare the following lemma.
\begin{lemma}\label[lemma]{lem:Pinfty_gradtil_upper}
For any $\tau_x, \tau_y \in [T]$, it holds that
\begin{equation}
  Q_\infty^{\tau_y}(\gtil)
  \leq
  4 \Adiff^2 (Q_1^{\tau_y}(\yhat) + \Chaty + 1) + 8 \Ctilx
  \com\quad
  P_\infty^{\tau_x}(\ltil)
  \leq
  4 \Adiff^2 \prn{Q_1^{\tau_x}(\xhat) + \Chatx + 1} + 8 \Ctily
  \per
  \n
\end{equation}
\end{lemma}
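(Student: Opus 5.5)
The plan is to decompose each increment of the corrupted gradient into a clean part and a corruption part, then reduce the clean part to path lengths of the suggested strategies via \Cref{lem:grad_diff}. I treat $Q_\infty^{\tau_y}(\gtil)$; the $\ltil$ bound is symmetric, with $x,\hat{c}_\xrm,\tilde{c}_\yrm$ in place of $y,\hat{c}_\yrm,\tilde{c}_\xrm$. (For the second inequality I read the left-hand side as the tail quantity $Q_\infty^{\tau_x}(\ltil)$, matching the right-hand side; the same argument applies.)

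\textbf{Step 1: split each increment.} For each $s \ge \tau_y$, write
\[
\gtil^\s - \gtil^\sm = A(y^\s - y^\sm) + (\tilde{c}_\xrm^\s - \tilde{c}_\xrm^\sm).
\]
By $(a+b)^2 \le 2a^2 + 2b^2$,
\[
\nrm{\gtil^\s - \gtil^\sm}_\infty^2 \le 2\nrm{A(y^\s - y^\sm)}_\infty^2 + 2\nrm{\tilde{c}_\xrm^\s - \tilde{c}_\xrm^\sm}_\infty^2 .
\]
By \Cref{lem:grad_diff}, which applies since $y^\s, y^\sm$ are in the simplex or equal $\zeros$ at $s=1$, the first term is at most $(\Adiff^2/2)\nrm{y^\s - y^\sm}_1^2$.

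\textbf{Step 2: strip the deviations.} Write
\[
y^\s - y^\sm = (\yhat^\s - \yhat^\sm) + (\hat{c}_\yrm^\s - \hat{c}_\yrm^\sm),
\]
and again use $(a+b)^2 \le 2a^2+2b^2$. The first piece sums over $s \ge \tau_y$ to exactly $Q_1^{\tau_y}(\yhat)$. For the second piece, each $\hat{c}_\yrm^\s$ is a difference of two simplex points, so $\nrm{\hat{c}_\yrm^\s}_1 \le 2$. Hence
\[
\nrm{\hat{c}_\yrm^\s - \hat{c}_\yrm^\sm}_1^2 \le 2\prn{\nrm{\hat{c}_\yrm^\s}_1^2 + \nrm{\hat{c}_\yrm^\sm}_1^2} \le 4\prn{\nrm{\hat{c}_\yrm^\s}_1 + \nrm{\hat{c}_\yrm^\sm}_1},
\]
which converts the squares into a quantity linear in the deviations. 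Summing gives $O(\Chaty)$.

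\textbf{Step 3: the boundary term.} The only other contribution is $s=1$ (relevant when $\tau_y = 1$), where $y^0 = \zeros$ by convention. There $\nrm{y^1 - y^0}_1 = 1$, which produces the additive ``$+1$''. The constants from the two applications of $(a+b)^2 \le 2a^2+2b^2$ and the factor $1/4$ from \Cref{lem:grad_diff} should combine into the stated $4\Adiff^2(\cdot)$. I will track them carefully, or regroup (e.g.\ use $\nrm{\hat{c}^\s - \hat{c}^\sm}_1 \le 4$ only once) if the first pass overshoots.

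\textbf{Main obstacle: the utility-corruption term.} It remains to bound $\sum_s 2\nrm{\tilde{c}_\xrm^\s - \tilde{c}_\xrm^\sm}_\infty^2$ by $8\Ctilx$. Expanding gives at most $4\sum_s(\nrm{\tilde{c}_\xrm^\s}_\infty^2 + \nrm{\tilde{c}_\xrm^\sm}_\infty^2)$. Turning these squares into the linear quantity $\sum_s\nrm{\tilde{c}_\xrm^\s}_\infty \le \Ctilx$ requires a per-round bound $\nrm{\tilde{c}_\xrm^\s}_\infty \le 1$. That is a scale-dependent normalization, so it does not come for free.

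I would first check whether the corrupted regime implicitly bounds the corruption magnitude per round; this is the normalization used in \citet{tsuchiya25corrupted}, which this lemma generalizes. If it does not, I would weaken the statement to use the squared corruption level $\sum_s \nrm{\tilde{c}_\xrm^\s}_\infty^2$ in place of $\Ctilx$. That change propagates harmlessly through the square roots in \Cref{thm:main_corrupted}. This is the step I expect to need the most care.
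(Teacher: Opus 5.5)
Your route is the paper's. You peel off the utility corruption with $(a+b)^2\le 2a^2+2b^2$, apply \Cref{lem:grad_diff}, split $y^{s}=\yhat^{s}+\hat{c}_\yrm^{s}$, and linearize the deviation squares using $\nrm{\hat{c}_\yrm^{s}}_1\le 2$. Your reading of the second left-hand side as $Q_\infty^{\tau_x}(\ltil)$ is also the intended one.

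The obstacle you flag is real, and the paper does not resolve it. Its proof jumps to $2\sum_{t\ge\tau_y}\nrm{g^{t}-g^{t-1}}_\infty^2+8\Ctilx$, which silently assumes $\nrm{\tilde{c}_\xrm^{t}}_\infty\le 1$. Without that normalization the statement is false. It is not even homogeneous: if you rescale the utilities and the corruptions together, the left side scales quadratically while $8\Ctilx$ scales linearly. Concretely, take $A=0$ and a single corruption $\tilde{c}_\xrm^{t_0}=Ke_1$ with $\tau_y\le t_0<T$. Then $\Ctilx=K$, the left side equals $2K^2$, and the right side equals $8K$, so the inequality fails for $K>4$.

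Your repair is the right one: either assume a per-round bound, or replace $\Ctilx$ by $\sum_s\nrm{\tilde{c}_\xrm^{s}}_\infty^2$. It is not quite harmless downstream, though. The $\sqrt{\Ctilx}$-type terms of \Cref{thm:main_corrupted} become $\sqrt{\sum_s\nrm{\tilde{c}_\xrm^{s}}_\infty^2}\le\sqrt{\max_s\nrm{\tilde{c}_\xrm^{s}}_\infty\,\Ctilx}$, which changes the advertised square-root dependence.

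There is a second failure you did not flag: \Cref{lem:grad_diff} does \emph{not} apply at $s=1$ with $y^{0}=\zeros$.
\begin{itemize}
\item Its proof replaces $A$ by $A-\xi\ones^\top$, which needs $\ones^\top(y-y')=0$. That fails when $y'=\zeros$.
\item At $s=1$ you have $\nrm{g^{1}-g^{0}}_\infty=\nrm{Ay^{1}}_\infty$, which can be of order $\Amax$ while $\Adiff=0$. Take every entry of $A$ equal to some $c\neq 0$.
\item So for $\tau_y=1$ the claimed inequality fails even in the honest regime: the left side is at least $c^2$ and the right side is $0$. The ``$+1$'' cannot absorb this boundary term.
\end{itemize}
The paper's proof makes the same move when it invokes \cref{eq:gldiff2Adiff} at $t=\tau_y=1$. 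The fix is to restrict to $\tau_x,\tau_y\ge 2$. That is all \Cref{thm:main_corrupted} uses, since its stopping times are $\min\{\tau^\circ,T\}$ with $\tau^\circ\ge 2$ (\Cref{lem:Pinfty_tradeoff}) and $T\ge 2$. The ``$+1$'' then becomes unnecessary.

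On constants, your first pass is slightly off. Two rounds of $(a+b)^2\le 2a^2+2b^2$ give $\Adiff^2Q_1^{\tau_y}(\yhat)+8\Adiff^2\Chaty$, which overshoots on $\Chaty$. Your proposed regrouping does not help, because it is what you already did. Instead, cap with $\nrm{y^{s}-y^{s-1}}_1\le 2$. Write $a_s=\nrm{\yhat^{s}-\yhat^{s-1}}_1$ and $b_s=\nrm{\hat{c}_\yrm^{s}}_1+\nrm{\hat{c}_\yrm^{s-1}}_1$. Then
\[
\nrm{y^{s}-y^{s-1}}_1^2\le 2a_s^2+2\min\{2,b_s\}^2\le 2a_s^2+4b_s .
\]
Hence, for $\tau_y\ge 2$,
\[
Q_\infty^{\tau_y}(\gtil)\le\Adiff^2\bigl(Q_1^{\tau_y}(\yhat)+4\Chaty\bigr)+8\sum_s\nrm{\tilde{c}_\xrm^{s}}_\infty^2 ,
\]
which fits the stated form.
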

\begin{proof}
We have
\begin{equation}
  P_\infty^{\tau_y}(\gtil)
  \leq
  \sum_{t=\tau_y}^T \nrm{\gtil^\t - \gtil^\tm}_\infty^2
  \leq
  2 \sum_{t=\tau_y}^T \nrm{g^\t - g^\tm}_\infty^2 + 8 \Ctilx
  \per
  \n
\end{equation}
The first term in the RHS in the last inequality can be evaluated as
\begin{align}
  2 \sum_{t=\tau_y}^T \nrm{g^\t - g^\tm} _\infty^2 
  &
  \leq 
  \frac{\Adiff^2}{2} \sum_{t=\tau_y}^T \nrm{y^\t - y^\tm}_1^2
  \tag{by \cref{eq:gldiff2Adiff}}
  \nn 
  &\leq 
  \frac{\Adiff^2}{2}
  \sum_{t=\tau_y}^T
  \prn*{
    2 \nrm{y^\t - \yhat^\t}_1^2 
    +
    4 \nrm{\yhat^\t - \yhat^\tm}_1^2 
    +
    2 \nrm{\yhat^\tm - y^\tm}_1^2 
  }
  \nn
  &\leq 
  2 \Adiff^2
  \prn*{
    \sum_{t=\tau_y}^T
    \nrm{y^\t - \yhat^\t}_1^2 
    +
    \sum_{t=\tau_y}^T
    \nrm{\yhat^\t - \yhat^\tm}_1^2 
    +
    1
  }
  \nn
  &\leq 
  2 \Adiff^2 \prn{\Chaty + Q_1^{\tau_y}(\yhat) + 1}
  \per
  \n
\end{align}
Combining the last two inequalities gives the desired bound on $P_\infty^{\tau_y}(\gtil)$.
The upper bound on $P_\infty^{\tau_x}(\ltil)$ can be proven by the same argument.
\end{proof}

Now we are ready to prove \Cref{thm:main_corrupted}.
\begin{proof}[Proof of \Cref{thm:main_corrupted}]  
When $\Adiff = 0$, both the $x$- and $y$-players have zero regret. Hence, we assume $\Adiff > 0$ in what follows.
Now we fix arbitrary $\tau_x, \tau_y \in [T]$ and let $\Abardiff = \Adiff /2$ for simplicity.
Then, from \Cref{lem:adahedge_opt_neg_mainbody},
\begin{align}
  \Reg_{\xhat,\gtil}^T
  &\leq
  \sqrt{32 (P_\infty^T(\gtil) + P_\infty^T(\ltil)) M_x}
  -
  \sumT \frac{1}{4 \eta_x^\t} \nrm{\xhat^\tp - \xhat^\t}_1^2
  \label{eq:reg_x_upper_1_mid}
  \\
  &=
  \sqrt{32 (P_\infty^{\tau_y}(\gtil) + P_\infty^{\tau_x}(\ltil) + Q_\infty^{\tau_y}(\gtil) + Q_\infty^{\tau_x}(\ltil)) M_x}
  -
  \sumT \frac{1}{4 \eta_x^\t} \nrm{\xhat^\tp - \xhat^\t}_1^2
  \nn
  &\leq
  \sqrt{ 32
    \brk*{ 
      P_\infty^{\tau_y}(\gtil)
      + 
      P_\infty^{\tau_x}(\ltil)
      +
      4 \Adiff^2 (Q_1^{\tau_x}(\xhat) + Q_1^{\tau_y}(\yhat) + \Chatx + \Chaty + 2) + 8 \Ctilx + 8 \Ctily
    } 
    M_x}
  \nn
  &\qquad\qquad-
  \frac{1}{4 \eta_x^{\tau_x - 1}} Q_1^{\tau_x}(\xhat)
  \com
  \label{eq:reg_x_upper_1}
\end{align}
where the last inequality follows from \Cref{lem:Pinfty_gradtil_upper}, \cref{eq:gldiff2Adiff}, and the inequality 
\begin{align}
  \sumT \frac{1}{4 \eta_x^\t} \nrm{\xhat^\tp - \xhat^\t}_1^2
  &= 
  \sum_{t=1}^{T+1} \frac{1}{4 \eta_x^\tm} \nrm{\xhat^\t - \xhat^\tm}_1^2
  -
  \frac{1}{4 \eta_x^1} \nrm{\xhat^1 - \xhat^0}_1^2
  \nn
  &\geq 
  \frac{1}{4 \eta_x^{\tau_x - 1}} \sum_{t=\tau_x}^T \nrm{\xhat^\t - \xhat^\tm}_1^2
  =
  \frac{1}{4 \eta_x^{\tau_x - 1}} Q_1^{\tau_x}(\xhat)
  \n
\end{align}
since $(\eta_x^\t)_t$ is nonincreasing and $\eta_x^1 = \infty$.
Similarly,
\begin{align}
  \Reg_{\yhat,\ltil}^T
  &
  \leq
  \sqrt{32  (P_\infty^T(\gtil) + P_\infty^T(\ltil)) M_y}
  -
  \sumT \frac{1}{4 \eta_y^\t} \nrm{\yhat^\tp - \yhat^\t}_1^2
  \label{eq:reg_y_upper_1_mid}
  \\
  &
  \leq
  \sqrt{ 32
    \brk*{ 
      P_\infty^{\tau_y}(\gtil)
      + 
      P_\infty^{\tau_x}(\ltil)
      +
      4 \Adiff^2 (Q_1^{\tau_x}(\xhat) + Q_1^{\tau_y}(\yhat) + \Chatx + \Chaty + 2) + 8 \Ctilx + 8 \Ctily
    } 
    M_y}
    \nn
  &\qquad\qquad-
  \frac{1}{4 \eta_y^{\tau_y - 1}} Q_1^{\tau_y}(\yhat)
  \per
  \label{eq:reg_y_upper_1}
\end{align}

In what follows, we provide the regret upper bounds on $\Reg_{x,g}^T$ and $\Reg_{y,\ell}^T$, and the regret bound for $\Reg_{x,\gtil}^T$ and $\Reg_{y,\ltil}^T$ can be obtain in a similar manner.
From \cref{eq:reg_x_upper_1,eq:reg_y_upper_1} and \Cref{prop:reg_relation},
we have
\begin{align}
  \Reg_{x,g}^T
  &\leq
  \sqrt{ 32
    \prn{ 
      P_\infty^{\tau_y}(\gtil)
      + 
      P_\infty^{\tau_x}(\ltil)
    } 
    M_x
  }
  +
  \sqrt{ 128
    \Adiff^2 (Q_1^{\tau_x}(\xhat) + Q_1^{\tau_y}(\yhat)) M_x
  }
  \nn
  &\qquad
  +
  \sqrt{ 128
    \brk{ 
      4 \Adiff^2 (\Chatx + \Chaty + 2) + 8 \Ctilx + 8 \Ctily
    } 
    M_x
  }
  +
  \Cx
  -
  \frac{1}{4 \eta_x^{\tau_x - 1}} Q_1^{\tau_x}(\xhat)
  \com
  \nn
  \Reg_{y,\ell}^T
  &\leq
  \sqrt{ 32
    \prn{ 
      P_\infty^{\tau_y}(\gtil)
      + 
      P_\infty^{\tau_x}(\ltil)
    } 
    M_y
  }
  +
  \sqrt{ 128
    \Adiff^2 (Q_1^{\tau_x}(\xhat) + Q_1^{\tau_y}(\yhat)) M_y
  }
  \nn
  &\qquad
  +
  \sqrt{ 128
    \brk{ 
      4 \Adiff^2 (\Chatx + \Chaty + 2) + 8 \Ctilx + 8 \Ctily
    } 
    M_y
  }
  +
  \Cy
  -
  \frac{1}{4 \eta_y^{\tau_y - 1}} Q_1^{\tau_y}(\yhat)
  \com
  \n
\end{align}
where we used the subadditivity of $z \mapsto \sqrt{z}$ for $z \geq 0$.

Define
\begin{equation}\label{eq:def_kappa}
  \kappa
  =
  \sqrt{ 128
    \brk{ 
      4 \Adiff^2 (\Chatx + \Chaty + 2) + 8 \Ctilx + 8 \Ctily
    } 
    M
  }
  +
  \Cx + \Cy
  \per
\end{equation}
Then, by combining the upper bounds on $\Reg_{x,g}^T$ and $\Reg_{y,\ell}^T$, we obtain
\begin{align}
  &
  \Reg_{x,g}^T + \Reg_{y,\ell}^T
  \nn
  &\leq
  \sqrt{ 128
    \prn{ 
      P_\infty^{\tau_y}(\gtil)
      + 
      P_\infty^{\tau_x}(\ltil)
    } 
    M
  }
  \nn
  &\qquad
  +
  \prn*{
  \sqrt{ 128 \Adiff^2 Q_1^{\tau_x}(\xhat) M }
  -
  \frac{1}{8 \eta_x^{\tau_x - 1}} Q_1^{\tau_x}(\xhat)
  }
  +
  \prn*{
  \sqrt{ 128 \Adiff^2 Q_1^{\tau_y}(\yhat) M }
  -
  \frac{1}{8 \eta_y^{\tau_y - 1}} Q_1^{\tau_y}(\yhat)
  }
  \nn
  &\qquad
  -
  \prn*{
    \frac{1}{8 \eta_x^{\tau_x - 1}} Q_1^{\tau_x}(\xhat)
    +
    \frac{1}{8 \eta_y^{\tau_y - 1}} Q_1^{\tau_y}(\yhat)
  }
  +
  \kappa
  \label{eq:social_reg_upper_1_mid}
  \\
  &\leq
  \sqrt{ 128
    \prn{ 
      P_\infty^{\tau_y}(\gtil)
      \!+\! 
      P_\infty^{\tau_x}(\ltil)
    } 
    M
  }
  \!+\!
  256 \Adiff^2 M (\eta_x^{\tau_x - 1} \!+\! \eta_y^{\tau_y - 1})
  \!-\!
  \prn[\bigg]{
    \frac{1}{8 \eta_x^{\tau_x - 1}} Q_1^{\tau_x}(\xhat)
    \!+\!
    \frac{1}{8 \eta_y^{\tau_y - 1}} Q_1^{\tau_y}(\yhat)
  }
  \!+\!
  \kappa
  \nn
  &\leq
  2^{7/2}
  \prn*{
    \sqrt{P_\infty^{\tau_x}(\ltil) M}
    +
    \frac{2^{9/2} \Adiff^2 M^{3/2}}{\sqrt{P_\infty^{\tau_x-1}(\ltil)}}
  }
  +
  2^{7/2}
  \prn*{
    \sqrt{P_\infty^{\tau_y}(\gtil) M}
    +
    \frac{2^{9/2} \Adiff^2 M^{3/2}}{\sqrt{P_\infty^{\tau_y-1}(\gtil)}}
  }
  \nn
  &\qquad\qquad-
  \prn*{
    \frac{1}{8 \eta_x^{\tau_x - 1}} Q_1^{\tau_x}(\xhat)
    +
    \frac{1}{8 \eta_y^{\tau_y - 1}} Q_1^{\tau_y}(\yhat)
  }
  +
  \kappa
  \com
  \label{eq:social_reg_upper_1}
\end{align}
where in the second inequality we considered the worst cases with respect to~$Q_1^{\tau_y}(\yhat)$ and $Q_1^{\tau_x}(\xhat)$ by the inequality $b\sqrt{z} - az \leq b^2 / (4 a)$ that holds for $a > 0$, $b \geq 0$, and $z \geq 0$,
and in the last inequality we used the subadditivity of $z \mapsto \sqrt{z}$ for $z \geq 0$ and $\eta_x^{\tau_x} \leq \sqrt{M_x / P_\infty^{\tau_x}(\ltil)}$ and $\eta_y^{\tau_y} \leq \sqrt{M_y / P_\infty^{\tau_y}(\gtil)}$.

To evaluate the first two terms in the last inequality,
we choose $\tau_x, \tau_y \in [T]$ by
\begin{equation}\label{eq:tau_main_corrupt}
  \begin{split}    
  \tau_x &= \min\set{\tau^\circ(\ltil), T}
  \com\quad
  \tau^\circ(\ltil)
  \coloneqq
  \tau^\circ(\ltil; \sqrt{M}, 2^{9/2} \Adiff^2 M^{3/2})
  \com
  \nn
  \tau_y &= \min\set{\tau^\circ(\gtil), T}
  \com\quad
  \tau^\circ(\gtil)
  \coloneqq
  \tau^\circ(\gtil; \sqrt{M}, 2^{9/2} \Adiff^2 M^{3/2})
  \com
  \end{split}
\end{equation}
where $\tau^\circ(\cdot; \cdot, \cdot)$ is defined in \cref{eq:tau_circ}.

We next list the inequality when $\tau^\circ(\ltil) \leq T$ holds.
Now from \Cref{lem:Pinfty_tradeoff},
the first two terms in~\cref{eq:social_reg_upper_1} is upper bounded as
\begin{equation} 
  \sqrt{P_\infty^{\tau_x}(\ltil) M}
  +
  \frac{2^{9/2} \Adiff^2 M^{3/2}}{\sqrt{P_\infty^{\tau_x-1}(\ltil)}}
  \lesssim
  \Adiff M
  \com
  \label{eq:Px_inf}
\end{equation}
where we used $M \geq 1$.
From \Cref{lem:Pinfty_tradeoff} we also have $P_\infty^{\tau_x - 1}(\ell) \geq (2^{9/2} \Adiff^2 M^{3/2}) / \sqrt{M} = 2^{9/2} \Adiff^2 M$ and thus
\begin{equation}\label{eq:eta_x_tm_lower}
  \eta_x^{\tau_x - 1}
  \leq
  \sqrt{\frac{M_x}{P_\infty^{\tau_x - 1}(\ell)}}
  \leq
  \sqrt{\frac{M_x}{2^{9/2} \Adiff^2 M}}
  \leq
  \frac{1}{4.75 \Adiff}
  \per
\end{equation}

Similarly, if $\tau^\circ(\gtil) \leq T$, the two terms in \cref{eq:social_reg_upper_1} is upper bounded as
\begin{equation}
  \sqrt{P_\infty^{\tau_y}(\gtil) M}
  +
  \frac{2^{9/2} \Adiff^2 M^{3/2}}{\sqrt{P_\infty^{\tau_y-1}(\gtil)}}
  \lesssim
  \Adiff M
  \per
  \label{eq:Py_inf}
\end{equation}
and 
\begin{equation}\label{eq:eta_y_tm_lower}
  \eta_y^{\tau_y - 1}
  \leq
  \frac{1}{4.75 \Adiff}
  \per
\end{equation}

\paragraph{Case 1: when $\tau^\circ(\ltil) \leq T$ and $\tau^\circ(\gtil) \leq T$}
Combining \cref{eq:social_reg_upper_1} with \cref{eq:Px_inf,eq:Py_inf}, we have
\begin{equation}
  \Reg_{x,g}^T + \Reg_{y,\ell}^T
  \leq
  O(\Adiff M)
  +
  \kappa
  -
  \prn*{
  \frac{1}{8 \eta_x^{\tau_x - 1}} Q_1^{\tau_x}(\xhat)
  +
  \frac{1}{8 \eta_y^{\tau_y - 1}} Q_1^{\tau_y}(\yhat)
  }
  \per
  \label{eq:social_reg_upper_2}
\end{equation}
Therefore, combining \cref{eq:social_reg_upper_2} with the fact that $\Reg_{x,g}^T + \Reg_{y,\ell}^T \geq 0$, we have
\begin{align}
  Q_1^{\tau_x}(\xhat)
  &\lesssim
  \eta_x^{\tau_x - 1} \prn{\Adiff M + \kappa}
  \nn
  &\lesssim
  M
  +
  \frac{1}{\Adiff}
  \prn*{
  \sqrt{
    \brk{ 
      \Adiff^2 (\Chatx + \Chaty + 1) + \Ctilx + \Ctily
    } 
    M
    }
    +
    \Cx + \Cy
  }
  \nn
  &\lesssim
  M
  +
  (\Chatx + \Chaty + 1)
  +
  \frac{1}{\Adiff}
  \prn*{
  \sqrt{
    \prn{ 
      \Ctilx + \Ctily
    } 
    M
    }
    +
    \Cx + \Cy
  }
  \com
  \label{eq:path_x_upper_1}
\end{align}
where the second inequality follows from \cref{eq:eta_x_tm_lower} and the definition of $\kappa$.
Similarly using \cref{eq:eta_y_tm_lower}, we can obtain 
\begin{equation}
  Q_1^{\tau_y}(\yhat)
  \lesssim
  M
  +
  (\Chatx + \Chaty + 1)
  +
  \frac{1}{\Adiff}
  \prn*{
  \sqrt{
    \prn{ 
      \Ctilx + \Ctily
    } 
    M
    }
    +
    \Cx + \Cy
  }
  \per
  \n
\end{equation}
From the definitions of $\tau_x, \tau_y$ and \Cref{lem:Pinfty_tradeoff}, we also have
\begin{equation}
  P_\infty^{\tau_x}(\ltil)
  \leq
  2^{9/2} \Adiff^2 M + 2 \Adiff^2
  \lesssim
  \Adiff^2 M
  \com
  \quad
  P_\infty^{\tau_y}(\gtil)
  \leq 
  2 \Adiff^2 (256 M + 1)
  \lesssim
  \Adiff^2 M
  \per
  \n
\end{equation}

Finally, by plugging the above upper bounds on $P_\infty^{\tau_x}(\ltil)$, $P_\infty^{\tau_y}(\gtil)$, $Q_1^{\tau_x}(\xhat)$, and $Q_1^{\tau_y}(\yhat)$ in \cref{eq:reg_x_upper_1}, we obtain
\begin{align}
  &
  \Reg_{\xhat,\gtil}^T
  \nn
  &\lesssim
  \sqrt{
    \brk[\bigg]{
      \Adiff^2 M
      \!+\!
      \Adiff^2
      \brk[\Big]{
        M
        \!+\!
        (\Chatx \!+\! \Chaty \!+\! 1)
        \!+\!
        \frac{1}{\Adiff}
        \prn[\Big]{ \!\!
        \sqrt{ 
          \prn{ 
            \Ctilx \!+\! \Ctily
          } 
          M
          }
          \!+\!
          \Cx \!+\! \Cy
        }
        \!+\!
        (\Chatx \!+\! \Chaty \!+\! 1)
      }
      \!+\!
      \Ctilx \!+\! \Ctily
    } 
    M_x
  }
  \nn
  &\leq
  \sqrt{
    \brk*{
      \Adiff^2 M
      +
      \Adiff^2
      +
      \Adiff 
      \prn[\bigg]{
        \frac{ 
          \Ctilx + \Ctily
        }{\Adiff}
        +
        \Adiff M
        +
        \Cx + \Cy
      }
      +
      \Adiff^2 (\Chatx + \Chaty) + \Ctilx + \Ctily
    } 
    M_x
  }
  \tag{AM--GM}
  \nn
  &\lesssim 
  \Adiff \sqrt{M M_x}
  +
  \sqrt{
    \brk*{
    (\Ctilx + \Ctily)
    +
    \Adiff^2 (\Chatx + \Chaty)
    +
    \Adiff (\Cx + \Cy)
    }
    M_x
  }
  \per
  \n
\end{align}
Similarly, plugging the above upper bounds on $P_\infty^{\tau_x}(\ltil)$, $P_\infty^{\tau_y}(\gtil)$, $Q_1^{\tau_x}(\xhat)$, and $Q_1^{\tau_y}(\yhat)$ in \cref{eq:reg_y_upper_1}, we obtain
\begin{equation}
  \Reg_{\yhat,\ltil}^T
  \lesssim
  \Adiff \sqrt{M M_y}
  +
  \sqrt{
    \brk*{
    (\Ctilx + \Ctily)
    +
    \Adiff^2 (\Chatx + \Chaty)
    +
    \Adiff (\Cx + \Cy)
    }
    M_y
  }
  \per
  \n
\end{equation}
Combining these upper bounds on $\Reg_{\xhat,\gtil}^T$ and $\Reg_{\yhat,\ltil}^T$ with \Cref{prop:reg_relation} yields the desired upper bounds in \Cref{thm:main} for the corrupted regime.

\paragraph{Case 2: when $\tau^\circ(\ltil) = \infty$ and $\tau^\circ(\gtil)\leq T$}
From the choice of $\tau^\circ(\ltil)$ in \cref{eq:tau_main_corrupt}, we have $P_\infty^T(\ltil) \leq 2^{9/2} \Adiff^2 M$, and thus 
setting $Q_1^{\tau_x}(\xhat) = 0$ in \cref{eq:social_reg_upper_1_mid} we have
\begin{align}
  &
  \Reg_{x,g}^T + \Reg_{y,\ell}^T
  \nn
  &\leq
  \sqrt{ 128
    \prn{ 
      P_\infty^{\tau_y}(\gtil)
      + 
      P_\infty^{T}(\ltil)
    } 
    M
  }
  +
  \prn*{
  \sqrt{ 128 \Adiff^2 Q_1^{\tau_y}(\yhat) M }
  -
  \frac{1}{8 \eta_y^{\tau_y - 1}} Q_1^{\tau_y}(\yhat)
  }
  -
  \frac{1}{8 \eta_y^{\tau_y - 1}} Q_1^{\tau_y}(\yhat)
  +
  \kappa
  \nn
  &\leq
  \sqrt{ 128
    \prn{ 
      P_\infty^{\tau_y}(\gtil)
      + 
      P_\infty^{T}(\ltil)
    } 
    M
  }
  +
  2^{7/2}
  \prn*{
    \sqrt{P_\infty^{\tau_y}(\gtil) M}
    +
    \frac{2^{9/2} \Adiff^2 M^{3/2}}{\sqrt{P_\infty^{\tau_y-1}(\gtil)}}
  }
  -
  \frac{1}{8 \eta_y^{\tau_y - 1}} Q_1^{\tau_y}(\yhat)
  +
  \kappa
  \nn
  &\leq
  O\prn*{
  \sqrt{
    \prn{ 
      P_\infty^{\tau_y}(\gtil)
      + 
      P_\infty^{T}(\ltil)
    } 
    M
  }
  +
  \Adiff M
  }
  -
  \frac{1}{8 \eta_y^{\tau_y - 1}} Q_1^{\tau_y}(\yhat)
  +
  \kappa
  \com
  \n
\end{align}
where 
the second inequality follows from the same argument as in \cref{eq:social_reg_upper_1}
and
the last inequality follows from the property of $\tau^\circ(\gtil) \leq T$ in \cref{eq:Py_inf}.
From this inequality, using the argument as in \cref{eq:path_x_upper_1}, we can show
\begin{equation}
  Q_1^{\tau_y}(\yhat)
  \lesssim
  M
  +
  (\Chatx + \Chaty + 2)
  +
  \frac{1}{\Adiff}
  \prn*{
  \sqrt{
    \prn{ 
      \Ctilx + \Ctily
    } 
    M
    }
    +
    \Cx + \Cy
  }
  \per
  \n
\end{equation}
Finally, by plugging the upper bound on $P_\infty^T(\ltil)$ and the inequalities $P_\infty^{\tau_y}(\gtil) \leq 2^{9/2} \Adiff^2 M$ and $Q_1^{\tau_y}(\yhat)$ in \cref{eq:reg_x_upper_1,eq:reg_y_upper_1} and using \Cref{prop:reg_relation}, we obtain the desired bounds.

\paragraph{Case 3: when $\tau^\circ(\ltil) \leq T$ and $\tau^\circ(\gtil)= \infty$}
Repeating the same analysis as in Case (2) gives the desired bounds.

\paragraph{Case 4: when $\tau^\circ(\ltil) = \infty$ and $\tau^\circ(\gtil)= \infty$}
From the choices of $\tau^\circ(\ltil), \tau^\circ(\gtil)$ in \cref{eq:tau_main_corrupt}, we have $P_\infty^T(\ltil) \leq 2^{9/2} \Adiff^2 M$ and $P_\infty^T(\gtil) \leq 2^{9/2} \Adiff^2 M$.
Combining these bounds with \cref{eq:reg_x_upper_1_mid,eq:reg_y_upper_1_mid} and using \Cref{prop:reg_relation}, we obtain the desired bounds.
\end{proof}

\subsection{Scale-free learning dynamic when no communication is possible}\label{app:nocom}
All of the learning dynamics presented in the main body achieve scale-free and scale-invariant fast convergence by tuning their learning rates using the gradients observed by the opponents.
This is not an issue if one views learning in games as an equilibrium computation problem.
However, in more game-theoretic settings, it is natural to ask whether one can design learning dynamics that guarantee scale-free or scale-invariant fast convergence without requiring such gradient communication.
Here we show how to construct a scale-free (though not scale-invariant) learning dynamic with fast convergence of $O(1/T)$ without such communication, at the cost of a worse dependence on the scale parameter $\Amax$ when $\Amax$ is small.
For simplicity, we focus on the honest regime here.

We consider the optimistic Hedge algorithm in \cref{eq:OptHedge} but we use the learning rates $\eta_x^\t, \eta_y^\t > 0$ given by
\begin{equation}\label{eq:lr_nocom}
  \eta_x^\t
  =
  \sqrt{\frac{M_x}{P_\infty^\t(\gtil) + P_1^\t(\xhat)}}
  \com
  \quad
  \eta_y^\t
  =
  \sqrt{\frac{M_y}{P_\infty^\t(\ltil) + P_1^\t(\yhat)}}
  \per
\end{equation}
Note that the learning dynamics defined by \cref{eq:lr_nocom} is scale-free but not scale-invariant (recall \Cref{def:scale_games}).

We prepare the following lemma, which is a minor variant of \Cref{lem:Pinfty_tradeoff}.
\begin{lemma}\label[lemma]{lem:Pinfty_tradeoff_z}
Let $z_1, \dots, z_T \in \Delta_d$
and $c_1, c_2 > 0$.
Denote $\hat{\tau}(z; c_1, c_2) \in [T] \cup \set{\infty}$ by
\begin{equation}\label{eq:tau_hat}
  \hat{\tau}(z; c_1, c_2)
  =
  \begin{cases}
    \min\set[\big]{
      t 
      \colon 
      P_1^t(z) 
      >
      {c_2}/{c_1} + 4
    } & 
    \mbox{if} \ P_1^T(z) > {c_2}/{c_1} + 4
    \com
    \\
    \infty & \text{otherwise}
    \com
  \end{cases}
\end{equation}
where $P_1^t(z) = \sum_{s=1}^{t-1} \nrm{z^\s - z^\sm}_1^2$.
Then, $\hat{\tau} \geq 2$, and if $\hat{\tau}(z; c_1, c_2) \leq T$, it holds that
\begin{equation}
  P_1^{\hat{\tau}}(z)
  \leq
  \frac{c_2}{c_1}
  +
  8
  \com
  \
  P_1^{\hat{\tau} - 1}(z) \geq \frac{c_2}{c_1}
  \com
  \
    c_1 \sqrt{P_1^{\hat{\tau}}(z)}
    +
    \frac{c_2}{ \tsqrt{P_1^{\hat{\tau}-1}(z)} }
  \leq
  \sqrt{8} c_1 + 2 \sqrt{c_1 c_2}
  \per
  \n
\end{equation}
\end{lemma}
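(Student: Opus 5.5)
The plan is to follow the proof of \Cref{lem:Pinfty_tradeoff} almost verbatim. The only change is that the per-step increment bound $\nrm{h^\t - h^\tm}_\infty^2 \leq (\Adiff/2)^2 \nrm{z^\t-z^\tm}_1^2 \leq \Adiff^2$ from \Cref{lem:grad_diff} is replaced by the trivial bound $\nrm{z^\s - z^\sm}_1^2 \leq 4$.

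One caveat concerns the first round. Since $z^0 = \zeros$ by convention, the first increment is $\nrm{z^1}_1^2 = 1$. Every later increment is at most $4$, because $z^\s, z^\sm \in \Delta_d$. So all increments are bounded by $4$ in every case.

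First, I show $\hat\tau \geq 2$. By definition $P_1^1(z)$ is an empty sum, hence $0$, which cannot exceed $c_2/c_1 + 4 > 0$. Therefore the minimizing $t$ is at least $2$.

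Next, suppose $\hat\tau \leq T$. By minimality, $P_1^{\hat\tau - 1}(z) \leq c_2/c_1 + 4$, while $P_1^{\hat\tau}(z) > c_2/c_1 + 4$. The two quantities differ by exactly one increment:
\[
P_1^{\hat\tau}(z) = P_1^{\hat\tau-1}(z) + \nrm{z^{\hat\tau-1} - z^{\hat\tau-2}}_1^2 .
\]
Since this increment is at most $4$, I get the two bounds
\[
P_1^{\hat\tau}(z) \leq c_2/c_1 + 8
\quad\text{and}\quad
P_1^{\hat\tau-1}(z) \geq P_1^{\hat\tau}(z) - 4 > c_2/c_1 .
\]
These are the first two claims.

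For the third claim, I substitute the two bounds:
\[
c_1\sqrt{P_1^{\hat\tau}} + \frac{c_2}{\sqrt{P_1^{\hat\tau-1}}}
\leq c_1\sqrt{c_2/c_1 + 8} + \frac{c_2}{\sqrt{c_2/c_1}} .
\]
Then I apply subadditivity of $\sqrt{\cdot}$:
\[
c_1\sqrt{c_2/c_1+8} \leq \sqrt{c_1c_2} + \sqrt 8\, c_1,
\qquad
\frac{c_2}{\sqrt{c_2/c_1}} = \sqrt{c_1c_2}.
\]
Adding these gives $\sqrt8\,c_1 + 2\sqrt{c_1c_2}$, as claimed.

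There is no real obstacle. The only point needing care is the index convention: $P_1^t$ sums up to $t-1$, so the step from $\hat\tau-1$ to $\hat\tau$ adds exactly one increment, and the $z^0 = \zeros$ edge case must still respect the bound of $4$. Both of these are handled above.
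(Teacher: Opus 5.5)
Your proposal is correct and matches the paper's proof essentially line for line. Minimality of $\hat{\tau}$ gives $P_1^{\hat{\tau}-1}(z) \leq c_2/c_1 + 4$; bounding the single added increment $\nrm{z^{\hat{\tau}-1} - z^{\hat{\tau}-2}}_1^2$ by $4$ then yields both the upper bound on $P_1^{\hat{\tau}}(z)$ and the lower bound on $P_1^{\hat{\tau}-1}(z)$, and you substitute these and use subadditivity of the square root. Beyond the paper, you make explicit the claim $\hat{\tau} \geq 2$ and the $z^0 = \zeros$ edge case, and you use the $\ell_1$-norm for the increment where the paper's proof writes $\nrm{\cdot}_\infty$ by a typo.
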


\begin{proof}
For simplicity, we write $\hat{\tau} = \hat{\tau}(z; c_1, c_2)$, omitting the dependence on $z, c_1, c_2$ in this proof.
Since $\hat{\tau} \leq T$, the definition of $\hat{\tau}$ implies
$
  P_1^{\hat{\tau} - 1}(z) 
  \leq 
  c_2 / c_1 + 4
  ,
$
and thus 
\begin{equation}
  P_1^{\hat{\tau}}(z)
  =
  P_1^{\hat{\tau} - 1}(z)
  +
  \nrm{z^{\hat{\tau} - 1} - z^{\hat{\tau} - 2}}_\infty^2
  \leq
  \frac{c_2}{c_1}
  +
  8
  \per
  \n
\end{equation}
We also have
\begin{equation}
  P_1^{\hat{\tau} - 1}(z)
  =
  P_1^{\hat{\tau}}(z)
  -
  \nrm{z^{\hat{\tau} - 1} - z^{\hat{\tau} - 2}}_\infty^2
  \geq
  \frac{c_2}{c_1}
  \per
  \n
\end{equation}
Combining the last two inequalities, we obtain
\begin{equation}
  c_1 \sqrt{P_1^{\hat{\tau}}(z)}
  +
  \frac{c_2}{ \sqrt{P_1^{\hat{\tau}-1}(z)} }
  \leq
  c_1 
  \sqrt{
    \frac{c_2}{c_1}
    +
    8
  }
  +
  \frac{c_2}{\sqrt{c_2 / c_1}}
  \leq
  \sqrt{8} c_1 + 2 \sqrt{c_1 c_2}
  \com
  \n
\end{equation}
which completes the proof.
\end{proof}

\begin{theorem}\label{thm:main_nocom}
  In the honest regime, the social regret of the learning dynamics defined by \cref{eq:lr_nocom} is upper bounded as
  \begin{equation}
    \Reg_x^T + \Reg_y^T
    \lesssim
    (1 + \sqrt{\Adiff}) \sqrt{\Adiff} \log m
    \lesssim
    \begin{cases}
      \Adiff \log m & \mbox{if} \ \Adiff \geq 1 \com \\
      \sqrt{\Adiff} \log m & \mbox{if} \ \Adiff < 1  \per
    \end{cases}
    \n
  \end{equation}
  Consequently (by \Cref{thm:reg2nasheq}), the average play $\prn[\big]{ \frac1T \sumT x^\t, \frac1T \sumT y^\t}$ is an $O( (1 + \sqrt{\Adiff}) \sqrt{\Adiff} \log m / T)$-approximate Nash equilibrium.
\end{theorem}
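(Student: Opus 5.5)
We follow the template of the proof of \Cref{thm:main}, replacing the opponent-gradient path length in the learning rate by each player's own strategy path length. The stopping-time argument then runs through \Cref{lem:Pinfty_tradeoff_z} instead of \Cref{lem:Pinfty_tradeoff}.

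\textbf{Step 1 (RVU bounds).} We invoke \Cref{lem:adahedge_opt_neg_mainbody} for the $x$-player with $h^\t=-g^\t$, $m^\t=-g^\tm$ and $\nu^\t=\nrm{x^\tp-x^\t}_1^2$, and symmetrically for the $y$-player with $\nu^\t=\nrm{y^\tp-y^\t}_1^2$. This index matches $P_1^\t(x)$ up to a one-step shift, which we absorb as in the main proof. It gives
\[
\Reg_x^T\le\sqrt{32\bigl(P_\infty^{T}(g)+P_1^{T}(x)\bigr)M_x}-\sumT\frac{1}{4\eta_x^\t}\nrm{x^\tp-x^\t}_1^2 .
\]
By \cref{eq:gldiff2Adiff_main}, $P_\infty^T(g)\le \Abardiff^2 P_1^T(y)$, so
\[
\Reg_x^T\lesssim \sqrt{(\Adiff^2 P_1^T(y)+P_1^T(x))M}-\sumT\frac{\nrm{x^\tp-x^\t}_1^2}{4\eta_x^\t} ,
\]
and the analogous bound holds for $\Reg_y^T$. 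We sum the two bounds. Because $\eta_x^\t$ is nonincreasing, for any $\tau_x$ the negative term is at least $Q_1^{\tau_x}(x)/(4\eta_x^{\tau_x-1})$. Also $\eta_x^{\tau_x-1}\le\sqrt{M_x/P_1^{\tau_x-1}(x)}$.

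\textbf{Step 2 (stopping times).} We split $P_1^T(x)=P_1^{\tau_x}(x)+Q_1^{\tau_x}(x)$. The term $\sqrt{(1+\Adiff^2)Q_1^{\tau_x}(x)M}$ is absorbed against half of the negative term via $b\sqrt z-az\le b^2/(4a)$. This leaves
\[
\sqrt{P_1^{\tau_x}(x)M}+(1+\Adiff^2)M^{3/2}/\sqrt{P_1^{\tau_x-1}(x)} ,
\]
up to constants. We then choose $\tau_x=\min\{\hat\tau(x;c_1,c_2),T\}$ with $c_1=\sqrt M$ and $c_2\asymp(1+\Adiff^2)M^{3/2}$. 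By \Cref{lem:Pinfty_tradeoff_z} this is $\lesssim \sqrt M+\sqrt{c_1c_2}\lesssim (1+\Adiff)M$.

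\textbf{Step 3 (self-bounding and conclusion).} As in the main proof, we use the nonnegativity of $\Reg_x^T+\Reg_y^T$ together with $\eta_x^{\tau_x-1}\lesssim 1/(1+\Adiff)$, which follows from $P_1^{\tau_x-1}(x)\ge c_2/c_1$. This gives $Q_1^{\tau_x}(x)\lesssim M$, and $P_1^{\tau_x}(x)\lesssim(1+\Adiff^2)M$ comes from the lemma. Substituting these back yields $\Reg_x^T+\Reg_y^T\lesssim(1+\Adiff)M$. The cases $\hat\tau=\infty$ are handled as in Cases~2--4 of \Cref{thm:main_corrupted}.

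\textbf{Main obstacle: refining the scale.} The bound from Step 3 is off by scale: the target is $(1+\sqrt{\Adiff})\sqrt{\Adiff}$, which for $\Adiff<1$ is $\sqrt{\Adiff}$ rather than $1$. The cause is that $P_1(x)$ and $\Adiff^2P_1(y)$ are measured in different units. My first attempt would rebalance the absorbed quantity using $\Adiff^2 Q_1(y)$ against the $y$-player's negative term and $Q_1(x)$ against the $x$-player's. With $\eta^{-1}\gtrsim\sqrt{P_1/M}$, this gives a stopping threshold $c_2/c_1\asymp \Adiff M$. The resulting leftover is then $\sqrt{\Adiff}M+\Adiff M$, which matches $(1+\sqrt{\Adiff})\sqrt{\Adiff}\log m$. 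This step is delicate, and I expect the constant bookkeeping to take the most care.
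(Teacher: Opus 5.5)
There is a genuine gap, and it is exactly the step you flag as the main obstacle. It also cannot be closed, because the $\Adiff<1$ branch of the statement is false.

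\textbf{Why the rebalancing fails.} In your Step~1 bound the own-path term $\sqrt{32P_1^T(x)M_x}$ has a coefficient that does not depend on $\Adiff$. Its tail can only be absorbed by the $x$-player's own negative term. When $\Adiff$ is small, the only usable estimate there is $1/\eta_x^{\tau-1}\ge\sqrt{P_1^{\tau-1}(x)/M_x}$. So for any stopping threshold $c$ the leftover is of order $\sqrt{cM}+M^{3/2}/\sqrt{c}\ge 2M$. For your choice $c\asymp\Adiff M$ it is of order $M/\sqrt{\Adiff}$, which blows up.

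The cross term does not rescue this. With $c_1\asymp\Abardiff\sqrt M$ and $c_2\asymp\Abardiff^2M^{3/2}$, the stopping-time lemma gives $\Abardiff\sqrt M+\Abardiff^{3/2}M$, not $\sqrt{\Adiff}M+\Adiff M$.

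Step~2 has a related slip. The head of the $x$-path term carries a factor $\sqrt{1+\Abardiff^2}$, because $\Reg_y$ contributes $\sqrt{32P_\infty^T(\ell)M_y}$ with $P_\infty(\ell)\lesssim\Abardiff^2P_1(x)$. Hence $c_1\asymp(1+\Adiff)\sqrt M$, and the leftover is of order $(1+\Adiff)^{3/2}M$, not $(1+\Adiff)M$.

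\textbf{Why no argument can reach $\sqrt{\Adiff}$.} Take $m_x=m_y=2$ and $A=\epsilon\begin{pmatrix}1&1\\0&0\end{pmatrix}$ with $\epsilon\in(0,1]$, so $\Adiff=\Amax=\epsilon$.
\begin{itemize}
\item Then $g^t=(\epsilon,0)$ for every $t$ and $\ell^t=\epsilon x^t(1)\ones$. Hence $\Reg_y^T=0$.
\item For $t\ge2$ we have $x^t(1)=\sigma(\eta_x^t t\epsilon)$, where $\sigma(z)=1/(1+e^{-z})$.
\item Since $x^0=\zeros$, $P_1^t(x)\ge\nrm{x^1}_1^2=1$. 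Hence $\eta_x^t\le\sqrt{M_x}=2$ and $1-x^t(1)\ge\tfrac12e^{-2t\epsilon}$.
\item For $T\ge1/\epsilon$ this gives $\Reg_x^T+\Reg_y^T=\epsilon\sum_{t=1}^T(1-x^t(1))\ge e^{-2}/4$.
\end{itemize}
Meanwhile $(1+\sqrt{\Adiff})\sqrt{\Adiff}\log m\to0$ as $\epsilon\to0$, so the claimed bound fails. This does not hinge on the convention $x^0=\zeros$. With $x^0=x^1$, the first update already moves $x$ by a constant in $\ell_1$, after which $\eta_x^t$ is again bounded by an absolute constant.

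\textbf{How the paper reaches $\sqrt{\Adiff}$.} Its proof gets there only through two slips.
\begin{itemize}
\item When the two RVU bounds are added, the own-path heads $\sqrt{(P_1^{\tau_x}(x)+P_1^{\tau_y}(y))M}$ are dropped. With the paper's threshold $c_2/c_1=B^2M/\Abardiff$, these heads are of order $BM/\sqrt{\Abardiff}$.
\item It bounds $\sqrt{\Adiff\sqrt M\,B^2M^{3/2}}$ by $\sqrt{\Adiff(1+\Adiff)}\,M$, although $B^2=1+\Abardiff^2$. This also means the argument does not deliver the $\Adiff\ge1$ branch.
\end{itemize}
So for small $\Adiff$, the most your Steps~1--3 can give contains a term of order $\log m$ that does not shrink with $\Adiff$. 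The example shows this term is genuinely present, so you should not try to reach the stated rate.
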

When the scale parameter $\Amax$ is large, this regret bound yields the same regret guarantee and convergence rate as \Cref{thm:main}.
On the other hand, in the regime where $\Amax$ is small and approaches $0$, the dependence on the scale becomes worse by a factor of $1/\sqrt{\Amax}$.
As mentioned in \Cref{remark:zhang25scale}, this result highlights the difficulty of adapting to small values of $\Amax$ in scale-free online learning and scale-free learning dynamics.
Improving the convergence rate under no communication is an interesting direction for future work.

\begin{proof}
From \Cref{lem:adahedge_opt_neg_mainbody}, we have
\begin{align}
  \Reg_x^T
  &\leq
  \sqrt{ 32 (P_\infty^T(g) + P_\infty^T(x)) M_x}
  -
  \sumT \frac{1}{4 \eta_x^\t} \nrm{x^\tp - x^\t}_1^2
  \nn
  &\leq
  \sqrt{ 32
    \brk[\big]{ 
      P_\infty^{\tau_y}(g) \!+\! \Abardiff^2 Q_1^{\tau_y}(y) 
      \!+
      P_1^{\tau_x}(x) \!+\! Q_1^{\tau_x}(x) 
    } M_x}
  -
  \frac{1}{4 \eta_x^{\tau_x-1}} Q_1^{\tau_x}(x)
  \com
  \label{eq:reg_x_upper_1_main_nocom}
\end{align}
where the last inequality follows since $(\eta_x^\t)_t$ is nonincreasing and \cref{eq:gldiff2Adiff_main}.
Similarly,
\begin{equation}
  \Reg_y^T
  \leq
  \sqrt{ 32
    \brk[\big]{ 
      P_\infty^{\tau_x}(\ell) \!+\! \Abardiff^2 Q_1^{\tau_x}(x) 
      \!+\!
      P_1^{\tau_y}(y) \!+\! Q_1^{\tau_y}(y) 
    } M_y}
  -
  \frac{1}{4 \eta_y^{\tau_y-1}} Q_1^{\tau_y}(y)
  \per
  \label{eq:reg_y_upper_1_main_nocom}
\end{equation}
Let $B = \sqrt{1 + \Abardiff^2}$ for simplicity.
Then, by combining the upper bounds on $\Reg_x^T$ in~\cref{eq:reg_x_upper_1_main_nocom} and $\Reg_y^T$ in~\cref{eq:reg_y_upper_1_main_nocom}, we have
\begin{align}
  &
  \Reg_x^T + \Reg_y^T
  \nn
  &\lesssim
  \sqrt{
  \prn*{ 
    P_\infty^{\tau_y}(g) 
    + 
    P_\infty^{\tau_x}(\ell)
  } M}
  +
  \sqrt{ B^2 Q_1^{\tau_y}(y) M }
  -
  \frac{1}{4 \eta_y^{\tau_y-1}} Q_1^{\tau_y}(y)
  +
  \sqrt{ B^2 Q_1^{\tau_x}(x) M }
  -
  \frac{1}{4 \eta_x^{\tau_x-1}} Q_1^{\tau_x}(x)
  \nn
  &\lesssim
  \sqrt{
  \prn*{ 
    P_\infty^{\tau_y}(g) 
    + 
    P_\infty^{\tau_x}(\ell)
  } M}
  +
  B^2 M (\eta_x^{\tau_x-1} + \eta_y^{\tau_y-1})
  \nn
  &\lesssim
  \prn[\bigg]{
  \Abardiff \sqrt{P_1^{\tau_x}(x) M}
  +
  \frac{B^2 M^{3/2}}{\sqrt{P_1^{\tau_x-1}(x)}}
  }
  +
  \prn[\bigg]{
  \Abardiff \sqrt{P_1^{\tau_y}(y) M}
  +
  \frac{B^2 M^{3/2}}{\sqrt{P_1^{\tau_y-1}(y)}}
  }
  \com
  \label{eq:social_reg_upper_1_main_nocom}
\end{align}
where in the second inequality we considered the worst cases with respect to~$Q_1^{\tau_y}(y)$ and $Q_1^{\tau_x}(x)$ by the inequality $b\sqrt{z} - az \leq b^2 / (4 a)$ that holds for $a > 0$, $b \geq 0$, and $z \geq 0$,
and in the last inequality we used the subadditivity of $\sqrt{\cdot}$, $\Abardiff = \Adiff /2$, and the definitions of the learning rates in \cref{eq:lr_nocom}: $\eta_x^{\tau_x-1} \leq \sqrt{M_x / P_1^{\tau_x-1}(x)}$ and $\eta_y^{\tau_y-1} \leq \sqrt{M_y / P_1^{\tau_y-1}(y)}$.

Now using $\hat{\tau}$ defined in \cref{eq:tau_hat},
we choose the stopping time $\tau_x, \tau_y \in [T]$ by
\begin{align}
  \begin{split}
    \tau_x
    &=
    \min\set{\hat{\tau}(\ell), T}
    \com\
    \hat{\tau}(\ell)
    \coloneqq
    \hat{\tau}(\ell; \Abardiff \sqrt{M}, B^2 M^{3/2})
    \com
    \nn
    \tau_y
    &=
    \min\set{\hat{\tau}(g), T}
    \com\
    \hat{\tau}(g)
    \coloneqq
    \hat{\tau}(g; \Abardiff \sqrt{M}, B^2 M^{3/2})    
    \per
  \end{split}
\end{align}

In what follows, we focus on the case when $\hat{\tau}(\ell) \leq T$ and $\hat{\tau}(g) \leq T$.
Otherwise, the regret can be bounded by the similar argument; see \Cref{app:proof_thm_main} for details.

From \Cref{lem:Pinfty_tradeoff_z}, the first two terms in \cref{eq:social_reg_upper_1_main_nocom} is evaluated as 
\begin{align}
  \Abardiff \sqrt{P_1^{\tau_x}(x) M}
  +
  \frac{B^2 M^{3/2}}{\sqrt{P_1^{\tau_x-1}(x)}}
  &
  \lesssim
  \Adiff \sqrt{M} + \sqrt{\Adiff \sqrt{M} B^2 M^{3/2}}
  \nn
  &
  \lesssim
  \Adiff \sqrt{M} + \sqrt{\Adiff (1 + \Adiff)} M
  \nn
  &
  \lesssim
  \sqrt{\Adiff} (1 + \sqrt{\Adiff}) M
  \com
  \label{eq:Px_inf_main_nocom}
\end{align}
where we used $M \geq 1$.
Similarly, when $\tau_y \leq T$, the terms in \cref{eq:social_reg_upper_1_main_nocom} is upper bounded as
\begin{equation}
  \Abardiff \sqrt{P_1^{\tau_y}(y) M}
  +
  \frac{B^2 M^{3/2}}{\sqrt{P_1^{\tau_y-1}(y)}}
  \lesssim
  \sqrt{\Adiff} (1 + \sqrt{\Adiff}) M
  \per
  \label{eq:Py_inf_main_nocom}
\end{equation}
Combining \cref{eq:social_reg_upper_1_main_nocom} with~\cref{eq:Px_inf_main_nocom,eq:Py_inf_main_nocom}, we have
\begin{align}
  &
  \Reg_x^T + \Reg_y^T
  \lesssim
  \sqrt{\Adiff} (1 + \sqrt{\Adiff}) M
  \com
  \n
\end{align}
which is the desired bound.
\end{proof}

\section{Deferred Proofs for Multiplayer General-Sum Games from \Cref{sec:multiplayer}}\label{app:proof_multiplayer}
This section provides the proofs deferred from \Cref{sec:multiplayer}.
We first recall some notation.
The set $\calJ' = \set{ t \in [T] : \max_{i \in [n]} \nrm{u_i^\t}_\infty \geq 2 B^\t}$ is the set of rounds at which a jump occurs in \Cref{line:doubling_clipping} of \Cref{alg:multiple_player_swap}, and $\calI' = [T] \setminus \calJ'$ is the set of rounds with no jump.
For convenience, recall that we defined $\calJ = \calJ' \cup \set{t + 1 \in [T] : t \in \calJ'}$ and $\calI = [T] \setminus \calJ$.
Note that, from the definition of the clipped gradients in~\cref{eq:clipped_grad}, we have $B^\tp = B^\t$ and hence $\ubar_{i,a}^\t = u_{i,a}^\t$ when $t \in \calI$, whereas $B^\tp \neq B^\t$ and hence $\ubar_{i,a}^\t \neq u_{i,a}^\t$ when $t \in \calJ$.
We recall that 
$\nrm{h}_{x,f} = \sqrt{h^\top \nabla^2 f(x) h}$
and
$\nrm{h}_{*,x,f} = \sqrt{h^\top \prn{\nabla^2 f(x)}^{-1} h}$
are the local norm and its dual norm of a vector $h$ at a point $x$ with respect to a convex function $f$, respectively.
We also recall that $\phi(x) = - \sum_k \log(x(k))$ is the logarithmic barrier function.

\subsection{Stability analysis of stationary distributions}\label{subsec:stab_analysis}
Here we provide a stability analysis of stationary distributions of the Markov chain induced by $Q^\t$.
We begin with the following lemma, which upper bounds the increase of the reciprocal of the learning rate (a variant of \citealt[Lemma 29]{tsuchiya25corrupted}).
\begin{lemma}\label[lemma]{lem:lr_diff}
  Suppose that $t \in \calI' \supseteq \calI$.
  Then, the learning rate ${\eta_{i}^\t}$ in \cref{eq:oftrl_multiplayer}
  satisfies
  \begin{equation}
    \frac{1}{\eta_{i}^\tp} - \frac{1}{\eta_{i}^\t}
    \leq 
    \frac{4 \sqrt{\gamma} U^\t}{\alpha_i}
    \per 
    \n
  \end{equation}
\end{lemma}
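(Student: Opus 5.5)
The plan is to write the reciprocal learning rate as a maximum of two terms, observe that the $B$-dependent term does not change on rounds $t \in \calI'$, and bound the increment of the square-root term with the elementary inequality $\sqrt{a+b}-\sqrt{a}\le\sqrt{b}$.

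From \cref{eq:oftrl_multiplayer}, set $S^\t = \gamma (U^\tm)^2 + \sum_{j\in[n]} P_\infty^\t(\ubar_j)$. Then
\[
\frac{1}{\eta_i^\t} = \max\set*{\frac{\sqrt{S^\t}}{\alpha_i}, \frac{B^\t}{\beta_i}},
\]
and this formula also covers the degenerate case $\eta_i^\t=\infty$, where both terms vanish. Since $\max\set{a',b'} - \max\set{a,b} \le \max\set{a'-a, b'-b}$, it suffices to bound the increment of each term separately.

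For the second term, $t \in \calI'$ means no jump occurs at round $t$, so $B^\tp = B^\t$ by \Cref{line:doubling_clipping}. Hence its increment is zero.

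For the first term, I will use $\sqrt{S^\tp} - \sqrt{S^\t} \le \sqrt{S^\tp - S^\t}$, valid because $S^\tp \ge S^\t \ge 0$. Directly from the definitions,
\[
S^\tp - S^\t = \gamma\prn*{(U^\t)^2 - (U^\tm)^2} + \sum_{j\in[n]} \nrm{\ubar_j^\t - \ubar_j^\tm}_\infty^2 .
\]
The first summand is at most $\gamma (U^\t)^2$. For the second, clipping only shrinks vectors since $B^\s/B^{\s+1}\le 1$, so $\nrm{\ubar_j^\s}_\infty \le \nrm{u_j^\s}_\infty \le U^\t$ for every $s\le t$. Together with $\ubar_j^{(0)}=\zeros$, this gives $\nrm{\ubar_j^\t - \ubar_j^\tm}_\infty^2 \le 4 (U^\t)^2$. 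The sum is therefore at most $4n (U^\t)^2 = \gamma (U^\t)^2/2$, using $\gamma = 8n$. This is exactly where the choice of $\gamma$ matters.

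Combining, $S^\tp - S^\t \le \tfrac{3}{2}\gamma (U^\t)^2$. Therefore
\[
\frac{1}{\eta_i^\tp} - \frac{1}{\eta_i^\t} \le \frac{\sqrt{3\gamma/2}\, U^\t}{\alpha_i} \le \frac{4\sqrt{\gamma}\, U^\t}{\alpha_i}.
\]
There is no genuine obstacle here. The only points needing care are that the path length entering the learning rate is that of the clipped gradients, whose norms are bounded by the running maximum $U^\t$, and that $B$ stays fixed on $\calI'$. If the learning rate were instead defined through unclipped $u_j$, the same bound would hold verbatim.
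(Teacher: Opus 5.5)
Your proof is correct and is essentially the paper's argument. On $\calI'$ the $\beta_i/B$ branch of $1/\eta_i$ is frozen; the paper phrases this as monotonicity of $z\mapsto 1/\min\set{a,z}-1/\min\set{b,z}$ for $a\le b$, which is equivalent to your inequality $\max\set{a',b'}-\max\set{a,b}\le\max\set{a'-a,b'-b}$. Subadditivity of $\sqrt{\cdot}$ together with $\nrm{\ubar_j^s}_\infty\le U^t$ then bounds the increment of the square-root branch. Your version is slightly more careful than the written proof, which tracks only $P_\infty^{t}(\ubar_i)$ rather than the full sum $\sum_{j\in[n]}P_\infty^{t}(\ubar_j)$ in $\eta_i^t$; absorbing those $n$ increments via $\gamma=8n$, as you do, is exactly what is needed.
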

\begin{proof}  
From the definition of the learning rate $\eta_{i}^\t$, we have
\begin{align}
  \frac{1}{\eta_{i}^\tp} \!-\! \frac{1}{\eta_{i}^\t}
  &=
  \frac{1}{
  \min\set*{
    \frac{\alpha_i}{ \sqrt{ \gamma (U^\t)^2 + P_\infty^\tp(\ubar_{i}) } }
    ,
    \frac{\beta_i}{B^\tp}
  }
  }
  -
  \frac{1}{
  \min\set*{
    \frac{\alpha_i}{ \sqrt{ \gamma (U^\tm)^2 + P_\infty^\t(\ubar_{i}) } }
    ,
    \frac{\beta_i}{B^\t}
  }
  }
  \nn
  &=
  \frac{1}{
  \min\set*{
    \frac{\alpha_i}{ \sqrt{ \gamma (U^\t)^2 + P_\infty^\tp(\ubar_{i}) } }
    ,
    \frac{\beta_i}{B^\t}
  }
  }
  -
  \frac{1}{
  \min\set*{
    \frac{\alpha_i}{ \sqrt{ \gamma (U^\tm)^2 + P_\infty^\t(\ubar_{i}) } }
    ,
    \frac{\beta_i}{B^\t}
  }
  }
  \tag{since $t \in \calI'$}
  \nn
  &\leq
  \frac{1}{\alpha_i}
  \sqrt{ \gamma (U^\t)^2 + P_\infty^\tp(\ubar_{i})}
  -
  \frac{1}{\alpha_i}
  \sqrt{\gamma (U^\tm)^2 + P_\infty^\t(\ubar_{i})
  }
  \nn 
  &\leq 
  \frac{1}{\alpha_i}
  \sqrt{\gamma ((U^\t)^2 - (U_{i}^\tm)^2) + \nrm{\ubar_{i}^\t - \ubar_{i}^\tm}_\infty^2}
  \nn
  &\leq 
  \frac{1}{\alpha_i}
  \prn*{
    2 \sqrt{\gamma} U^\t + \sqrt{2} ( \nrm{\ubar_{i}^\t}_\infty + \nrm{\ubar_{i}^\tm}_\infty )
  }
  \leq 
  \frac{4 \sqrt{\gamma} U^\t}{\alpha_i}
  \com 
  \n
\end{align}
where 
in the first inequality we used the fact that 
$z \mapsto 1/\min\set{a, z} - 1/\min\set{b, z}$ is nondecreasing in $z$ when $a \leq b$,
in the second inequality we used the subadditivity of $\sqrt{\cdot}$,
and in the third inequality we used $\nrm{\cdot}_{*,y_a^\t,\phi} \leq \nrm{\cdot}_\infty$ and the triangle inequality,
and in the last inequality we used $\nrm{\ubar_i^\t}_\infty \leq U^\t$.
This completes the proof.
\end{proof}

The following lemma is useful to evaluate the stability of the Markov chain in the proof of \Cref{lem:diff_local_oftrl}.
\begin{lemma}[{\citealt[Lemma 28]{tsuchiya25corrupted}}]\label[lemma]{lem:localnorm_suffcond}
  Let $\calK$ be a closed and bounded nonempty convex set and $y \in \calK$.
  Let $\delta > 0$ and $f$ be a real-valued strictly convex function over $\calK$ and $x^* = \argmin_{x' \in \calK} f(x')$ be the unique minimizer of $f$.
  Suppose that for any $z \in \calK$ such that $\nrm{z - y} = \delta$ for a norm $\nrm{\cdot}$, it holds that $f(z) \geq f(y)$.
  Then, $\nrm{x^* - y} < \delta$.
\end{lemma}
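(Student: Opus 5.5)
The plan is to argue by contradiction along the segment joining $y$ and $x^*$, using only convexity of $\calK$ and strict convexity of $f$. Suppose, to the contrary, that $\nrm{x^* - y} \geq \delta$. In particular $x^* \neq y$ because $\delta > 0$. Since $x^*$ is the unique minimizer of $f$ over $\calK$ and $y \in \calK$ with $y \neq x^*$, we get the strict inequality $f(x^*) < f(y)$.

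Next I locate a point on the sphere of radius $\delta$ around $y$. Set $\lambda = \delta / \nrm{x^* - y} \in (0,1]$ and $z = y + \lambda (x^* - y) = (1-\lambda) y + \lambda x^*$. Then $z \in \calK$ by convexity of $\calK$. By homogeneity of the norm, $\nrm{z - y} = \lambda \nrm{x^* - y} = \delta$, so the hypothesis gives $f(z) \geq f(y)$.

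It remains to derive the contradiction, splitting on $\lambda$.
\begin{itemize}
\item If $\lambda = 1$, then $z = x^*$. The hypothesis gives $f(x^*) \geq f(y)$, contradicting $f(x^*) < f(y)$.
\item If $\lambda \in (0,1)$, strict convexity of $f$ applied to the distinct points $y, x^*$ yields $f(z) < (1-\lambda) f(y) + \lambda f(x^*)$. Since $f(x^*) < f(y)$, the right-hand side is at most $f(y)$, so $f(z) < f(y)$, again a contradiction.
\end{itemize}
Hence $\nrm{x^* - y} < \delta$.

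There is no real obstacle here. The only point needing care is that the sphere point $z$ is chosen on the segment itself, so that it lies in $\calK$ and strict convexity applies directly. The closedness and boundedness of $\calK$ are needed only to guarantee that the minimizer $x^*$ exists, which the statement already assumes.
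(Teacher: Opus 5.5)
Your segment argument is correct and is the standard proof of this fact. The paper gives no proof of its own and imports the lemma from \citet[Lemma 28]{tsuchiya25corrupted}.

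Two minor remarks. First, the case split on $\lambda$ is unnecessary: plain convexity gives $f(z) \le (1-\lambda) f(y) + \lambda f(x^*) < f(y)$ for every $\lambda \in (0,1]$, so strict convexity is used only through the uniqueness of $x^*$. Second, closedness and boundedness of $\calK$ alone would not guarantee that $x^*$ exists when $f$ is not lower semicontinuous at the boundary; this does not affect your proof, since the statement assumes $x^*$ exists.
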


We also exploit the fact that the log-barrier function $\phi(x) = - \sum_{k=1}^d \log(x(k))$ is a $d$-self-concordant barrier over the positive orthant.
In particular, we use the following two lemmas.
\begin{lemma}[{\citealt[Theorem 2.1.1]{nesterov94interior}}]\label[lemma]{lem:sc_hessian_stab}
  Let $\calS$ be an open convex subset of a finite-dimensional real vector space.
  Let $f$ be a self-concordant function on $\calS$.
  Then, for any $y \in \calS$ such that $\nrm{x - y}_{y,f} < 1$, 
  \begin{equation}
    \prn{1 - \nrm{x - y}_{y, f}}^2 
    \nabla^2 f(y) 
    \preceq
    \nabla^2 f(x) 
    \preceq 
    \frac{1}{\prn{1 - \nrm{x - y}_{y,f}}^2}
    \nabla^2 f(y) 
    \per 
    \n
  \end{equation}
\end{lemma}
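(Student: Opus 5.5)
The plan is the classical Nesterov--Nemirovskii argument: restrict $f$ to the segment from $y$ to $x$ and integrate a differential inequality for local norms along it. The only property of $f$ used is self-concordance, $\abs{D^3 f(z)[h,h,h]} \leq 2 \nrm{h}_{z,f}^3$ for all $z \in \calS$ and $h$.

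First set $r = \nrm{x - y}_{y,f} < 1$, $h = x - y$, and $x_s = y + s h$ for $s \in [0,1]$. I take for granted that $x_s \in \calS$ for all $s \in [0,1]$, i.e.\ that the open Dikin ellipsoid $\set{z : \nrm{z-y}_{y,f} < 1}$ lies in $\calS$. This is the companion part of Nesterov--Nemirovskii's Theorem~2.1.1. If it must be proved, I would use a continuation argument: let $\bar s$ be the supremum of $s$ with $[y,x_s] \subset \calS$, apply the estimates below on $[0,\bar s)$, and conclude that the Hessian stays bounded there. Since a self-concordant function is taken to blow up at $\partial\calS$, this rules out $\bar s < 1$.

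Next, define $\varphi(s) = \nrm{h}_{x_s,f}^2 = \inpr{\nabla^2 f(x_s) h, h}$. Then $\varphi'(s) = D^3 f(x_s)[h,h,h]$, so $\abs{\varphi'(s)} \leq 2 \varphi(s)^{3/2}$. If $h = 0$ there is nothing to prove. Otherwise $\varphi > 0$ by nondegeneracy, and $\abs{\frac{d}{ds}\varphi(s)^{-1/2}} \leq 1$. Integrating gives $\varphi(s)^{-1/2} \geq r^{-1} - s$, that is,
\begin{equation}
  \nrm{h}_{x_s,f} \leq \frac{r}{1 - s r} \quad \text{for } s \in [0,1] \per \n
\end{equation}

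Now fix any direction $u$ and let $\psi(s) = \nrm{u}_{x_s,f}^2$, so that $\psi'(s) = D^3 f(x_s)[h,u,u]$. Here I need the trilinear consequence of self-concordance,
\begin{equation}
  \abs{D^3 f(z)[h_1,h_2,h_3]} \leq 2 \nrm{h_1}_{z,f}\nrm{h_2}_{z,f}\nrm{h_3}_{z,f} \com \n
\end{equation}
which is the standard polarization fact for symmetric trilinear forms dominated by a quadratic form. Applying it gives $\abs{\psi'(s)} \leq 2\nrm{h}_{x_s,f}\,\psi(s) \leq \frac{2r}{1-sr}\psi(s)$. The case $\psi \equiv 0$ is trivial. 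Otherwise integrating $\abs{(\log\psi)'}$ over $[0,1]$ gives $\abs{\log(\psi(1)/\psi(0))} \leq -2\log(1-r)$. This yields $(1-r)^2\psi(0) \leq \psi(1) \leq (1-r)^{-2}\psi(0)$ for every $u$, which is exactly the claimed two-sided Loewner bound.

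I expect the main obstacle to be the trilinear inequality, which does not follow from the diagonal self-concordance bound by naive expansion. I would quote it directly from \citet{nesterov94interior}, as the paper does for the lemma itself, rather than reprove it. The domain-membership step is the other delicate point; I would handle it by the continuation argument sketched above.
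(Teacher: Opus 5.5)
Your proof is correct and is precisely the classical Nesterov--Nemirovskii argument that the paper relies on by citing \citet[Theorem 2.1.1]{nesterov94interior} instead of reproving it: restrict to the segment, obtain $\nrm{h}_{x_s,f} \leq r/(1-sr)$ from $\abs{\frac{d}{ds}\varphi^{-1/2}} \leq 1$, then integrate $\abs{(\log\psi)'}$ using the trilinear bound. You were right to flag that the statement tacitly needs $x \in \calS$ or the boundary blow-up property, and your appeal to nondegeneracy can be dropped: on compact subintervals the differential inequalities for $\varphi$ and $\psi$ are linear, so by Gronwall each function is either identically zero or strictly positive.
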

The following lemma is adopted from \citet[Lemma 22]{tsuchiya25corrupted}.
\begin{lemma}\label[lemma]{lem:scbarrier_properties}
  Let $f$ be a $\vartheta$-self-concordant barrier on $\calK$.
  Then, for any $x \in \interior(\calK)$ such that $\nabla^2 f(x)$ is invertible,
  it holds that $\nrm{\nabla f(x)}_{*,x,f}^2 \leq \vartheta$.
\end{lemma}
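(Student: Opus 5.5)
The plan is to read the claim straight off the definition of a $\vartheta$-self-concordant barrier, using a duality argument with the local norm. Recall that $f$ being a $\vartheta$-self-concordant barrier means, besides self-concordance, that
\begin{equation}
  \abs{\inpr{\nabla f(x), h}} \leq \sqrt{\vartheta}\, \nrm{h}_{x,f}
  \quad \text{for all } x \in \interior(\calK) \text{ and } h \in \R^d
  \per
  \n
\end{equation}
Some references state the barrier parameter directly as $\nrm{\nabla f(x)}_{*,x,f}^2 \leq \vartheta$. In that case the lemma is immediate, so I would work from the inequality form above.

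Fix $x \in \interior(\calK)$ with $\nabla^2 f(x)$ invertible. Set $h = (\nabla^2 f(x))^{-1} \nabla f(x)$, which is the natural dual direction. Then
\begin{equation}
  \inpr{\nabla f(x), h} = \nabla f(x)^\top (\nabla^2 f(x))^{-1} \nabla f(x) = \nrm{\nabla f(x)}_{*,x,f}^2
  \com
  \n
\end{equation}
and
\begin{equation}
  \nrm{h}_{x,f}^2 = h^\top \nabla^2 f(x) h = \nabla f(x)^\top (\nabla^2 f(x))^{-1} \nabla f(x) = \nrm{\nabla f(x)}_{*,x,f}^2
  \per
  \n
\end{equation}
Plugging these into the defining inequality gives $\nrm{\nabla f(x)}_{*,x,f}^2 \leq \sqrt{\vartheta}\, \nrm{\nabla f(x)}_{*,x,f}$. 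If $\nrm{\nabla f(x)}_{*,x,f} = 0$ there is nothing to prove. Otherwise, dividing by it yields $\nrm{\nabla f(x)}_{*,x,f} \leq \sqrt{\vartheta}$, and squaring gives the claim.

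There is no real obstacle here. The only point needing care is the convention for the barrier parameter, since definitions differ by whether $\vartheta$ bounds the norm or its square. The invertibility hypothesis is exactly what makes $\nrm{\cdot}_{*,x,f}$ well defined.

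As a sanity check for the case used later, take the log-barrier $\phi(x) = -\sum_k \log x(k)$. Here $\nabla \phi(x) = (-1/x(k))_k$ and $\nabla^2\phi(x) = \diag(1/x(k)^2)$. Hence $\nrm{\nabla \phi(x)}_{*,x,\phi}^2 = \sum_k x(k)^2 / x(k)^2 = d$, which matches $\vartheta = d$ with equality.
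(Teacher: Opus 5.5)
Your proof is correct. Plugging $h = (\nabla^2 f(x))^{-1}\nabla f(x)$ into the defining inequality $\abs{\inpr{\nabla f(x), h}} \le \sqrt{\vartheta}\,\nrm{h}_{x,f}$ is the standard derivation; equivalently, one takes the supremum over $\nrm{h}_{x,f} \le 1$ to obtain the dual local norm. The paper gives no proof of its own and simply adopts the statement from \citet[Lemma 22]{tsuchiya25corrupted}; your argument is the usual one behind that citation.
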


Using \Cref{lem:scbarrier_properties,lem:localnorm_suffcond,lem:sc_hessian_stab,lem:lr_diff}, we prove the following lemma, which guarantees the stability of the Markov chain under the adaptive learning rate $\eta_{i}^\t$ in \cref{eq:oftrl_multiplayer}.
In the lemma, we ignore the player index $i \in [n]$ for notational simplicity; for example, we abbreviate $x_i^\t$ as $x^\t$, $y_{i,a}^\t$ as $y_a^\t$, $u_{i,a}^\t$ as $u_a^\t$, $u_i^\t$ as $u^\t$, $\eta_{i}^\t$ as $\eta^\t$, and $\calA_i$ as $\calA$, where we recall that we use $a$ for the index for actions and $i$ for the index of players.

\begin{lemma}\label[lemma]{lem:diff_local_oftrl}
  Suppose that $T \geq 3$ and 
  consider the following OFTRL update in~\cref{eq:oftrl_multiplayer}:
  \begin{align}
    y_{a}^\t
    =
    \argmax_{y \in \Delta_{m}}
    \set*{
      - \Phi_a^\t(y) 
    }
    \com
    \quad
    \Phi_a^\t(y)  
    \coloneqq
    -
    \eta^\t
    \inpr[\bigg]{y, \ubar_{a}^\tm + \sum_{s=1}^{t-1} \ubar_{a}^\s}
    +
    \phi(y)
    \per
    \n
  \end{align}
  Then, if $t \in \calI' \supseteq \calI$, it holds that 
  \begin{equation}\label{eq:diff_local_oftrl}
    \sum_{a \in \calA}
    \nrm{y_a^\tp - y_a^\t}_{y_a^\t,\phi} 
    \leq 
    \frac12 \per
  \end{equation}
\end{lemma}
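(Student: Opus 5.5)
We follow the standard local-norm stability argument for OFTRL with a self-concordant barrier (as in \citet[Lemma 29]{tsuchiya25corrupted}), applied separately to each expert $a \in \calA$ and then summed. Fix $a$ and $t \in \calI'$. The key point of $t \in \calI'$ is that $B^\tp = B^\t$, so the two branches of $\eta^\t$ and $\eta^\tp$ share the same cap $\beta/B^\t$, and \Cref{lem:lr_diff} applies. We show $\nrm{y_a^\tp - y_a^\t}_{y_a^\t,\phi} \leq \delta_a$ with explicit $\delta_a$ such that $\sum_a \delta_a \leq 1/2$. We use \Cref{lem:localnorm_suffcond} with $f = \Phi_a^\tp$, $y = y_a^\t$, and the norm $\nrm{\cdot}_{y_a^\t,\phi}$: it suffices to show $\Phi_a^\tp(z) \geq \Phi_a^\tp(y_a^\t)$ for every $z \in \Delta_m$ with $\nrm{z - y_a^\t}_{y_a^\t,\phi} = \delta_a$.

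For such $z$, Taylor-expand $\Phi_a^\tp$ around $y_a^\t$. By \Cref{lem:sc_hessian_stab}, the second-order term is at least $\frac{(1-\delta_a)^2}{2}\delta_a^2 \geq \delta_a^2/8$ for $\delta_a \leq 1/2$. For the first-order term, use the first-order optimality of $y_a^\t$ for $\Phi_a^\t$, namely $\inpr{\nabla \Phi_a^\t(y_a^\t), z - y_a^\t} \geq 0$. Write $\nabla\Phi_a^\tp = (\eta^\tp/\eta^\t)\nabla\Phi_a^\t + (1 - \eta^\tp/\eta^\t)\nabla\phi + \text{(gradient difference)}$. This leaves two error terms.

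The first error term is $(1-\eta^\tp/\eta^\t)\inpr{\nabla\phi(y_a^\t), z-y_a^\t}$. It is bounded by $\eta^\tp(1/\eta^\tp - 1/\eta^\t)\sqrt{m}\,\delta_a$ using \Cref{lem:scbarrier_properties}. Then \Cref{lem:lr_diff} gives the bound $\eta^\tp \cdot 4\sqrt{\gamma}U^\t\sqrt m\,\delta_a/\alpha$.

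The second error term is $\eta^\tp\inpr{\ubar_a^\t + \ubar_a^\t - \ubar_a^\tm \text{-type differences}, z - y_a^\t}$. Concretely, the linear part of $\nabla\Phi_a^\tp$ minus $(\eta^\tp/\eta^\t)$ times that of $\nabla\Phi_a^\t$ is $-\eta^\tp(2\ubar_a^\t - \ubar_a^\tm)$. Its inner product is bounded by $\eta^\tp\nrm{2\ubar_a^\t - \ubar_a^\tm}_{*,y_a^\t,\phi}\delta_a \leq \eta^\tp(2x^\t(a)+x^\tm(a))B^\t\delta_a$-ish. Here we use $\nrm{h}_{*,y,\phi} \leq \nrm{h}_\infty$ on the simplex and $\nrm{\ubar_a^s}_\infty \leq x^s(a)\nrm{\ubar^s}_\infty \leq 2x^s(a)B^s$ by the clipping. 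With $\eta^\tp \leq \beta/B^\tp = \beta/B^\t$, this is at most $O(\beta(x^\t(a)+x^\tm(a)))\delta_a$.

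For the lemma-\ref{lem:lr_diff} term, use $\eta^\tp \leq \alpha/\sqrt{\gamma}U^\t$ from the first branch. This gives a bound $4\sqrt m\,\delta_a$, which is too big. So instead we combine it with $\eta^\tp\leq \beta/B^\t$ and $U^\t \leq 2B^\tp = 2B^\t$, which gives $8\sqrt{\gamma}\beta\sqrt m/\alpha$. Since $\alpha = m\sqrt{\log T}$ and $\beta = 1/(256\sqrt m)$, this is tiny, of order $\sqrt{n}/(m^{3/2})$ up to constants. This works when $T\geq 3$ so that $\log T \geq 1$. The main obstacle is making these error terms summable over $a$. Choose $\delta_a = c\beta(x^\t(a)+x^\tm(a)) + c'/m$. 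The condition $\Phi_a^\tp(z)\ge\Phi_a^\tp(y_a^\t)$ then reduces to $\delta_a^2/8 \geq (\text{error coefficient})\,\delta_a$, which holds for suitable constants. Summing gives $\sum_a\delta_a \leq 2c\beta + c' \leq 1/2$ given the constant $256$ in $\beta$ and $\gamma = 8n$ vs.\ $\alpha$. The delicate bookkeeping lies in verifying that the $\sqrt{\gamma}$ factor is absorbed by $\alpha_i = m_i\sqrt{\log T}$ and $\beta_i$ for all $n$. If it is not, we would fall back on bounding $U^\t$ via \Cref{lem:lr_diff} with the first branch of the learning rate more carefully.
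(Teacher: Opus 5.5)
Your route is a legitimate alternative to the paper's, and the skeleton is sound, but the closing arithmetic does not go through with the stated parameters. The paper's own proof fails at the same place.

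\textbf{How the two routes differ.} The paper forms the joint function $\Psi^{t+1}(\bm w)=\sum_a\Phi_a^{t+1}(w_a)$ on the product of simplices. It checks $\Psi^{t+1}(\bm z)\ge\Psi^{t+1}(\bm y^t)$ on the sphere of radius $1/(2\sqrt m)$ in the product local norm, then gets $\sum_a\|y_a^{t+1}-y_a^t\|_{y_a^t,\phi}\le 1/2$ by Cauchy--Schwarz. You instead apply \Cref{lem:localnorm_suffcond} expert by expert, with a radius $\delta_a$ tuned to that expert's first-order coefficient $E_a$ (where $\langle\nabla\Phi_a^{t+1}(y_a^t),z-y_a^t\rangle\ge -E_a\|z-y_a^t\|_{y_a^t,\phi}$), and add the radii.

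The ingredients are the same:
\begin{itemize}
\item the Taylor expansion with \Cref{lem:sc_hessian_stab};
\item the split of $\nabla\Phi_a^{t+1}(y_a^t)$ into $-\eta^{t+1}(2\bar u_a^t-\bar u_a^{t-1})$ and $(1-\eta^{t+1}/\eta^t)\nabla\phi(y_a^t)$;
\item \Cref{lem:lr_diff} and \Cref{lem:scbarrier_properties};
\item $\eta^{t+1}U^t\le 2\beta$, from $\eta^{t+1}\le\beta/B^{t+1}$ and $U^t\le 2B^{t+1}$.
\end{itemize}
Your variational inequality $\langle\nabla\Phi_a^t(y_a^t),z-y_a^t\rangle\ge0$ is more accurate than the paper's $\nabla\Phi_a^t(y_a^t)=0$. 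Your bookkeeping also needs only $\sum_a E_a\le 1/16$. The paper bounds every $\|h_a\|$ by the full radius $1/(2\sqrt m)$, which needs $\sum_a E_a\le 1/(16\sqrt m)$.

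\textbf{The gap.} It sits exactly where you hesitated. The uniform part of $E_a$ is
\[
\eta^{t+1}\cdot\tfrac{4\sqrt{\gamma}\,U^t}{\alpha}\cdot\sqrt m\;\le\;\tfrac{8\beta\sqrt{\gamma m}}{\alpha}\;=\;\tfrac{\sqrt{8n}}{32\,m\sqrt{\log T}} ,
\]
which is of order $\sqrt n/m$, not $\sqrt n/m^{3/2}$, and is the same for all $m$ experts.
\begin{itemize}
\item Your condition $\delta_a\ge 8E_a$ therefore forces $\sum_a\delta_a\ge\sqrt{n/(2\log T)}$. This already exceeds $1/2$ at $n=2$, $T=3$.
\item Even the sharp form $\delta_a\ge 2E_a/(1-\delta_a)^2$ forces $\sum_a\delta_a\ge\sqrt{2n}/(8\sqrt{\log T})$, which exceeds $1/2$ once $n>8\log T$.
\item So $c'$ cannot be an absolute constant. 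With $\alpha_i=m_i\sqrt{\log T}$, $\beta_i=1/(256\sqrt{m_i})$, $\gamma=8n$, the claim $\sum_a\delta_a\le1/2$ does not follow for all $T\ge3$ and $n$.
\item The fallback you spell out, $\eta^{t+1}\le\alpha/(\sqrt\gamma U^t)$, gives $4\sqrt m$ per expert and is worse.
\end{itemize}

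You are not missing an idea from the paper. Its last display treats $4m\sqrt\gamma/\alpha=4\sqrt{8n/\log T}$ as $O(1)$. It also drops the $\sqrt m$ from $\|\nabla\phi(y_a^t)\|_{*,y_a^t,\phi}\le\sqrt m$, and compares against $1/32$ instead of the $1/(32m)$ derived just before.

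\textbf{A repair.} Adjust the parameters so that $\sqrt\gamma$ is absorbed, e.g.\ take $\beta_i=1/(512\sqrt{m_i n})$. Then, using $n\ge2$ and $\log T\ge\log 3$,
\[
\sum_a 8E_a\;\le\;48\beta+\tfrac{64\beta m\sqrt{\gamma m}}{\alpha}\;\le\;0.07+\tfrac{\sqrt 8}{8\sqrt{\log T}}\;<\;\tfrac12 ,
\]
and your per-expert argument closes. A smaller $\beta_i$ keeps the rest of the proof of \Cref{lem:swapreg_pre_bound} valid. It inflates the $U^T m_i^2\log T/\beta_i$ term by $\sqrt n$, which still fits within the $n^{3/2}$ of \Cref{thm:indiv_swapreg}.
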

This is a variant of \citet[Lemma 30]{tsuchiya25corrupted}, modified to work for the scale-free and scale-invariant setting.
\begin{proof}[Proof of \Cref{lem:diff_local_oftrl}]
We recall that $m = \abs{\calA}$ 
and 
use $\calM_m = (\Delta_m)^m = \Delta_m \times \cdots \times \Delta_m$ to denote the Cartesian product of $m$ probability simplices.\footnote{We use $\calM_m$ to denote the Cartesian product of $m$ probability simplices since this is equivalent to the set of all row stochastic matrices.}
Define a strictly convex function $\Psi^\tp \colon \calM_m \to \R$ by
\begin{equation}
  \Psi^\tp(\bm{w})
  =
  \Psi^\tp(w_1, \dots, w_m) 
  = 
  \sum_{a \in \calA} \Phi_a^\tp (w_a)
  \per
  \nonumber
\end{equation}
Note that
for any $\hbm = (h_1, \dots, h_m) \in \calM_m$, 
the local norm $\nrm{\hbm}_{\ybm^\t, \Psi^\tp}$ is given by
\begin{align}
  \nrm{\hbm}_{\ybm^\t, \Psi^\tp} 
  &=
  \sqrt{
    \hbm^\top
    \diag\prn*{
      \set[\big]{
        \nabla^2 \Phi_a^\tp(y_a^\t)
      }_{a \in \calA}
    }
    \hbm
  }
  \nn
  &=
  \sqrt{
    \sum_{a \in \calA} 
    h_a^\top \nabla^2 \Phi_a^\tp(y_a^\t) h_a
  }
  =
  \sqrt{
    \sum_{a \in \calA} 
    \nrm{h_a}_{y_a^\t, \Phi_a^\tp}^2
  }
  \per
  \label{eq:prod_local_norm}
\end{align}
Now from the fact that $y_a^\tp$ is the minimizer of the strongly convex function $\Phi_a^\tp$ for each $a \in \calA$,
the point $\ybm^\t \coloneqq (y_1^\t, \dots, y_m^\t) \in \calM_m$ is the unique minimizer of $\Psi^\tp$.
Hence from \Cref{lem:localnorm_suffcond}, to prove the claim of the lemma,
it suffices to prove that 
for any $\zbm = (z_1, \dots, z_m) \in \calM_m$ satisfying $\nrm{\zbm - \ybm^\t}_{\ybm^\t, \Psi^\tp} = 1/(2 \sqrt{m})$, it holds that $\Psi^\tp(\zbm) \geq \Psi^\tp(\ybm^\t)$.
In fact, if this is proven, then \Cref{lem:localnorm_suffcond} implies $\nrm{\ybm^\tp - \ybm^\t}_{\ybm^\t, \Psi^\tp} \leq 1/(2 \sqrt{m})$, and thus the LHS of \cref{eq:diff_local_oftrl} is upper bounded by
\begin{equation}
  \sum_{a \in \calA} \nrm{y_a^\tp - y_a^\t}_{y_a^\t, \phi}
  =
  \sum_{a \in \calA} \nrm{y_a^\tp - y_a^\t}_{y_a^\t, \Phi_a^\tp}
  \!\leq\!
  \sqrt{
    m
    \sum_{a \in \calA} \nrm{y_a^\tp \!- y_a^\t}_{y_a^\t, \Phi_a^\tp}^2
  }
  \!=\!
  \sqrt{m} \nrm{\ybm^\tp \!- \ybm^\t}_{y^\t, \Psi^\tp}
  \!\leq\!
  \frac12
  \com 
  \n
\end{equation}
where the first inequality follows from the Cauchy--Schwarz inequality
and 
the equality follows from~\cref{eq:prod_local_norm}.
This is the claim of \Cref{lem:diff_local_oftrl}.

In what follows, we will prove that 
for any $\zbm = (z_1, \dots, z_m) \in \calM_m$ satisfying $\nrm{\zbm - \ybm^\t}_{\ybm^\t, \Psi^\tp} = 1/(2\sqrt{m})$, it holds that $\Psi^\tp(\zbm) \geq \Psi^\tp(\ybm^\t)$.
Let $h_a = z_a - y_a^\t \in \R^m$ for each $a \in \calA$.
Note that this $h_a$ satisfies
\begin{equation}\label{eq:new_refactor}
  \nrm{h_a}_{y_a^\t, \phi}
  =
  \nrm{z_a - y_a^\t}_{y_a^\t, \phi}
  =
  \nrm{z_a - y_a^\t}_{y_a^\t, \Phi_a^\tp}
  \leq
  \nrm{\zbm - \ybm^\t}_{\ybm^\t, \Psi^\tp}
  =
  \frac{1}{2 \sqrt{m}}
  \per
\end{equation}
Let us fix $a \in \calA$ and we will lower bound $\Phi_a^\tp(z_a)$.
From Taylor's theorem,
there exists a point $\xi_a^\t = \gamma z_a + (1 - \gamma) y_a^\t$ for some $\gamma \in [0, 1]$ such that
\begin{equation}\label{eq:taylor_suma}
  \Phi_a^\tp(z_a) 
  =
  \Phi_a^\tp(y_a^\t)
  +
  \inpr{ \nabla \Phi_a^\tp (y_a^\t), h_a}
  +
  \frac12 h_a^\top \nabla^2 \Phi_a^\tp(\xi_a^\t) h_a
  \per
\end{equation}

We will lower bound the second term in the RHS of \cref{eq:taylor_suma} below.
From the first-order optimality condition at $y_a^\t$, we have
\begin{align}
  \nabla \Phi_a^\tp (y_a^\t)
  &=
  -
  \eta^\tp \prn*{
    2 \ubar_a^\t + \sum_{s=1}^{t} \ubar_a^\s
  }
  + 
  \nabla \phi(y_a^\t)
  \nn 
  &=
  - \eta^\tp (2 \ubar_a^\t - \ubar_a^\tm)
  -
  \frac{\eta^\tp}{\eta^\t}
  \eta^\t
  \prn*{
    \ubar_a^\tm + \sum_{s=1}^{t-1} \ubar_a^\s
  }
  + 
  \nabla \phi(y_a^\t)
  \nn 
  &=
  - \eta^\tp (2 \ubar_a^\t - \ubar_a^\tm)
  -
  \frac{\eta^\tp}{\eta^\t}
  \nabla \phi(y_a^\t)
  + 
  \nabla \phi(y_a^\t)
  \nn 
  &=
  - \eta^\tp \brk*{
    (2 \ubar_a^\t - \ubar_a^\tm)
    -
    \prn*{\frac{1}{\eta^\tp} - \frac{1}{\eta^\t}}
    \nabla \phi(y_a^\t)
  }
  \com
  \label{eq:firstorder_diff_local_oftrl}
\end{align}
where the third equality follows from 
$\nabla \Phi_a^\t(y_a^\t) = - \eta^\t (\ubar_a^\tm + \sum_{s=1}^{t-1} \ubar_a^\s) + \nabla \phi(y_a^\t) = 0$, which follows from the first-order optimality condition of $y_a^\t$.
Hence, the term in the RHS of \cref{eq:taylor_suma} is lower bounded as 
\begin{align}
  \inpr{ \nabla & \Phi_a^\tp (y_a^\t), h_a}
  \geq 
  -
  \eta^\tp
  \nrm*{ 
    - 2 \ubar_a^\t + \ubar_a^\tm 
    + 
    \prn*{\frac{1}{\eta^\tp} - \frac{1}{\eta^\t}} \nabla \phi(y_a^\t)
  }_{*, y_a^\t, \phi}
  \nrm{h_a}_{y_a^\t, \phi}
  \tag{by \cref{eq:firstorder_diff_local_oftrl} and H\"{o}lder}
  \nn 
  &\geq 
  -
  \eta^\tp
  \prn*{
    \nrm{ 
      - 2 \ubar_a^\t + \ubar_a^\tm 
    }_{*, y_a^\t, \phi}
    +
    \prn*{\frac{1}{\eta^\tp} - \frac{1}{\eta^\t}}
    \nrm{ 
      \nabla \phi(y_a^\t)
    }_{*, y_a^\t, \phi}
  }
  \cdot
  \frac{1}{2 \sqrt{m}}
  \tag{by \cref{eq:new_refactor}}
  \nn
  &\geq 
  -
  \frac{\eta^\tp}{2 \sqrt{m}}
  \prn*{
    \nrm{ 
      - 2 \ubar_a^\t + \ubar_a^\tm 
    }_\infty
    +
    \frac{4 \sqrt{\gamma} U^\t}{\alpha}
    \nrm{ 
      \nabla \phi(y_a^\t)
    }_{*, y_a^\t, \phi}
  }
  \tag{$
  \nrm{\cdot}_{*,y_a^\t,\phi} 
  \leq 
  \nrm{\cdot}_\infty
  $ and \Cref{lem:lr_diff} with the assumption that $t \in \calI'$}
  \nn
  &\geq 
  -
  \frac{\eta^\tp}{2 \sqrt{m}}
  \prn*{
    U^\t
    (
    2 x^\t(a)
    +
    x^\tm(a)
    )
    +
    \frac{4 \sqrt{\gamma} U^\t}{\alpha}
  }
  \nn
  &=
  -
  \frac{\eta^\tp \, U^\t}{2 \sqrt{m}}
  \prn*{
    2 x^\t(a)
    +
    x^\tm(a)
    +
    \frac{4 \sqrt{\gamma}}{\alpha}
  }
  \com
  \label{eq:taylor_second_suma_mid}
\end{align}
where the fourth inequality follows from 
$\nrm{\ubar_{a}^\t}_\infty \leq U^\t x^\t(a)$ and $\nrm{\nabla \phi(y_a^\t)}_{*, y_a^\t, \phi} \leq \sqrt{m}$,
which holds since $\phi$ is $m$-self-concordant barrier and \Cref{lem:scbarrier_properties}.

Combining \cref{eq:taylor_suma} with \cref{eq:taylor_second_suma_mid} gives
\begin{align}
  &
  \Phi_a^\tp(z_a)
  \nn
  &
  \geq
  \Phi_a^\tp(y_a^\t)
  \!-\!
  \frac{\eta^\tp \, U^\t}{2 \sqrt{m}}
  \prn[\bigg]{
    2 x^\t(a) \!+\! x^\tm(a)
    \!+\!
    \frac{4 \sqrt{\gamma}}{\alpha}
  }
  +
  \frac12
  \nrm{h_a}_{\xi_a^\t, \Phi_a^\tp}^2
  \per 
  \label{eq:Faz_lower}
\end{align}

We next consider the last term in the RHS of \cref{eq:Faz_lower}.
From the property of self-concordant barriers in \Cref{lem:sc_hessian_stab},
\begin{equation}\label{eq:taylor_third_suma_pre}
  \nrm{h_a}_{\xi_a^\t, \phi}^2
  \!\geq\!
  \prn[\Big]{1 - \nrm{y_a^\t - \xi_a^\t}_{y_a^\t, \phi}}^2
  \nrm{h_a}_{y_a^\t, \phi}^2
  \!=\!
  \prn[\Big]{1 - \gamma \nrm{z_a - y_a^\t}_{y_a^\t, \phi}}^2
  \nrm{h_a}_{y_a^\t, \phi}^2
  \!\geq\!
  \frac14 \nrm{h_a}_{y_a^\t, \phi}^2
  \com
\end{equation}
where the last inequality follows from \cref{eq:new_refactor}.
Using this inequality, we can lower bound the last term in the RHS of \cref{eq:Faz_lower} as
\begin{align}\label{eq:taylor_third_suma}
  &
  \frac12
  \sum_{a \in \calA}
  \nrm{h_a}_{\xi_a^\t, \Phi_a^\tp}^2
  =
  \frac{1}{2}
  \sum_{a \in \calA}
  \nrm{h_a}_{\xi_a^\t, \phi}^2
  \nn
  &\geq 
  \frac{1}{8}
  \sum_{a \in \calA}
  \nrm{h_a}_{y_a^\t, \phi}^2
  =
  \frac{1}{8}
  \sum_{a \in \calA}
  \nrm{h_a}_{y_a^\t, \Phi_a^\tp}^2
  =
  \frac{1}{8}
  \nrm{\hbm}_{\ybm^\t, \Psi^\tp}^2
  =
  \frac{1}{32 m}
  \com 
\end{align}
where the first inequality follows from \cref{eq:taylor_third_suma_pre},
the third equality from \cref{eq:prod_local_norm},
and the last equality from $\nrm{\hbm}_{\ybm^\t, \Psi^\tp} = 1/(2\sqrt{m})$.

Therefore, summing up the inequality \cref{eq:Faz_lower} over $a \in \calA$ and using \cref{eq:taylor_third_suma}, we obtain
\begin{align}
  \Psi^\tp(\zbm)
  =
  \sum_{a \in \calA}
  \Phi_a^\tp(z_a) 
  &
  \geq
  \sum_{a \in \calA}
  \Phi_a^\tp(y_a^\t)
  -
  \frac{\eta^\tp U^\t}{2 \sqrt{m}}
  \prn*{3 + \frac{4 m \sqrt{\gamma}}{\alpha}}
  +
  \frac{1}{32}
  \nn 
  &
  \geq 
  \sum_{a \in \calA} \Phi_a^\tp(y_a^\t)
  =
  \Psi^\tp(\ybm^\t)
  \com 
  \n
\end{align}
where in the first inequality we used the fact that $x^\t, x^\tm \in \Delta_m$ are elements in the probability simplex 
and 
in the last inequality we used $T \geq 3$ and $\eta^\t \leq \sqrt{m} / (48 U^\t)$.
This completes the proof of \Cref{lem:diff_local_oftrl}.
\end{proof}

Now we can prove the following lemma relating the stability of the output of $m_i$-experts and the stability of the Markov chain defined by~$Q_i^\t$.
\begin{lemma}\label[lemma]{lem:stationary_stab}
  We assume the conditions of \Cref{lem:diff_local_oftrl}.
  Then, it holds that
  \begin{equation}
    \nrm{x^\t - x^\tm}_1^2 
    \leq 
    64 \abs{\calA}
    \sum_{a \in \calA}
    \nrm{ y_{a}^\t - y_{a}^\tm }_{y_a^\tm, \phi}^2
    \per 
    \n
  \end{equation}
\end{lemma}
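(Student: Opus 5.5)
The approach is the Markov-chain perturbation bound used by \citet{anagnostides22uncoupled} and \citet[Lemma 31]{tsuchiya25corrupted}. We reduce stability of the stationary distribution to stability of the rows of $\Qmat^\t$, measured in the local norms of the log-barrier. \Cref{lem:diff_local_oftrl} (applied at round $t-1$) guarantees that these rows move by at most $1/2$ in total local norm. The argument goes through the Markov chain tree theorem.

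First, we write the stationary distribution via the Markov chain tree theorem. Up to normalization, $x^\t(a)$ equals $\sum_{\mathcal{T}\in\mathbb{T}_a}\prod_{(b\to c)\in\mathcal{T}} y_b^\t(c)$, where $\mathbb{T}_a$ is the set of arborescences rooted at $a$. Each summand is a monomial of degree $m-1$ in the entries of $\Qmat^\t$, using exactly one entry from each row $b\neq a$.

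Second, we control the ratio of each entry across rounds. Set $\rho_b = \nrm{y_b^\t - y_b^\tm}_{y_b^\tm,\phi}$. Since $\nrm{h}_{y,\phi}^2 = \sum_c h(c)^2/y(c)^2$, every entry satisfies $\abs{y_b^\t(c)/y_b^\tm(c) - 1} \le \rho_b$. By \Cref{lem:diff_local_oftrl} at round $t-1$ (we need $t-1\in\calI'$, which is exactly the case handled there), we have $\sum_b \rho_b \le 1/2$. Multiplying these ratio bounds over the edges of any tree, each monomial at time $t$ lies within a factor $\prod_b(1\pm\rho_b)$ of its value at time $t-1$. The sum $S_a$ over trees, and the normalizer $\sum_a S_a$, inherit the same bounds. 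We then use $\prod_b(1+\rho_b)\le e^{\sum_b\rho_b}\le 1+2\sum_b\rho_b$ and $\prod_b(1-\rho_b)\ge 1-\sum_b\rho_b$, both valid because $\sum_b\rho_b\le 1/2$. This gives
\[
x^\t(a)/x^\tm(a)\in\Bigl[\tfrac{1-\sum_b\rho_b}{1+2\sum_b\rho_b},\,\tfrac{1+2\sum_b\rho_b}{1-\sum_b\rho_b}\Bigr],
\]
so $\abs{x^\t(a)-x^\tm(a)}\le 8\,x^\tm(a)\sum_b\rho_b$ for a suitable constant.

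Third, we sum over $a$ and use $x^\tm\in\Delta_m$ to get $\nrm{x^\t-x^\tm}_1\le 8\sum_b\rho_b$. Squaring and applying Cauchy--Schwarz gives $(\sum_b\rho_b)^2\le \abs{\calA}\sum_b\rho_b^2$, which yields the claimed bound with constant $64\abs{\calA}$.

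The main obstacle is the constant bookkeeping in the second step, namely getting the multiplicative-perturbation inequality with a factor no larger than $8$. This relies on the smallness $\sum_b\rho_b\le1/2$ from \Cref{lem:diff_local_oftrl}, which is exactly why that lemma was stated with the summed local norms.
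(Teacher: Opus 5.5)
Your proposal is correct and follows the paper's proof. The paper likewise bounds the entrywise ratio deviation $\mu_a^\t=\max_b\abs{1-y_a^\t(b)/y_a^\tm(b)}$ by the local norm, gets $\sum_a\mu_a^\t\le 1/2$ from \Cref{lem:diff_local_oftrl} (implicitly at the shifted round, as you correctly point out), cites \citet[Eq.~(26)]{anagnostides22uncoupled} for $\nrm{x^\t-x^\tm}_1\le 8\sum_a\mu_a^\t$, and finishes with Cauchy--Schwarz. The only difference is that you re-derive that cited inequality via the Markov chain tree theorem, and with $s=\sum_b\rho_b\le 1/2$ your bounds give $(1+2s)/(1-s)\le 1+6s$ and $(1-s)/(1+2s)\ge 1-3s$, so the constant is $6\le 8$ and your worry about the bookkeeping is unnecessary.
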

This lemma will be used in the proof of \Cref{thm:indiv_swapreg} to evaluate the negative term in the RVU bound in~\cref{eq:oftrl_logbarrier}.
\begin{proof}
We have
\begin{equation}\label{eq:mua2yadiff}
  \mu_a^\t 
  \coloneqq
  \max_{b \in \calA} 
  \abs*{
    1 - \frac{y_a^\t(b)}{y_a^\tm(b)}
  }
  \leq  
  \sqrt{
    \sum_{b \in \calA} 
    \prn*{
      1 - \frac{y_a^\t(b)}{y_a^\tm(b)}
    }^2
  }
  =
  \nrm{y_a^\t - y_a^\tm}_{y_a^\tm, \phi}
  \per
\end{equation}
Taking the summation over $\calA$ of this inequality and using \Cref{lem:diff_local_oftrl},
we have
$\sum_{a \in \calA} \mu_a^\t \leq \nrm{y_a^\t - y_a^\tm}_{y_a^\tm, \phi} \leq 1/2$.
Hence, combining this with \citet[Eq.~(26) in the proof of Lemma 4.2]{anagnostides22uncoupled} gives
$
\nrm{x^\t - x^\tm}_1
\leq 
8 \sum_{a \in \calA} \mu_a^\t 
.
$
Finally, using the last inequality and the Cauchy--Schwarz inequality, we obtain
\begin{equation}
  \nrm{x^\t - x^\tm}_1^2
  \leq 
  64
  \prn*{
    \sum_{a \in \calA} \mu_a^\t 
  }^2
  \leq 
  64 \abs{\calA} 
  \sum_{a \in \calA} \prn*{\mu_a^\t}^2
  \leq 
  64 \abs{\calA}
  \sum_{a \in \calA}
  \nrm{y_a^\t - y_a^\tm}_{y_a^\tm, \phi}^2
  \com 
  \n
\end{equation}
where the last inequality from \cref{eq:mua2yadiff}.
This completes the proof.
\end{proof}

\subsection{Upper bounding the swap regret using lemmas from \Cref{subsec:stab_analysis}}\label{app:pre_swap_subsec}
Here we will upper bound the swap regret using the preliminary lemmas from the last section.
In this part of the analysis, the use of doubling clipping in the design of \Cref{alg:multiple_player_swap} plays an important role. 
We first prepare the following three lemmas, which are properties of the doubling clipping.

\begin{lemma}\label{lem:doubling_clipping_prop}
Suppose that $t \in \calI$. Then, it holds that $B^\tp = B^\t = B^\tm$,  $\ubar_{i,a}^\t = u_{i,a}^\t$, $\ubar_{i,a}^\tm = u_{i,a}^\tm$,  $\ubar_{i}^\t = u_{i}^\t$, and $\ubar_{i}^\tm = u_{i}^\tm$ for each $i \in [n]$ and $a \in \calA_i$.
\end{lemma}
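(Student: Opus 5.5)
The plan is to unwind the definitions of $\calJ'$, $\calJ$, and $\calI$ and then read the claims off the clipping rule in \cref{eq:clipped_grad}. The whole argument rests on one observation: $B^\tp \neq B^\t$ holds exactly when $t \in \calJ'$.

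Indeed, by \Cref{line:doubling_clipping}, if $\max_{i} \nrm{u_i^\t}_\infty < 2 B^\t$, then $k = 0$ already satisfies $2^{k+1} B^\t \geq \max_i \nrm{u_i^\t}_\infty$. Since $k=0$ is the smallest admissible index, we get $B^\tp = B^\t$. Conversely, if $\max_i \nrm{u_i^\t}_\infty \geq 2B^\t$, then the index $k=0$ fails, because $2B^\t \geq \max_i\nrm{u_i^\t}_\infty$ would require equality. This boundary case needs a little care: the paper defines $\calJ'$ via $\geq$, so the convention must be matched. In any event, the only property I actually need is the direction that $t \notin \calJ'$ implies $B^\tp = B^\t$.

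Now fix $t \in \calI = [T]\setminus\calJ$. By the definition of $\calJ$, we have $t \notin \calJ'$, so $B^\tp = B^\t$. Also, $t$ is not of the form $s+1$ with $s \in \calJ'$, so $t-1 \notin \calJ'$ (when $t \geq 2$), which gives $B^\t = B^\tm$.

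The remaining claims follow from \cref{eq:clipped_grad}. We have $\ubar_{i,a}^\t = (B^\t/B^\tp) u_{i,a}^\t = u_{i,a}^\t$ and likewise $\ubar_i^\t = u_i^\t$. Similarly, $\ubar_{i,a}^\tm = (B^\tm/B^\t) u_{i,a}^\tm = u_{i,a}^\tm$ and $\ubar_i^\tm = u_i^\tm$. These hold for every $i$ and $a$, because $B^\t$ is shared across players.

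There are two edge cases. For $t = 1$, the round-$0$ quantities are the zero vector by convention, so $\ubar^{0} = u^{0} = \zeros$. One also needs $B^{0} = B^1$, which I would take as the initialization convention. For the base case more generally, I would note that $B^1$ is initialized so that the convention $\ubar^{(0)} = \zeros$ makes the $t-1$ identities trivial.

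The main (minor) obstacle is the strict-versus-nonstrict inequality at the boundary $\max_i\nrm{u_i^\t}_\infty = 2B^\t$. I would handle it by defining a jump directly as $B^\tp \neq B^\t$, which agrees with the paper's informal description in \Cref{subsec:tech_swap}. With that definition, the lemma becomes immediate.
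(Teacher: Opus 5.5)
Your proof is correct and follows the paper's argument: $t\notin\calJ'$ gives $B^\tp=B^\t$, $t-1\notin\calJ'$ gives $B^\t=B^\tm$, and the clipping rule then yields all the stated identities. As you note, only the implication $t\notin\calJ'\Rightarrow B^\tp=B^\t$ is needed, and it already holds under the paper's non-strict definition of $\calJ'$. The converse you began to argue actually fails at equality, where $k=0$ is admissible, but you do not use it, so no redefinition of jumps is required.
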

\begin{proof}
If $t \in \calI'$ is not the round with jump, we then have $B^\tp = B^\t$, and thus $\ubar_{i,a}^\t = u_{i,a}^\t$ and $\ubar_{i}^\t = u_{i}^\t$.
Hence, if $t \in \calI$, from the definition of $\calI$, we have $t - 1 \not\in \calJ'$ and thus $B^\t = B^\tm$.
This implies $\ubar_{i,a}^\tm = u_{i,a}^\tm$ and  $\ubar_{i}^\tm = u_{i}^\tm$ for $t \in \calI$. 
\end{proof}

\begin{lemma}\label{lem:BU_relation}
For $t \in [T]$, it holds that $B^\tp \leq 2 U^\t$.
For $t \in \calI$, it holds that $B^\tp = B^\t = B^\tm \leq 2 U^\tm$.
\end{lemma}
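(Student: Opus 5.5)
The plan is to argue directly from the update rule in \Cref{line:doubling_clipping}: $B^\tp = \min\set{2^k B^\t : k \in \set{0}\cup\N,\ 2^{k+1} B^\t \geq \max_{j} \nrm{u_j^\t}_\infty}$. Write $V^\t \coloneqq \max_{j \in [n]} \nrm{u_j^\t}_\infty$, so that $U^\t = \max_{s \le t} V^\s$.

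For the first claim, I split into two cases.

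\textbf{No jump.} Suppose $t \in \calI'$, so $k = 0$ and $B^\tp = B^\t$. I then argue by induction on $t$ that $B^\tp \le 2U^\t$.
\begin{itemize}
\item Base case: I take the initialization $B^1$ to be of order $V^1 = \omega$ (by the convention of \cref{fn:def_omega}, e.g.\ $B^1 = \omega$ or $\omega/2$, so $B^1 \le 2U^1$).
\item Inductive step: $B^\tp = B^\t \le 2U^\tm \le 2U^\t$, since $U$ is nondecreasing.
\end{itemize}

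\textbf{Jump.} Suppose $t \in \calJ'$, so $V^\t \ge 2B^\t$ and $k \ge 1$ is the smallest integer with $2^{k+1}B^\t \ge V^\t$. Minimality gives $2^{k}B^\t < V^\t$, i.e.\ $B^\tp = 2^k B^\t < V^\t \le U^\t \le 2U^\t$. In this case the bound even holds without the factor $2$.

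Combining the two cases gives $B^\tp \le 2U^\t$ for all $t \in [T]$.

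For the second claim, I take $t \in \calI$. Then \Cref{lem:doubling_clipping_prop} gives $B^\tp = B^\t = B^\tm$. Applying the first claim at round $t-1$ gives $B^\t \le 2U^\tm$, which is what we want. The only edge case is $t = 1$. There $B^1$ is the initialization, and I handle it via the same convention on $B^1$ relative to $U^0$: interpret $U^0$ as $\omega$, or note that round $1$ is treated as part of $\calJ$.

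The only delicate point is the initialization and base case of the induction, i.e.\ pinning down $B^1$ in terms of $\omega$. Everything else is immediate from the minimality in the definition of $k$ and the monotonicity of $U^\t$.
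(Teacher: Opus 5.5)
Your proposal is correct and takes the same route as the paper. The paper reads $B^{t+1}\le 2U^t$ off the update rule, then combines it at round $t-1$ with \Cref{lem:doubling_clipping_prop}; your jump/no-jump split and induction simply spell out the paper's ``from the definitions'' step. One small point: at the boundary $\max_j\|u_j^t\|_\infty = 2B^t$ the minimal $k$ is $0$, not $\ge 1$, but then $B^{t+1}=B^t$ and your no-jump case covers it. You are also right that the lemma tacitly needs $B^1\le 2U^1$ (i.e.\ $B^1\le 2\omega$ under the assumption $\hat{t}=1$) and a convention at $t=1$ for the second claim, where $B^0$ and $U^0$ are undefined; the paper leaves both implicit.
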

\begin{proof}
From the definitons of $B^\t$ and $U^\t$, it holds for any $t \in [T]$ that $B^\tp \leq 2 U^\t$.
Hence, using this and \Cref{lem:doubling_clipping_prop} with $t \in \calI$, we have $B^\tp = B^\t = B^\tm \leq 2 U^\tm$.
\end{proof}

\begin{lemma}\label{lem:clip_grad_upper_onestep_back}
For all $t \in [T]$ and $i \in [n]$, it holds that
$
  \nrm{\ubar_i^\t}_\infty \leq 2 B^\t
  .
$
\end{lemma}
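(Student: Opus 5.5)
The claim follows directly from the definition of the clipped gradient in \cref{eq:clipped_grad}, $\ubar_i^\t = (B^\t/B^\tp)\, u_i^\t$, together with the defining property of the update in \Cref{line:doubling_clipping}. So the plan is to bound $\nrm{u_i^\t}_\infty$ in terms of $B^\tp$ and then multiply by the factor $B^\t/B^\tp$. If the denominator $B^\tp$ vanished the statement would be degenerate; by the convention that $\max_i \nrm{u_i^1}_\infty = \omega > 0$, all $B^\t$ are positive, and I will assume this throughout.

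\textbf{Step 1 (bound on $\nrm{u_i^\t}_\infty$).} By construction, $B^\tp = 2^k B^\t$ for the smallest $k \in \set{0} \cup \N$ satisfying $2^{k+1} B^\t \geq \max_{j \in [n]} \nrm{u_j^\t}_\infty$. Hence
\[
\nrm{u_i^\t}_\infty \leq \max_{j \in [n]} \nrm{u_j^\t}_\infty \leq 2^{k+1} B^\t = 2 B^\tp .
\]

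\textbf{Step 2 (rescaling).} Multiplying Step 1 by $B^\t/B^\tp$ gives
\[
\nrm{\ubar_i^\t}_\infty = \frac{B^\t}{B^\tp}\nrm{u_i^\t}_\infty \leq \frac{B^\t}{B^\tp} \cdot 2B^\tp = 2B^\t .
\]
This holds uniformly over whether or not round $t$ is a jump round: if $k=0$ the rescaling factor is $1$, and otherwise the factor $2^{-k}$ exactly cancels the growth of $B$.

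\textbf{Main obstacle.} The only point requiring care is the initialization of $B^1$ and confirming that the minimum defining $B^\tp$ is attained, so that the inequality $2^{k+1}B^\t \geq \max_j \nrm{u_j^\t}_\infty$ actually holds at the chosen $k$. This needs $B^\t > 0$, which follows inductively from $B^1 > 0$ (e.g., $B^1$ set from the first nonzero observation $\omega$) since $B^\tp \geq B^\t$ by construction. Once positivity is in place, the argument is a two-line consequence of the definitions.
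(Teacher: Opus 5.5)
Your proof is correct and follows the same argument as the paper: bound $\nrm{u_i^\t}_\infty \le 2B^\tp$ from the definition of the update, then multiply by $B^\t/B^\tp$. The paper splits into jump rounds ($t\in\calJ'$) and non-jump rounds ($t\in\calI'$), while you get the bound uniformly from $2^{k+1}B^\t \ge \max_j \nrm{u_j^\t}_\infty$, and you also make explicit the positivity of $B^\t$ that the paper leaves implicit.
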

\begin{proof}
For $t \in \calJ'$ we have $\nrm{u_i^\t}_\infty \leq B^\tp$,
and for $t \in \calI'$ we have $\nrm{u_i^\t}_\infty \leq 2 B^\t \leq 2 B^\tp$ from the definition of $B^\t$.
Using these inequalities, we have
\begin{equation}
  \nrm{\ubar_i^\t}_\infty = \frac{B^\t}{B^\tp} \nrm{u_i^\t}_\infty \leq 2 B^\t
  \com
  \n
\end{equation}
which is the desired bound.
\end{proof}

The following lemma is useful to evaluate the regret coming from the jump clipping so that the swap regret upper bound does depend on the initial gradient value $\omega$ (see \Cref{fn:def_omega} for the formal definition).
\begin{lemma}\label{lem:ubar_sum_jump_rounds}
It holds for any $i \in [n]$ and $a \in \calA_i$ that
\begin{align}
  \sum_{t \in \calJ} \nrm{\ubar_{i,a}^\t}_\infty
  &\leq
  \sum_{t \in \calJ} \nrm{\ubar_i^\t}_\infty
  \leq 
  16 \Umax
  \com
  \nn
  \sum_{t \in \calJ} \nrm{\ubar_i^\t}_\infty^2
  &\leq 
  64 \Umax^2
  \per
  \n
\end{align}
\end{lemma}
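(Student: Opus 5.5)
The first inequality in each line is immediate: $\ubar_{i,a}^\t = x_i^\t(a)\,\ubar_i^\t$ and $x_i^\t(a)\le 1$. For the second line, \Cref{lem:clip_grad_upper_onestep_back} gives $\nrm{\ubar_i^\t}_\infty\le 2B^\t$, so $\sum_{t\in\calJ}\nrm{\ubar_i^\t}_\infty^2 \le \max_{t\in\calJ}\nrm{\ubar_i^\t}_\infty\cdot\sum_{t\in\calJ}\nrm{\ubar_i^\t}_\infty$. It therefore suffices to prove $\sum_{t\in\calJ}\nrm{\ubar_i^\t}_\infty\le 16\Umax$ and $\nrm{\ubar_i^\t}_\infty\le 4\Umax$ for every $t$. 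Both reduce to controlling $B^\t$ on the rounds in $\calJ$.

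\textbf{Geometric control of $B$ along jump rounds.} List the jump rounds as $t_1<t_2<\dots<t_K$, so $\calJ'=\{t_1,\dots,t_K\}$. Since $B$ only changes at jumps and at least doubles each time, $B^{t_k+1}\ge 2B^{t_k}$, and $B^{t_k}$ is nondecreasing in $k$. By the definition of $B^\tp$ as the smallest admissible power, $B^{t_k+1}\le \max_{j}\nrm{u_j^{t_k}}_\infty\le \Umax$. This is the point where "$B$ small implies the utilities were small" is used, and \Cref{lem:BU_relation} records the same fact. Consequently $B^{t_k}\le B^{t_K+1}2^{-(K-k+1)}\le \Umax 2^{-(K-k+1)}$ and $B^{t_k+1}\le \Umax 2^{-(K-k)}$. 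The rounds of $\calJ$ are the $t_k$ together with the $t_k+1$. For both kinds, $B^\t\le B^{t_k+1}\le \Umax 2^{-(K-k)}$, because at round $t_k+1$ the value is $B^{t_k+1}$.

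\textbf{Summing.} Combining with \Cref{lem:clip_grad_upper_onestep_back},
\[
\sum_{t\in\calJ}\nrm{\ubar_i^\t}_\infty \le \sum_{k=1}^K 2\bigl(B^{t_k}+B^{t_k+1}\bigr) \le \sum_{k=1}^K 4\Umax 2^{-(K-k)} \le 8\Umax\le 16\Umax.
\]
For the squared sum, every $t\in\calJ$ has $\nrm{\ubar_i^\t}_\infty\le 2B^\t\le 2\Umax$, hence $\sum_{t\in\calJ}\nrm{\ubar_i^\t}_\infty^2\le 2\Umax\cdot 8\Umax\le 64\Umax^2$.

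\textbf{Expected obstacle.} The only delicate part is the bookkeeping of the initial value $B^1$, which the paper leaves implicit and presumably sets by convention. For instance, $B^1$ might be initialized as $0$ or via $\omega$, or $t=1$ might be forced to count as a jump. Two checks are needed. First, the bound $B^{t_k+1}\le \max_j\nrm{u_j^{t_k}}_\infty$ must remain valid at the first jump. Second, the sum of $\nrm{\ubar_i^\t}_\infty$ over the first round must be covered; there $\nrm{\ubar_i^1}_\infty\le\nrm{u_i^1}_\infty\le\Umax$, since $B^\t/B^\tp\le1$. The slack between the $8\Umax$ derived above and the claimed $16\Umax$ is meant to absorb any such boundary term. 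Every bound depends only on $\Umax$, with no $\log(\Umax/\omega)$ factor, because the geometric series is anchored at its top value $\Umax$ rather than at its bottom value $\omega$.
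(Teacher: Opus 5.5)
Your proof is correct and is essentially the paper's argument: bound $\nrm{\ubar_i^t}_\infty \le 2B^t$ by \Cref{lem:clip_grad_upper_onestep_back}, then sum $B^t$ over $\calJ$ as a doubling geometric series dominated by its largest term, and bound the squared sum by the maximum times the sum. The paper indexes the values of $B$ as $2^k\omega \le 2\Umax$, whereas you count jump rounds backward from the last one, which also makes your version independent of how $B^1$ is initialized. One small caveat: $B^{t_k+1} \le \max_j \nrm{u_j^{t_k}}_\infty$ follows from the formula in \Cref{line:doubling_clipping} but not from the paper's verbal description of the update, and \Cref{lem:BU_relation} only gives $B^{t+1} \le 2U^t$; your targets $16\Umax$ and $4\Umax$ already absorb that factor of two.
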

\begin{proof}
Using \Cref{lem:clip_grad_upper_onestep_back}, we have
\begin{align}
  \sum_{t \in \calJ} \nrm{\ubar_{i}^\t}_\infty
  \leq
  2 \sum_{t \in  \calJ} B^\t
  \leq
  4 \sum_{k \in \N \cup \set{0} : 2^k \omega \leq 2 \Umax} 2^k \omega
  \leq
  16 \Umax
  \com
  \n
\end{align}
where the last inequality follows since, for
\begin{equation}
  k^*
  \coloneqq 
  \max\set{k \in \N : 2^k \omega \leq 2 \Umax} 
  =
  \floor{\log_2 (2 \Umax / \omega)}
  \com
  \n
\end{equation}
it holds that
\begin{equation}\label{eq:2kdelta_sum_upper}
  \sum_{k \in \N \cup \set{0} : 2^k \omega \leq 2 \Umax} 2^k \omega
  =
  \sum_{k = 0}^{k^*} 2^k \omega
  =
  2^{k^* + 1} \omega
  -
  \omega
  \leq
  4 \Umax - \omega
  \leq
  4 \Umax
  \com
\end{equation}
where the second inequality follows from $2^{k^* + 1} \omega \leq 2 \cdot 2^{k^*} \omega \leq 2 \cdot 2 \Umax = 4 \Umax$ from the definition of $k^*$.

By the similar argument, we obtain
\begin{align}
  \sum_{t \in \calJ} \nrm{\ubar_{i}^\t}_\infty^2
  \leq
  4 \sum_{t \in  \calJ} (B^\t)^2
  \leq
  8 \sum_{k \in \N \cup \set{0} : 2^k \omega \leq 2 \Umax} \prn*{2^k \omega}^2
  \leq
  64 \Umax^2
  \per
  \n
\end{align}
Here, the first inequality follows from \Cref{lem:clip_grad_upper_onestep_back} and the last inequality follows from
\begin{equation}
  \sum_{k \in \N \cup \set{0} : 2^k \omega \leq 2 \Umax} \prn{2^k \omega}^2
  =
  \sum_{k=0}^{k^*} \prn{2^k \omega}^2
  \leq
  2 \Umax
  \sum_{k = 0}^{k^*} 2^k \omega
  \leq
  8 \Umax^2
  \com
  \n
\end{equation}
where we used~\cref{eq:2kdelta_sum_upper}.
This completes the proof.
\end{proof}

We then upper bound the swap regret $\SwapReg_{x_i}^T$ by the following lemma.
\begin{lemma}\label[lemma]{lem:swapreg_pre_bound}
\Cref{alg:multiple_player_swap} achieves
\begin{align}
  &
  \SwapReg_{x_i}^T
  \nn
  &
  \leq
  \frac{U^\T m_i^2 \log T}{\beta_i}
  +
  \frac{m_i^2 \log T}{\alpha_i}
  \sqrt{ \gamma (U^\T)^2 + \sum_{j \in [n]} P_\infty^{T+1}(\ubar_j)}
  +
  16 \alpha_i \sqrt{P_\infty^{T+1}(\ubar_i)}
  \nn
  &\qquad
  -
  \frac{1}{2^{11} m_i \alpha_i}
  \sum_{t \in \calI}
  \sqrt{ \gamma (U^\tm)^2 + \sum_{j \in [n]} P_\infty^\t(\ubar_{j}) }
  \nrm{x_i^\tp - x_i^\t}_1^2
  +
  \Umax (38 m_i + 8)
  \per
  \label{eq:swap_upper_indiv_p}
\end{align}
\end{lemma}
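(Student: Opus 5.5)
We start from the Blum--Mansour decomposition, $\SwapReg_{x_i}^T(M)=\sum_{a\in\calA_i}\Reg_{i,a}^T(M(a,\cdot))$, and bound each expert's regret by running the argument on the clipped utilities. For each expert $a$ we write
$\Reg_{i,a}^T(y^*)=\sumT\inpr{y^*-y_{i,a}^\t,\ubar_{i,a}^\t}+\sumT\inpr{y^*-y_{i,a}^\t,u_{i,a}^\t-\ubar_{i,a}^\t}$.
The bias term is nonzero only for $t\in\calJ'$, and there $\nrm{u_{i,a}^\t-\ubar_{i,a}^\t}_\infty\le x_i^\t(a)\nrm{u_i^\t}_\infty\le x_i^\t(a)\Umax$. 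Summing over $a$ gives at most $2\Umax\abs{\calJ'}$, so we need a sharper bound. Since $\calJ'$ consists of doubling rounds, $\nrm{u_i^\t}_\infty\le B^\tp\le 2U^\t$, and the geometric argument of \Cref{lem:ubar_sum_jump_rounds} (applied to $B^\tp$ in place of $B^\t$) bounds $\sum_{t\in\calJ'}B^\tp\lesssim\Umax$. This yields an $O(\Umax)$ contribution independent of $\omega$.

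For the clipped regret we apply \Cref{lem:oftrl_logbarrier} to each expert, with losses $h^\t=-\ubar_{i,a}^\t$, predictions $m^\t=-\ubar_{i,a}^\tm$, and the set $\calI$. The stability hypothesis $\nrm{y_{i,a}^\tp-y_{i,a}^\t}_{y_{i,a}^\t,\phi}\le1/2$ on $\calI$ comes from \Cref{lem:diff_local_oftrl}. The individual terms are handled as follows.

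\textbf{Penalty term.} Using $1/\eta_i^{T+1}\le B^{T+1}/\beta_i+\alpha_i^{-1}\sqrt{\gamma(U^T)^2+\sum_jP_\infty^{T+1}(\ubar_j)}$ and $B^{T+1}\le2U^T$ (\Cref{lem:BU_relation}), then summing over the $m_i$ experts, we obtain the first two terms of \cref{eq:swap_upper_indiv_p}, up to constant adjustment.

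\textbf{Non-$\calI$ terms.} The terms $2\sum_{t\notin\calI}\nrm{\ubar_{i,a}^\t-\ubar_{i,a}^\tm}_\infty$ are bounded by $\nrm{\ubar_{i,a}^\t}_\infty+\nrm{\ubar_{i,a}^\tm}_\infty$ and controlled by \Cref{lem:ubar_sum_jump_rounds}. A round $t\in\calJ$ has $t$ or $t-1$ in $\calJ'$, and we sum over $a$ using $\sum_a x_i^\t(a)=1$. The $6L$ terms are bounded by $6\Umax$ per expert. Together these give the $\Umax(38m_i+8)$ constant.

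\textbf{Variance term.} For $t\in\calI$, \Cref{lem:doubling_clipping_prop} gives $\ubar=u$, so
\[
\nrm{u_{i,a}^\t-u_{i,a}^\tm}_{*,y,\phi}^2\le\nrm{x_i^\t(a)u_i^\t-x_i^\tm(a)u_i^\tm}_\infty^2\le2x_i^\t(a)^2\nrm{u_i^\t-u_i^\tm}_\infty^2+2(x_i^\t(a)-x_i^\tm(a))^2\nrm{u_i^\tm}_\infty^2 .
\]
Summing over $a$, the first piece is at most $2\nrm{\ubar_i^\t-\ubar_i^\tm}_\infty^2$. With $\eta_i^\t\le\alpha_i/\sqrt{P_\infty^\t(\ubar_i)}$ and the standard $\sum z_t/\sqrt{\sum_{s<t}z_s}$ argument, this contributes at most $16\alpha_i\sqrt{P_\infty^{T+1}(\ubar_i)}$, after absorbing the first term via the $\gamma(U^\tm)^2$ floor. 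The second piece is at most $2\nrm{x_i^\t-x_i^\tm}_1^2(U^\tm)^2$, and since $\eta_i^\t\le\beta_i/B^\t$ and $U^\tm\lesssim$ scale, it is absorbed into the negative term.

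\textbf{Negative term.} By \Cref{lem:stationary_stab} (shifted by one round),
\[
\sum_a\frac{1}{16\eta_i^\t}\nrm{y_{i,a}^\tp-y_{i,a}^\t}^2_{y_{i,a}^\t,\phi}\ge\frac{1}{2^{10}m_i\eta_i^\t}\nrm{x_i^\tp-x_i^\t}_1^2 .
\]
We then lower bound $1/\eta_i^\t\ge\alpha_i^{-1}\sqrt{\gamma(U^\tm)^2+\sum_jP_\infty^\t(\ubar_j)}$ and keep half of it to absorb the leftover $(x^\t-x^\tm)^2(U^\tm)^2$ piece. That piece is compared using $\eta_i^\t(U^\tm)^2\le\alpha_i U^\tm/\sqrt\gamma$, together with the fact that the $\gamma(U^\tm)^2$ factor in the negative term dominates when $\alpha_i$ is small relative to $\sqrt{\gamma}$, reindexing the sum from $t-1$ to $t$. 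This absorption is the main obstacle, since the constants and the $\gamma$ and $\beta_i$ choices must make it work. If it fails, the fallback is to route the piece through \Cref{lem:stationary_stab} and use $\eta_i^\t\le\beta_i/B^\t$ with $B^\t\ge U^\tm/2$ on $\calI$.
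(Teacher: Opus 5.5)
Up to the last step your outline is the paper's proof. It uses the same clipped-plus-bias decomposition; your geometric sum over $\calJ'$ is interchangeable with the paper's telescoping $\sum_t (B^{t+1}-B^{t})\le B^{T+1}$. It then applies the log-barrier RVU bound on $\calI$ with stability from \Cref{lem:diff_local_oftrl}, uses the same variance split, and invokes \Cref{lem:stationary_stab} for the negative term.

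The gap is in absorbing the leftover $8\sum_{t\in\calI}\eta_i^{t}(U^{t-1})^2\nrm{x_i^{t}-x_i^{t-1}}_1^2$, and there your primary mechanism fails. Cancelling it against half of the negative term, $\frac{1}{2^{11}m_i\eta_i^{t}}\nrm{x_i^{t+1}-x_i^{t}}_1^2$, requires (index shift aside) $\eta_i^{t}U^{t-1}\le 2^{-7}m_i^{-1/2}$. The $\alpha_i$-branch of the learning rate only gives $\eta_i^{t}U^{t-1}\le\alpha_i/\sqrt{\gamma}=m_i\sqrt{\log T/(8n)}$. Equivalently, your comparison needs $2^{14}m_i\alpha_i^2\le\gamma$, i.e.\ $2^{14}m_i^3\log T\le 8n$, which is false for the algorithm's parameters. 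So the $\gamma(U^{t-1})^2$ floor cannot do this job; its role is only in the variance step.

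Your fallback is the actual mechanism and has to carry the proof. It needs an explicit split, which your plan does not yet contain:
\begin{itemize}
\item Split $\frac{1}{2^{10}m_i\eta_i^{t}}\nrm{x_i^{t+1}-x_i^{t}}_1^2$ into two halves.
\item Lower bound one half via $1/\eta_i^{t}\ge\alpha_i^{-1}\sqrt{\gamma(U^{t-1})^2+\sum_{j}P_\infty^{t}(\ubar_j)}$. This is the term retained in the statement, which is why its coefficient is $2^{-11}$.
\item Lower bound the other half via $1/\eta_i^{t}\ge B^{t}/\beta_i$.
\end{itemize}
After shifting the index, the leftover at $t\in\calI$ is cancelled provided $8\eta_i^{t}(U^{t-1})^2\le B^{t-1}/(2^{11}m_i\beta_i)$. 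Using $\eta_i^{t}\le\beta_i/B^{t}$ and $B^{t-1}=B^{t}\ge U^{t-1}/2$ on $\calI$, this reduces to $2^{16}m_i\beta_i^2\le 1$, which is exactly the choice $\beta_i=1/(256\sqrt{m_i})$.

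You wrote the right inequality, the \emph{lower} bound $B^{t}\ge U^{t-1}/2$. It follows from $2B^{s+1}\ge\max_j\nrm{u_j^{s}}_\infty$ and monotonicity of $B$. \Cref{lem:BU_relation} gives the opposite direction, which does not suffice here.

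The reindexing also silently assumes that $t-1\in\calI$ whenever $t\in\calI$. This fails when $t-2\in\calJ'$, and the paper's reindexing passes over this point too. Such rounds occur at most once per jump. At them, bound the leftover directly by $8\eta_i^{t}(U^{t-1})^2\cdot 4\le 128\beta_i B^{t}$. These post-jump values $B^{t}$ at least double each time, so this costs only $O(\beta_i B^{T+1})=O(\Umax)$ in the additive constant.
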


Now we are ready to provide the proof of \Cref{thm:indiv_swapreg}.

\begin{proof}
The regret in each base learner $a \in \calA_i$ can be decomposed as
\begin{align}
  {\Reg}_{i,a}^T(y)
  =
  \sumT \inpr{y - y_{i,a}^\t, u_{i,a}^\t}
  &=
  \sumT \inpr{y - y_{i,a}^\t, u_{i,a}^\t - \ubar_{i,a}^\t}
  +
  \sumT \inpr{y - y_{i,a}^\t, \ubar_{i,a}^\t}
  \nn
  &=
  \sumT \inpr{y - y_{i,a}^\t, \ubar_{i,a}^\t}
  +
  \sumT \zeta_{i,a}^\t
  \per
  \label{eq:swap_decompose_clip}
\end{align}
where we defined 
\begin{equation}
  \zeta_{i,a}^\t
  \coloneqq
  \inpr{y - y_{i,a}^\t, u_{i,a}^\t - \ubar_{i,a}^\t}
  \per
  \n
\end{equation}

From the definition of $\prn{y_{i,a}^\t}_{t=1}^T$ and \Cref{lem:oftrl_logbarrier} with $\sum_{a \in \calA} \nrm{y_{i,a}^\tp - y_{i,a}^\t}_{y_{i,a}^\t, \phi} \leq 1/2$ for $t \in \calI$ in \Cref{lem:diff_local_oftrl},
for any $y \in \Delta(\calA_i)$ the first term in \cref{eq:swap_decompose_clip} is upper bounded as
\begin{align}
  \sumT \inpr{y - y_{i,a}^\t, \ubar_{i,a}^\t}
  &\leq 
  \frac{m_i \log T}{\eta_{i}^{T+1}}
  +
  4 \sum_{t \in \calI} \eta_{i}^\t \nrm{\ubar_{i,a}^\t - \ubar_{i,a}^\tm}_{*,y_{i,a}^\t,\phi}^2
  \nn
  &\qquad-
  \sum_{t \in \calI} \frac{1}{16 \eta_{i}^\t} \nrm{y_{i,a}^\tp - y_{i,a}^\t}_{y_{i,a}^\t, \phi}^2
  +
  2 \sum_{t \not\in \calI} \nrm{h^\t - m^\t}_\infty
  +
  6 \Umax
  \nn
  &\leq 
  \frac{m_i \log T}{\eta_{i}^{T+1}}
  +
  4 \sum_{t \in \calI} \eta_{i}^\t \nrm{\ubar_{i,a}^\t - \ubar_{i,a}^\tm}_{*,y_{i,a}^\t,\phi}^2
  \nn
  &\qquad-
  \sum_{t \in \calI} \frac{1}{16 \eta_{i}^\t} \nrm{y_{i,a}^\tp - y_{i,a}^\t}_{y_{i,a}^\t, \phi}^2
  +
  38 \Umax
  \per
  \label{eq:Regia_upper_pre}
\end{align}
where the last inequality holds since from the first statement of \Cref{lem:ubar_sum_jump_rounds}, we have
\begin{equation}
  2 \sum_{t \not\in \calI} \nrm{\ubar_{i,a}^\t - \ubar_{i,a}^\tm}_\infty
  =
  2 \sum_{t \in \calJ} \nrm{\ubar_{i,a}^\t - \ubar_{i,a}^\tm}_\infty
  \leq
  2 \sum_{t \in \calJ} \prn*{ \nrm{\ubar_{i,a}^\t}_\infty + \nrm{\ubar_{i,a}^\tm}_\infty }
  \leq
  32 \Umax
  \per
  \n
\end{equation}
Combining \cref{eq:swap_decompose_clip} with \cref{eq:reg_bias,eq:Regia_upper_pre}, for any $y \in \Delta_{m_i}$ we obtain
\begin{align}
  {\Reg}_{i,a}^T(y)
  &\leq 
  \frac{m_i \log T}{\eta_{i}^{T+1}}
  +
  4 \sum_{t \in \calI} \eta_{i}^\t \nrm{\ubar_{i,a}^\t - \ubar_{i,a}^\tm}_{*,y_{i,a}^\t,\phi}^2
  \nn
  &\qquad-
  \sum_{t \in \calI} \frac{1}{16 \eta_{i}^\t} \nrm{y_{i,a}^\tp - y_{i,a}^\t}_{y_{i,a}^\t, \phi}^2
  +
  38 \Umax
  +
  \sumT \zeta_{i,a}^\t
  \per
  \label{eq:Regia_upper}
\end{align}
Hence, using the reduction from swap regret minimization to the instances of external regret minimization discussed in \Cref{subsec:swapreg2reg} and \cref{eq:Regia_upper},
we have 
\begin{align}
  \SwapReg_{x_i}^T
  &=
  \sum_{a \in \calA_i} {\Reg}_{i,a}^T
  \nn
  &\leq
  \frac{m_i^2 \log T}{\eta_{i}^{T+1}}
  +
  4 \sum_{t\in\calI} \eta_{i}^\t \sum_{a \in \calA_i} \nrm{u_{i,a}^\t - u_{i,a}^\tm}_{*,y_{i,a}^\t,\phi}^2
  \nn
  &\qquad-
  \sum_{t \in \calI} \frac{1}{16 \eta_{i}^\t} \sum_{a \in \calA_i} \nrm{y_{i,a}^\tp - y_{i,a}^\t}_{y_{i,a}^\t, \phi}^2
  +
  38 \Umax m_i
  +
  \sumT \sum_{a \in \calA_i} \zeta_{i,a}^\t
  \nn
  &\leq
  \frac{m_i^2 \log T}{\eta_{i}^{T+1}}
  +
  4 \sum_{t\in\calI} \eta_{i}^\t \sum_{a \in \calA_i} \nrm{u_{i,a}^\t - u_{i,a}^\tm}_{*,y_{i,a}^\t,\phi}^2
  \nn
  &\qquad-
  \sum_{t \in \calI} \frac{1}{16 \eta_{i}^\t} \sum_{a \in \calA_i} \nrm{y_{i,a}^\tp - y_{i,a}^\t}_{y_{i,a}^\t, \phi}^2
  +
  \Umax (38 m_i + 8)
  \com
  \label{eq:swapreg_adv_pre}
\end{align}
Here, the last inequality follows since for any $y \in \Delta(\calA_i)$, we have
\begin{align}
  \sum_{a \in \calA_i} \sumT \zeta_{i,a}^\t
  &\leq
  2 \sum_{a \in \calA_i} \sumT \nrm{u_{i,a}^\t - \ubar_{i,a}^\t}_\infty
  \tag{H\"{o}lder}
  \nn
  &=
  2 \sum_{a \in \calA_i} \sumT \prn*{1 - \frac{B^\t}{B^\tp}} \nrm{u_{i,a}^\t}_\infty
  \tag{def.~of $\ubar_{i,a}^\t$ in \Cref{eq:clipped_grad}}
  \nn
  &\leq
  4 \sum_{a \in \calA_i} \sumT \prn*{1 - \frac{B^\t}{B^\tp}} x^\t_i(a) B^\tp
  \nn
  &=
  4 \sumT \prn*{1 - \frac{B^\t}{B^\tp}} B^\tp
  \tag{$x_i^\t \in \Delta_{m_i}$}
  \nn
  &\leq
  4 B^{T+1} \tag{telescoping}
  \nn
  &\leq 
  8 U^\T
  \com
  \label{eq:reg_bias}
\end{align}
where the second inequality follows from $\nrm{u_{i,a}^\t}_\infty = x^\t_i(a) \nrm{u_{i}^\t}_\infty \leq 2 x^\t_i(a) B^\tp$ (see the proof of \Cref{lem:clip_grad_upper_onestep_back}) and the last inequality follows from \Cref{lem:BU_relation}.

We first evaluate the second term in \cref{eq:swapreg_adv_pre}.
For each $t \in \calI$, we have
\begin{align}
  &
  \sum_{a \in \calA_i}
  \nrm{\ubar_{i,a}^\t - \ubar_{i,a}^\tm}_{*,y_{i,a}^\t,\phi}^2
  =
  \sum_{a \in \calA_i}
  \nrm{u_{i,a}^\t - u_{i,a}^\tm}_{*,y_{i,a}^\t,\phi}^2
  =
  \sum_{a \in \calA_i}
  \nrm{u_i^\t x_i^\t(a) - u_i^\tm x_i^\tm(a)}_{*,y_{i,a}^\t,\phi}^2
  \nn 
  &\leq 
  2 \sum_{a \in \calA_i}
  \nrm{
    u_i^\t x_i^\t(a) 
    - 
    u_i^\tm x_i^\t(a)
  }_{*,y_{i,a}^\t,\phi}^2
  +
  2 \sum_{a \in \calA_i}
  \nrm{
    u_i^\tm x_i^\t(a)
    -
    u_i^\tm x_i^\tm(a)
  }_{*,y_{i,a}^\t,\phi}^2
  \nn 
  &=
  2 \sum_{a \in \calA_i}
  \prn{x_i^\t(a)}^2
  \nrm{
    u_i^\t
    - 
    u_i^\tm 
  }_{*,y_{i,a}^\t,\phi}^2
  +
  2 \sum_{a \in \calA_i}
  \prn{x_i^\t(a) - x_i^\tm(a)}^2
  \nrm{
    u_i^\tm 
  }_{*,y_{i,a}^\t,\phi}^2
  \nn 
  &\leq 
  2 
  \nrm{
    u_i^\t
    - 
    u_i^\tm 
  }_\infty^2 
  +
  2 (U^\tm)^2 \sum_{a \in \calA_i}
  \prn{x_i^\t(a) - x_i^\tm(a)}^2
  \nn 
  &=
  2 
  \nrm{ u_i^\t - u_i^\tm }_\infty^2
  +
  2 (U^\tm)^2 \nrm{ x_i^\t - x_i^\tm }_1^2
  \com
  \label{eq:hat_swapreg_bound}
\end{align}
where in the first equality we used $\ubar_{i,a}^\t = u_{i,a}^\t$ and $\ubar_{i,a}^\tm = u_{i,a}^\tm$ for $t \in \calI$ in \Cref{lem:doubling_clipping_prop}, and in the last inequality we used $\nrm{\cdot}_2 \leq \nrm{\cdot}_1$.

We next evaluate the third term in \cref{eq:swapreg_adv_pre}.
From \Cref{lem:stationary_stab}, the negative term in \cref{eq:swapreg_adv_pre} is evaluated as
\begin{align}
  &
  \sum_{t \in \calI} \frac{1}{16 \eta_{i}^\t} 
  \sum_{a \in \calA_i} \nrm{y_{i,a}^\tp - y_{i,a}^\t}_{y_{i,a}^\t, \phi}^2  
  \geq
  \sum_{t \in \calI} \frac{1}{2^{10} m_i \eta_{i}^\t} 
  \nrm{x_i^\tp - x_i^\t}_1^2
  \nn
  &=
  \sum_{t \in \calI} \frac{1}{2^{11} m_i \eta_{i}^\t} 
  \nrm{x_i^\tp - x_i^\t}_1^2
  +
  \sum_{t \in \calI} \frac{1}{2^{11} m_i \eta_{i}^\t} 
  \nrm{x_i^\tp - x_i^\t}_1^2
  \nn
  &
  \geq
  \frac{1}{2^{11} m_i \alpha_i}
  \sum_{t \in \calI}
  \sqrt{ \gamma (U^\tm)^2 + \sum_{j \in [n]} P_\infty^\t(\ubar_{j}) }
  \nrm{x_i^\tp - x_i^\t}_1^2
  +
  \frac{1}{2^{11} m_i \beta_i}
  \sum_{t \in \calI} B^\t
  \nrm{x_i^\tp - x_i^\t}_1^2
  \com
  \label{eq:negterm_lower}
\end{align}
where the first inequality follows from
$
\nrm{x_i^\t - x_i^\tm}_1^2 
\leq 
64 m_i \sum_{a \in \calA_i}
\nrm{ y_{i,a}^\t - y_{i,a}^\tm }_{y_{i,a}^\tm, \phi}^2
$
for $t \in \calI$ in \Cref{lem:stationary_stab},
and the second inequality follows from 
$
\eta_{i}^\t \leq 
\frac{\alpha_i}{\sqrt{\gamma (U^\tm)^2 + \sum_{j \in [n]} P_\infty^\t(\ubar_{j})}}
$
and
$\eta_{i}^\t \leq \beta_i / B^\t$
in~\cref{eq:oftrl_multiplayer}.

Combining \cref{eq:swapreg_adv_pre} with \cref{eq:hat_swapreg_bound,eq:negterm_lower}, we obtain
\begin{align}
  &
  \SwapReg_{x_i}^T
  \nn
  &
  \leq
  \frac{m_i^2 \log T}{\eta_{i}^{T+1}}
  +
  8 \sum_{t \in \calI} \eta_{i}^\t \nrm{ u_i^\t - u_i^\tm }_\infty^2
  +
  8 \sumT \eta_{i}^\t (U^\tm)^2 \nrm{ x_i^\t - x_i^\tm }_1^2
  \nn
  &\qquad
  -
  \prn*{
    \frac{1}{2^{11} m_i \alpha_i}
    \sum_{t \in \calI}
    \sqrt{ \gamma (U^\tm)^2 + \sum_{j \in [n]} P_\infty^\t(\ubar_{j}) }
    \nrm{x_i^\tp - x_i^\t}_1^2
    +
    \frac{1}{2^{11} \beta_i m_i}
    \sum_{t \in \calI} B^\t
    \nrm{x_i^\tp - x_i^\t}_1^2
  }
  \nn
  &\qquad
  +
  \Umax (38 m_i + 8)
  \nn
  &
  \leq
  \frac{m_i^2 \log T}{\eta_{i}^{T+1}}
  +
  8 \sum_{t \in \calI} \eta_{i}^\t \nrm{ u_i^\t - u_i^\tm }_\infty^2
  -
  \frac{1}{2^{11} m_i \alpha_i}
  \sum_{t \in \calI}
  \sqrt{ \gamma (U^\tm)^2 + \sum_{j \in [n]} P_\infty^\t(\ubar_{j}) }
  \nrm{x_i^\tp - x_i^\t}_1^2
  \nn
  &\qquad
  +
  \Umax (38 m_i + 8)
  \per
  \label{eq:swap_upper_mid}
\end{align}
Here, in the last inequality we used 
\begin{align}
  &
  8 \sum_{t \in \calI} \eta_{i}^\t (U^\tm)^2 \nrm{ x_i^\t - x_i^\tm }_1^2
  -
  \frac{1}{2^{11} \beta_i m_i}
  \sum_{t \in \calI} B^\t
  \nrm{x_i^\tp - x_i^\t}_1^2
  \nn
  &\leq
  8 \eta_{i}^{1} (U^{0})^2 \nrm{ x_i^{1} - x_i^{0} }_1^2
  +
  \sum_{t \in \calI \setminus \set{1}}
  \prn*{
    8 \eta_{i}^\t (U^\tm)^2 
    -
    \frac{B^\tm}{2^{11} \beta_i m_i}
  }
  \nrm{x_i^\t - x_i^\tm}_1^2
  \leq
  0
  \com
  \n
\end{align}
where the last inequality holds since $\eta_i^\t \leq \beta_i / B^\t$ with $\beta_i = 1 / (256 \sqrt{m_i})$ and the fact that $B^\tm = B^\t \leq 2 U^\tm$ for $t \in \calI$ in \Cref{lem:BU_relation}.

The first two terms in \cref{eq:swap_upper_mid} are further upper bounded as
\begin{align}
  &
  \frac{m_i^2 \log T}{\eta_{i}^{T+1}}
  +
  8 \sum_{t \in \calI} \eta_{i}^\t \nrm{ u_i^\t - u_i^\tm }_\infty^2
  \nn
  &
  \leq
  \frac{U^\T m_i^2 \log T}{\beta_i}
  +
  \frac{m_i^2 \log T}{\alpha_i}
  \sqrt{ \gamma (U^\T)^2 + \sum_{j \in [n]} P_\infty^T(\ubar_j)}
  +
  8 \alpha_i \sum_{t \in \calI} \frac{\nrm{ u_i^\t - u_i^\tm }_\infty^2}{ \sqrt{ \gamma (U^\tm)^2 + \sum_{j \in [n]} P_\infty^\t(\ubar_j)} }
  \nn
  &
  \leq
  \frac{U^\T m_i^2 \log T}{\beta_i}
  +
  \frac{m_i^2 \log T}{\alpha_i}
  \sqrt{ \gamma (U^\T)^2 + \sum_{j \in [n]} P_\infty^T(\ubar_j)}
  +
  8 \alpha_i \sum_{t \in \calI} \frac{\nrm{ u_i^\t - u_i^\tm }_\infty^2}{ \sqrt{ P_\infty^\tp(\ubar_i)} }
  \label{eq:swap_two_terms_pre}
  \\
  &
  \leq
  \frac{U^\T m_i^2 \log T}{\beta_i}
  +
  \frac{m_i^2 \log T}{\alpha_i}
  \sqrt{ \gamma (U^\T)^2 + \sum_{j \in [n]} P_\infty^T(\ubar_j)}
  +
  8 \alpha_i \sumT \frac{\nrm{ \ubar_i^\t - \ubar_i^\tm }_\infty^2}{ \sqrt{ P_\infty^\tp(\ubar_i)} }
  \nn
  &
  \leq
  \frac{U^\T m_i^2 \log T}{\beta_i}
  +
  \frac{m_i^2 \log T}{\alpha_i}
  \sqrt{ \gamma (U^\T)^2 + \sum_{j \in [n]} P_\infty^T(\ubar_j)}
  +
  16 \alpha_i \sqrt{P_\infty^{T+1}(\ubar_i)}
  \per
  \label{eq:swap_two_terms}
\end{align}
Here we used the following facts:
the first inequality follows from the definition of $\eta_i^\t$;
the second inequality (the inequality~\cref{eq:swap_two_terms_pre}) follows from 
\begin{equation}\label{eq:stab_gammaU_lower}
  \gamma (U^\tm)^2 + \sum_{j \in [n]} P_\infty^\t(\ubar_j)
  \geq
  \sum_{j \in [n]} P_\infty^tp(\ubar_j)
  \geq
  P_\infty^\tp(\ubar_i)
  \com
\end{equation}
where the first inequality is due to
\begin{align}
  \nrm{\ubar_i^\t - \ubar_i^\tm}_\infty^2
  &\leq
  2 \nrm{\ubar_i^\t}_\infty^2 + 2 \nrm{\ubar_i^\t}_\infty^2
  \nn
  &\leq
  4 (B^\tm)^2 + 4 (B^\t)^2
  \tag{by $\nrm{\ubar_i^\t} \leq 2 B^\t$ from \Cref{lem:clip_grad_upper_onestep_back}}
  \nn
  &\leq
  8 (U^\tm)^2
  \tag{by $B^\tm \leq B^\t \leq U^\tm$ from \Cref{lem:BU_relation}}
  \com
\end{align}
combined with the choice of $\gamma = 8 n$;
the third inequality follows from the fact that $\ubar_{i,a}^\t = u_{i,a}^\t$ and $\ubar_{i,a}^\tm = u_{i,a}^\tm$ for $t \in \calI$ in \Cref{lem:doubling_clipping_prop};
and the last inequality from $\sumT z^\t / \sqrt{\sum_{s=1}^t z^\s} \leq 2 \sqrt{\sumT z^\t}$ for $z^{1}, \dots, z^\T \geq 0$.

Finally, combining \cref{eq:swap_upper_mid} with \cref{eq:swap_two_terms}, we obtain
\begin{align}
  &
  \SwapReg_{x_i}^T
  \nn
  &
  \leq
  \frac{U^\T m_i^2 \log T}{\beta_i}
  +
  \frac{m_i^2 \log T}{\alpha_i}
  \sqrt{ \gamma (U^\T)^2 + \sum_{j \in [n]} P_\infty^{T+1}(\ubar_j)}
  +
  16 \alpha_i \sqrt{P_\infty^{T+1}(\ubar_i)}
  \nn
  &\qquad
  -
  \frac{1}{2^{11} m_i \alpha_i}
  \sum_{t \in \calI}
  \sqrt{ \gamma (U^\tm)^2 + \sum_{j \in [n]} P_\infty^\t(\ubar_{j}) }
  \nrm{x_i^\tp - x_i^\t}_1^2
  +
  \Umax (38 m_i + 8)
  \com
  \n
\end{align}
which is the desired bound.
\end{proof}

\subsection{Proof of \Cref{thm:indiv_swapreg}}\label{app:proof_swap_subsec}
Here we provide the proof of \Cref{thm:indiv_swapreg}.
Before proving \Cref{thm:indiv_swapreg}, we prepare the following well-known lemma, which relates the squared difference of the underlying utility, $P_\infty^T(u_i)$, and the squared differences of strategies.
\begin{lemma}\label[lemma]{lem:Pu2Px}
It holds that
\begin{equation}
  \sum_{i \in [n]}
  \nrm{u_i^\t - u_i^\tm}_\infty^2
  \leq
  \Umax^2 (n - 1)^2 \sum_{i \in [n]} \nrm{x_i^\t - x_i^\tm}_1^2
  \per
  \n
\end{equation}
\end{lemma}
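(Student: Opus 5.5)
We prove the per-player bound $\nrm{u_i^\t - u_i^\tm}_\infty \leq \Umax \sum_{j \neq i} \nrm{x_j^\t - x_j^\tm}_1$, then square it and sum over $i$.

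\textbf{Step 1: hybrid argument.} Fix $i$ and an action $a_i \in \calA_i$. Since $u_i^\t(a_i) = \E_{a_{-i} \sim x_{-i}^\t}[u_i(a_i, a_{-i})]$ is multilinear in $(x_j)_{j \neq i}$, we interpolate one coordinate at a time. Order the other players as $j_1, \dots, j_{n-1}$, and let $\sigma^{(k)}$ be the product distribution that uses $x_{j}^\t$ for $j \in \set{j_1,\dots,j_k}$ and $x_j^\tm$ for the remaining players. Consecutive hybrids $\sigma^{(k)}$ and $\sigma^{(k-1)}$ differ only in the coordinate of player $j_k$. Conditioning on the other coordinates, the difference of expectations is $\inpr{x_{j_k}^\t - x_{j_k}^\tm, v}$, where $v$ is a vector with entries in $[-\Umax,\Umax]$. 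By H\"older's inequality its magnitude is at most $\Umax \nrm{x_{j_k}^\t - x_{j_k}^\tm}_1$. Telescoping over $k$ and taking the maximum over $a_i$ gives
\[
\nrm{u_i^\t - u_i^\tm}_\infty \leq \Umax \sum_{j \neq i} \nrm{x_j^\t - x_j^\tm}_1 .
\]
For $t = 1$ we use the convention $u_i^0 = \zeros$. If $x^0$ is also taken to be $\zeros$, the same multilinear expansion still applies, because $u_i$ is linear in each $x_j$ and the hybrid with $x^0 = \zeros$ is the zero vector. Alternatively, we read the lemma only for $t \geq 2$, which is how it is used later.

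\textbf{Step 2: square and sum.} Cauchy--Schwarz over the $n-1$ terms gives
\[
\nrm{u_i^\t - u_i^\tm}_\infty^2 \leq \Umax^2 (n-1) \sum_{j \neq i} \nrm{x_j^\t - x_j^\tm}_1^2 .
\]
Summing over $i \in [n]$, each $j$ appears in exactly $n-1$ of the inner sums. Hence
\[
\sum_{i \in [n]} \nrm{u_i^\t - u_i^\tm}_\infty^2 \leq \Umax^2 (n-1)^2 \sum_{j \in [n]} \nrm{x_j^\t - x_j^\tm}_1^2 ,
\]
which is exactly the claimed bound.

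\textbf{Main obstacle.} Step 2 is routine, so the only real care is in Step 1. The hybrid bound must use the $\ell_1$ norm on strategies and the $\ell_\infty$ norm on utilities, and we need $\abs{u_i(\cdot)} \leq \Umax$ rather than a range bound. The edge case $t=1$ is the one place where a convention has to be fixed, as described above.
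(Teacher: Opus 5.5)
Your proof is correct and follows the same route as the paper. Both first establish $\nrm{u_i^{t} - u_i^{t-1}}_\infty \leq \Umax \sum_{j \neq i} \nrm{x_j^{t} - x_j^{t-1}}_1$, then apply Cauchy--Schwarz over the $n-1$ terms and sum over $i$, with each $j$ appearing $n-1$ times. The only difference is that the paper gets the per-player bound by citing that the total variation between product distributions is at most the sum of the marginal total variations, whereas your hybrid/telescoping argument proves that fact directly.
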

\begin{proof}
Let $\calA_{-i} = \times_{j \neq i} \calA_j$.
Then,
\begin{align}
  \nrm{u_i^\t - u_i^\tm}_\infty
  &=
  \Umax
  \max_{a_i \in \calA_i}
  \abs*{
    \sum_{a_{-i} \in \calA_{-i}} u_i(a_i, a_{-i}) \prod_{j \neq i} x_j^\t(a_j)
    -
    \sum_{a_{-i} \in \calA_{-i}} u_i(a_i, a_{-i}) \prod_{j \neq i} x_j^\tm(a_j)
  }
  \nn 
  &\leq 
  \Umax
  \sum_{a_{-i} \in \calA_{-i}}
  \abs*{
    \prod_{j \neq i} x_j^\t(a_j)
    -
    \prod_{j \neq i} x_j^\tm(a_j)
  }
  \leq
  \Umax
  \sum_{j \neq i} \nrm{x_j^\t - x_j^\tm}_1
  \com 
  \n
\end{align}
where the first inequality follows from $u_i^\t(a_i, a_{-i}) \in [-\Umax,\Umax] $ and the last inequality follows from the fact that the total variation of two product distributions is bounded by the sum of the total variations of each marginal distribution.
Using this inequality and the Cauchy--Schwarz inequality, we have
\begin{align}
  \nrm{u_i^\t - u_i^\tm}_\infty^2
  &\leq
  \Umax^2 \prn*{ \sum_{j \neq i} \nrm{x_j^\t - x_j^\tm}_1 }^2
  \nn
  &\leq
  \Umax^2 \prn*{ \sqrt{(n - 1) \sum_{j \neq i} \nrm{x_j^\t - x_j^\tm}_1^2 } }^2
  \leq
  \Umax^2 (n - 1) \sum_{j \neq i} \nrm{x_j^\t - x_j^\tm}_1^2
  \per
  \n
\end{align}
This implies
\begin{equation}
  \sum_{i \in [n]}
  \nrm{u_i^\t - u_i^\tm}_\infty^2
  \leq
  \Umax^2 (n - 1) \sum_{i \in [n]} \sum_{j \neq i} \nrm{x_j^\t - x_j^\tm}_1^2
  =
  \Umax^2 (n - 1)^2 \sum_{i \in [n]} \nrm{x_i^\t - x_i^\tm}_1^2
  \com
  \n
\end{equation}
which is the desired bound.
\end{proof}

Finally, we are ready to prove \Cref{thm:indiv_swapreg}.
The analysis uses the stopping-time argument, inspired by the analysis for two-player zero-sum games in \Cref{subsec:two_player_analysis}.

\begin{proof}
Combining \Cref{lem:swapreg_pre_bound} with the fact that $\SwapReg_{x_i}^T \geq 0$, we have
\begin{align}
  &
  \sum_{t \in \calI}
  \sqrt{ \gamma (U^\tm)^2 + \sum_{j \in [n]} P_\infty^\t(\ubar_{j}) }
  \nrm{x_i^\tp - x_i^\t}_1^2
  \nn
  &
  \leq
  2^{11} m_i \alpha_i
  \prn[\Bigg]{
    \frac{U^\T m_i^2 \log T}{\beta_i}
    +
    \frac{m_i^2 \log T}{\alpha_i}
    \sqrt{ \gamma (U^\T)^2 + \sum_{j \in [n]} P_\infty^{T+1}(\ubar_j)}
    \nn
    &\qquad\qquad\qquad\qquad\qquad+
    16 \alpha_i \sqrt{P_\infty^{T+1}(\ubar_i)}
    +
    \Umax (38 m_i + 8)
  }
  \per
  \label{eq:indiv_sb}
\end{align}
We then take the summation over $[n]$ in \cref{eq:indiv_sb}.
For each $t \in [T]$, define 
\begin{equation}
  \mathbb{S}^\t \coloneqq \sum_{j \in [n]} P_\infty^\t(\ubar_j)
  =
  \sum_{j \in [n]} \sum_{s=1}^{t-1} \nrm{\ubar_j^\s - \ubar_j^\sm}_\infty^2
  \per
  \n
\end{equation}
Let $M_1 = \sum_{i \in [n]} m_i$ and $\alpha = \max_{i \in [n]} \alpha_i$ for notational simplicity, and recall $m = \max_{i\in[n]} m_i$.
From the Cauchy--Schwarz inequality, we have
\begin{equation}
  \sum_{i \in [n]}
  16 \alpha_i \sqrt{P_\infty^{T+1}(\ubar_i)}
  \leq
  16 \sqrt{ \sum_{i \in [n]} \alpha_i^2} 
  \sqrt{\sum_{j \in [n]} P_\infty^{T+1}(\ubar_j)}
  =
  16 \sqrt{ \sum_{i \in [n]} \alpha_i^2} \,
  \sqrt{\mathbb{S}^{T+1}}
  \com
  \n
\end{equation}
and thus taking the summation over $[n]$ in \cref{eq:indiv_sb} gives
\begin{align}
  &
  \sum_{i \in [n]} \sum_{t \in \calI}
  \sqrt{ \gamma (U^\tm)^2 + \mathbb{S}^\t }
  \nrm{x_i^\tp - x_i^\t}_1^2
  \nn
  &\leq
  2^{11} m \alpha
  \brk[\Bigg]{
    \sum_{i \in [n]}
    \prn*{\frac{1}{\beta_i} + \frac{2 \sqrt{\gamma}}{\alpha_i}} \Umax m_i^2 \log T
    +
    \prn*{\sum_{i \in [n]} \frac{m_i^2 \log T}{\alpha_i} + 16 \sqrt{\sum_{i \in [n]} \alpha_i^2}} \sqrt{\mathbb{S}^{T+1}}
    \nn
    &\qquad\qquad\qquad+
    \Umax (38 M_1 + 8)
  }
  \nn
  &
  \leq
  2^{11} m \alpha
  \prn{\Gamma + \Lambda \sqrt{\mathbb{S}^{T+1}}}
  \com
  \label{eq:S_upper}
\end{align}
where we defined 
\begin{equation}\label{eq:Gamma_Lambda}
  \Gamma
  \coloneqq 
  \sum_{i \in [n]}
  \prn*{\frac{1}{\beta_i} + \frac{2 \sqrt{\gamma}}{\alpha_i}} \Umax m_i^2 \log T
  + 
  \Umax (38 M_1 + 8)
  \com\
  \Lambda
  \coloneqq
  \sum_{i \in [n]} \frac{m_i^2 \log T}{\alpha_i} + 16 \sqrt{\sum_{i \in [n]} \alpha_i^2}
  \per
\end{equation}

We will lower bound the LHS of \cref{eq:S_upper} using $\mathbb{S}^{T+1}$.
From \Cref{lem:Pu2Px}, we know that
\begin{equation}
  \sum_{i \in [n]}
  \nrm{u_i^\t - u_i^\tm}_\infty^2
  \leq
  \Umax^2 (n - 1)^2 \sum_{i \in [n]} \nrm{x_i^\t - x_i^\tm}_1^2
  \per
  \n
\end{equation}
Using this we can lower bound the LHS of \cref{eq:S_upper} as 
\begin{align}
  \sum_{t \in \calI} \sqrt{ \gamma (U^\tm)^2 + \mathbb{S}^\t }
  \sum_{i \in [n]} \nrm{x_i^\tp - x_i^\t}_1^2
  &\geq
  \frac{1}{\Umax^2 (n - 1)^2}
  \sum_{t \in \calI} \sqrt{ \gamma (U^\tm)^2 + \mathbb{S}^\t }
  \sum_{i \in [n]} \nrm{u_i^\t - u_i^\tm}_\infty^2
  \nn
  &\geq
  \frac{1}{\Umax^2 (n - 1)^2}
  \sum_{t \in \calI} \sqrt{ \mathbb{S}^\tp }
  \sum_{i \in [n]} \nrm{u_i^\t - u_i^\tm}_\infty^2
  \com
  \label{eq:S_lower}
\end{align}
where in the last inequality we used $\gamma (U^\tp)^2 + \sum_{j \in [n]} P_\infty^\t(\ubar_j) \geq \sum_{j \in [n]} P_\infty^\tp(\ubar_j)$ in \cref{eq:stab_gammaU_lower}.
Combining \cref{eq:S_upper,eq:S_lower}, we obtain
\begin{equation}
  \sum_{t \in \calI} \sqrt{ \mathbb{S}^\tp } \sum_{i \in [n]} \nrm{u_i^\t - u_i^\tm}_\infty^2
  \leq
  2^{11} \Umax^2 n^2 m \alpha
  \prn{\Gamma + \Lambda \sqrt{\mathbb{S}^{T+1}}}
  \eqqcolon
  \Gamma' + \Lambda' \sqrt{\mathbb{S}^{T+1}}
  \com
  \label{eq:S_upper_2}
\end{equation}
where we defined 
\begin{equation}\label{eq:GammaP_LambdaP}
  \Gamma'
  \coloneqq 
  2^{11} \Umax^2 n^2 m \alpha \Gamma
  \com
  \quad
  \Lambda'
  \coloneqq 
  2^{11} \Umax^2 n^2 m \alpha
  \Lambda
\end{equation}
for $\Gamma$ and $\Lambda$ in~\cref{eq:Gamma_Lambda}.

We will further lower bound the LHS of \cref{eq:S_upper_2} as follows:
\begin{align}
  \sum_{t \in \calI} \sqrt{ \mathbb{S}^{\tp} } \sum_{i \in [n]} \nrm{u_i^\t - u_i^\tm}_\infty^2
  &
  =
  \sum_{t \in \calI} \sqrt{ \mathbb{S}^{\tp} } \sum_{i \in [n]} \nrm{\ubar_i^\t - \ubar_i^\tm}_\infty^2
  \nn
  &=
  \sumT \sqrt{ \mathbb{S}^\tp } \sum_{i \in [n]} \nrm{\ubar_i^\t - \ubar_i^\tm}_\infty^2
  -
  \sum_{t \in \calJ} \sqrt{ \mathbb{S}^\tp } \sum_{i \in [n]} \nrm{\ubar_i^\t - \ubar_i^\tm}_\infty^2
  \nn
  &\geq
  \sumT \sqrt{ \mathbb{S}^\tp } \sum_{i \in [n]} \nrm{\ubar_i^\t - \ubar_i^\tm}_\infty^2
  -
  \sqrt{ \mathbb{S}^{T+1} } \cdot 64 \Umax^2 n
  \com
  \label{eq:S_lower_2}
\end{align}
where in the first equality we used the fact that $\ubar_{i}^\t = u_{i}^\t$ and $\ubar_{i}^\tm = u_{i}^\tm$ for $t \in \calI$ from \Cref{lem:doubling_clipping_prop}
and the inequality follows from the second statement of \Cref{lem:ubar_sum_jump_rounds}.
Note here the sum $\sum_{t \in \calI}$ is replaced with $\sumT$ in the last inequality.

In what follows, we will derive an upper bound on $\mathbb{S}^{T+1}$ by lower bounding the RHS of~\cref{eq:S_lower_2}, by a stopping-time argument.

\paragraph{Case 1: when $\max_{t \in \calI} \mathbb{S}^\tp \geq \mathbb{S}^{T+1} / 2$}
Define
\begin{equation}
  \tau = \min\set*{t \in \calI : \mathbb{S}^\tp \geq \mathbb{S}^{T+1} / 2}
  \per
\end{equation}
Since we assumed that $\mathbb{S}^\tp \geq \mathbb{S}^{T+1} / 2$ for some $t \in \calI$, such $\tau \in \calI$ always exists.
Using this $\tau$, we can lower bound the first term in the RHS of~\cref{eq:S_lower_2} as
\begin{align}
  \sumT \sqrt{ \mathbb{S}^{\tp} } \sum_{i \in [n]} \nrm{\ubar_i^\t - \ubar_i^\tm}_\infty^2
  &\geq
  \sum_{t=\tau}^T \sqrt{ \mathbb{S}^\tp } \sum_{i \in [n]} \nrm{\ubar_i^\t - \ubar_i^\tm}_\infty^2
  \nn
  &\geq
  \sqrt{ \mathbb{S}^{\tau+1} } \sum_{t=\tau}^T \sum_{i \in [n]} \nrm{\ubar_i^\t - \ubar_i^\tm}_\infty^2
  \nn
  &=
  \sqrt{ \mathbb{S}^{\tau+1} } \prn*{ \mathbb{S}^{T+1} - \mathbb{S}^\tau }
  \nn
  &\geq
  \sqrt{ \frac{\mathbb{S}^{T+1}}{2} } \prn*{ \mathbb{S}^{T+1} - \frac{\mathbb{S}^{T+1}}{2} }
  \nn
  &\geq
  \frac{\prn*{ \mathbb{S}^{T+1} }^{3/2}}{3} 
  \com
\end{align}
where
the second inequality follows since $\prn{\mathbb{S}^\tp}_{t \in [T]}$ is nondecreasing,
the equality follows from 
$\sum_{t=\tau}^T \sum_{i \in [n]} \nrm{\ubar_i^\t - \ubar_i^\tm}_\infty^2 = \mathbb{S}^{T+1} - \mathbb{S}^\tau$, and
in the third inequality we used $\mathbb{S}^{\tau+1} \geq \mathbb{S}^{T+1} / 2$ and $\mathbb{S}^{\tau} \leq \mathbb{S}^{T+1} / 2$, which follow from the definition of $\tau$.
Therefore, continuing from \cref{eq:S_lower_2}, we obtain
\begin{equation}
  \sumT \sqrt{ \mathbb{S}^\tp } \sum_{i \in [n]} \nrm{u_i^\t - u_i^\tm}_\infty^2
  \geq
  \frac{\prn*{ \mathbb{S}^{T+1} }^{3/2}}{3}
  -
  \sqrt{ \mathbb{S}^{T+1} } \cdot 64 \Umax^2 n
  \per
  \n
\end{equation}
Finally, combining \cref{eq:S_upper_2} with the last inequality, we obtain
\begin{equation}
  \frac{\prn*{ \mathbb{S}^{T+1} }^{3/2}}{3} 
  \leq
  \Gamma' + (\Lambda' +  {64 \Umax^2 n}) \sqrt{\mathbb{S}^{T+1}}
  \per
  \n
\end{equation}
Solving this inequation with respect to $\mathbb{S}^{T+1}$, we obtain
\begin{equation}\label{eq:STp_upper_case1}
  \mathbb{S}^{T+1}
  \lesssim
  \max\set*{(\Gamma')^{2/3}, \Lambda' + \Umax^2 n}
  \per
\end{equation}

\paragraph{Case 2: when $\max_{t \in \calI} \mathbb{S}^\tp < \mathbb{S}^{T+1} / 2$}
Since $\prn{\mathbb{S}^\tp}_{t \in [T]}$ is nondecreasing in $t$, we have
\begin{align}\label{eq:STp_upper_case2}
  \mathbb{S}^{T+1}
  &
  \leq
  \max_{t \in \calI} \mathbb{S}^\tp
  + 
  \sum_{i \in [n]} \sum_{t \in \calJ} \nrm{\ubar_i^\t - \ubar_i^\tm}_\infty^2
  \nn
  &
  \leq
  \frac{\mathbb{S}^{T+1}}{2}
  + 
  \sum_{i \in [n]} \sum_{t \in \calJ} \nrm{\ubar_i^\t - \ubar_i^\tm}_\infty^2
  \leq
  \frac{\mathbb{S}^{T+1}}{2}
  +
  64 \Umax^2 n
  \com
\end{align}
where the second inequality follows from the case assumption that $\max_{t \in \calI} \mathbb{S}^\tp < \mathbb{S}^{T+1} / 2$
and
the last inequality follows from the second statement of \Cref{lem:ubar_sum_jump_rounds}.
Solving this inequation with respect to $\mathbb{S}^{T+1}$, we obtain $\mathbb{S}^{T+1} \leq 128 \Umax^2 n$.

Combining the upper bound in \cref{eq:STp_upper_case1} for Case 1 and \cref{eq:STp_upper_case2} for Case 2, we obtain
\begin{equation}\label{eq:ST_upper_final}
  \mathbb{S}^{T+1}
  \lesssim
  \max\set*{(\Gamma')^{2/3}, \Lambda' + \Umax^2 n}
  \per
\end{equation}

Finally, plugging \cref{eq:ST_upper_final} in the individual swap regret upper bound \cref{eq:swap_upper_indiv_p} in \Cref{lem:swapreg_pre_bound}, we obtain
\begin{align}
  &
  \SwapReg_{x_i}^T
  \nn
  &
  \lesssim
  \frac{U^\T m_i^2 \log T}{\beta_i}
  +
  \frac{m_i^2 \log T}{\alpha_i}
  \sqrt{ \gamma (U^\T)^2 + \sum_{j \in [n]} P_\infty^{T+1}(\ubar_j)}
  +
  \alpha_i \sqrt{P_\infty^{T+1}(\ubar_i)}
  +
  \Umax m_i
  \nn
  &
  \lesssim
  \frac{\Umax m_i^2 \log T}{\beta_i}
  +
  \frac{\Umax m_i^2 \sqrt{\gamma} \log T}{\alpha_i}
  +
  \prn*{
    \frac{m_i^2 \log T}{\alpha_i}
    +
    \alpha_i
  }
  \sqrt{\mathbb{S}^{T+1}}
  +
  \Umax m_i
  \nn
  &
  \lesssim
  \frac{\Umax m_i^2 \log T}{\beta_i}
  +
  \frac{\Umax m_i^2 \sqrt{\gamma} \log T}{\alpha_i}
  \nn
  &\qquad+
  \prn*{
    \frac{m_i^2 \log T}{\alpha_i}
    +
    \alpha_i
  }
  \sqrt{\max\set*{(\Gamma')^{2/3}, \Lambda' + \Umax^2 n}}
  +
  \Umax m_i 
  \com
  \n
\end{align}
where the last inequality follows from \cref{eq:ST_upper_final}.
Plugging $\alpha_i = m_i \sqrt{\log T}$, $\beta_i = 1/(256 \sqrt{m_i})$, and $\gamma = 8 n$ in the last inequality (recall that $\Gamma'$ and $\Lambda'$ are defined in~\cref{eq:GammaP_LambdaP}), we obtain the desired swap regret upper bound in \Cref{thm:indiv_swapreg}.
\end{proof}

\subsection{Scale-free and scale-invariant swap regret minimization without fast convergence}\label{subsec:OptHedge_swap}
Here we show that the convergence rates labeled ``online OLO'' in \Cref{table:regret_generalsum} can indeed be achieved using scale-free online learning.
Recall that the swap regret can be expressed as the sum of the external regrets of $m_i$ external regret minimizers, as discussed in \Cref{subsec:multi_player_preliminaries}.
Hence, it suffices to construct an external regret minimization algorithm for each expert $a$ of each player $i$.

Let $M_i = \max\set{4, \log m_i / 2^{3/2}}$ for each $i \in [n]$.
Then, for each expert $a \in \calA_i$ of player $i \in [n]$, we use the optimistic Hedge algorithm to compute $y_{i,a}^\t \in \Delta_{m_i}$ by
\begin{equation}
  x_i^\t(b) 
  \propto 
  \exp\prn*{ \eta_{i,a}^\t \sum_{s=1}^{t-1} u_{i,a}^\s(b) }
  \com
  \quad
  \eta_{i,a}^\t
  =
  \sqrt{\frac{M_i}{P_\infty^\t(u_{i,a})}}
  \label{eq:OptHedge_swap}
\end{equation}
for $b \in [m_i]$, where we note that $x_{i,a}^1 = \frac{1}{m_i} \ones$ an recall that $P_q^t(z) = \sum_{s=1}^{t-1} \nrm{z^\s - z^\sm}_q^2$.
If the denominator of the learning rate $\eta_{i,a}^\t$ is zero, we set $\eta_{i,a}^\t = \infty$.

\begin{proposition}\label[proposition]{prop:OptHedge_swap}
  The learning dynamics based on \cref{eq:OptHedge_swap} guarantees
  \begin{equation}
    \SwapReg_{x_i}^T = O(\Umax \sqrt{T m_i \log m_i})
    \n
  \end{equation}
  for every player $i \in [n]$.
\end{proposition}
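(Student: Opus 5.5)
The plan is to reduce via \citet{blum07external}, as recalled in \Cref{subsec:swapreg2reg}. This gives $\SwapReg_{x_i}^T(M) = \sum_{a \in \calA_i} \Reg_{i,a}^T(M(a,\cdot))$, so it suffices to bound each expert's external regret. Each expert $a$ runs an AdaHedge-type Hedge on the utility sequence $u_{i,a}^\t = x_i^\t(a) u_i^\t$. I read \cref{eq:OptHedge_swap} as (optimistic) Hedge with learning rate $\eta_{i,a}^\t = \sqrt{M_i/P_\infty^\t(u_{i,a})}$, and I treat $m^\t = u_{i,a}^\tm$ as the optimistic prediction. If the displayed update is non-optimistic, the same argument applies with $m^\t=\zeros$ and $\nrm{u_{i,a}^\t}_\infty^2$ in place of $\nrm{u_{i,a}^\t-u_{i,a}^\tm}_\infty^2$.

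Apply \Cref{lem:adahedge_opt_neg_mainbody} with $h^\t = -u_{i,a}^\t$, $\nu^\t = 0$, $d = m_i$ and $D = M_i$, and drop the nonpositive negative term. This gives
\[
\Reg_{i,a}^T \leq \sqrt{32\, M_i \sumT \nrm{u_{i,a}^\t - u_{i,a}^\tm}_\infty^2 }.
\]
The algorithm is scale-invariant because $\eta_{i,a}^\t$ scales as $1/c$ when the utilities are multiplied by $c$, so the product $\eta_{i,a}^\t\sum_s u_{i,a}^\s$ is unchanged. It is scale-free because nothing about $\Umax$ is used.

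Next, bound each squared difference crudely by
\[
\nrm{u_{i,a}^\t - u_{i,a}^\tm}_\infty^2 \le 2\nrm{u_{i,a}^\t}_\infty^2 + 2\nrm{u_{i,a}^\tm}_\infty^2 \le 2\Umax^2\big(x_i^\t(a)^2 + x_i^\tm(a)^2\big),
\]
and use $x(a)^2 \le x(a)$. Summing over $a$ with Cauchy--Schwarz then gives
\[
\SwapReg_{x_i}^T \le \sum_{a} \sqrt{64 M_i \Umax^2 \textstyle\sum_t (x_i^\t(a)+x_i^\tm(a))} \le \sqrt{m_i}\,\sqrt{64 M_i \Umax^2 \cdot 2T} = O(\Umax\sqrt{T m_i \log m_i}).
\]
Here I used $\sum_a\sum_t (x_i^\t(a)+x_i^\tm(a)) \le 2T$ and $M_i = O(\log m_i)$.

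The only delicate step is this Cauchy--Schwarz aggregation. Bounding each expert by $\Umax\sqrt{T}$ separately would lose a factor of $\sqrt{m_i}$. The weighting $x_i^\t(a)$, summing to one over $a$, is exactly what recovers $\sqrt{m_i T}$ rather than $m_i\sqrt T$.
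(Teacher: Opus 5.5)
Your proof is correct and follows the paper's argument: the Blum--Mansour reduction, \Cref{lem:adahedge_opt_neg_mainbody} applied to each expert with the negative term dropped, $\prn{x_i^\t(a)}^2 \le x_i^\t(a)$, and Cauchy--Schwarz over $a \in \calA_i$ using $\sum_a x_i^\t(a) = 1$. The only difference is that the paper invokes the lemma with $m^\t = \zeros$ and bounds $\nrm{u_{i,a}^\t}_\infty^2$ directly. You instead keep the optimistic prediction $m^\t = u_{i,a}^{\tm}$, which is the choice that actually matches the path-length learning rate $P_\infty^\t(u_{i,a})$ in \cref{eq:OptHedge_swap}, and you pay only a harmless factor of $2$ via $\nrm{u_{i,a}^\t - u_{i,a}^\tm}_\infty^2 \le 2\nrm{u_{i,a}^\t}_\infty^2 + 2\nrm{u_{i,a}^\tm}_\infty^2$.
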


\begin{proof}
From \Cref{lem:adahedge_opt_neg_mainbody} with $h^\t = u_{i,a}^\t$, $m^\t = \zeros$, $\nu^\t = 0$, 
for any $y \in \Delta_{m_i}$ the regret of expert $a \in \calA_i$ for player $i$ is upper bounded as
\begin{equation}
  {\Reg}_{i,a}^T(y)
  =
  \sumT \inpr{y - y_{i,a}^\t, u_{i,a}^\t}
  \leq
  \sqrt{ 32 \sumT \nrm{u_{i,a}^\t}_\infty^2 M_i}
  \leq
  \sqrt{ 32 \sumT x_i^\t(a) \nrm{u_{i}^\t}_\infty^2 M_i}
  \com
  \label{eq:Regia_upper_sqrtT}
\end{equation}
where we used $u_{i,a}^\t = x_i^\t(a) u_i^\t$.
Hence, using the reduction from swap regret minimization to the instances of external regret minimization discussed in \Cref{subsec:swapreg2reg} and \cref{eq:Regia_upper_sqrtT},
we have 
\begin{align}
  \SwapReg_{x_i}^T
  &=
  \sum_{a \in \calA_i} {\Reg}_{i,a}^T
  \leq
  \sum_{a \in \calA_i}
  \sqrt{ 32 \sumT x_i^\t(a) \nrm{u_{i}^\t}_\infty^2 M_i}
  \nn
  &\leq
  \sqrt{ 32 m_i \sumT \sum_{a \in \calA_i} x_i^\t(a) \nrm{u_{i}^\t}_\infty^2 M_i}
  \nn
  &=
  \Umax \sqrt{ 32 m_i M_i T}
  =
  O(\Umax \sqrt{T m_i \log m_i})
  \n
\end{align}
where the second inequality follows from the Cauchy--Schwarz inequality and the last inequality from $\sum_{a \in \calA_i} x_i^\t(a) = 1$ and $\nrm{u_i^\t}_\infty \leq \Umax$.
This completes the proof.
\end{proof}

\end{document}